\newcommand{\I}{\ensuremath{\mathbb{I}}}
\newtheorem{theorem}{Theorem}
\newtheorem{lemma}[theorem]{Lemma}
\newtheorem{example}{Example}
\newtheorem{assumption}{Assumption}
\newcommand{\eps}{{\epsilon}}
\newcommand{\E}{{\mathbb E}}
\newcommand{\M}{{\mathbb M}}
\newcommand{\R}{{\mathbb R}}
\newcommand{\N}{{\mathbb N}}
\renewcommand{\P}{{\mathbb P}}
\newcommand{\G}{{\mathbb{G}}}
\newcommand{\F}{{\mathcal F}}
\newcommand{\X}{{\mathcal{X}}}
\newcounter{rcnt}[section]
\def\argmin{\mathop{\rm argmin}}
\begin{document}
\onehalfspacing
\title{Parameter Estimation and Inference in a Continuous Piecewise Linear Regression Model}
\author{Georg Hahn, Moulinath Banerjee, and Bodhisattva Sen}
\date{Department of Statistics, Columbia University, New York, NY 10027, USA}
\maketitle

\begin{abstract}
The estimation of regression parameters in one dimensional broken stick models
is a research area of statistics with an extensive literature.
We are interested in extending such models by aiming to recover
two or more intersecting (hyper)planes in multiple dimensions.
In contrast to approaches aiming to recover a given number of piecewise linear components
using either a grid search or local smoothing around the change points,
we show how to use Nesterov smoothing to obtain a smooth and everywhere
differentiable approximation to a piecewise linear regression model with a uniform error bound.
The parameters of the smoothed approximation are then efficiently found by minimizing
a least squares objective function using a quasi-Newton algorithm.
Our main contribution is threefold:
We show that the estimates of the Nesterov smoothed approximation
of the broken plane model are also $\sqrt{n}$ consistent and asymptotically normal,
where $n$ is the number of data points on the two planes.
Moreover, we show that as the degree of smoothing goes to zero, the smoothed estimates
converge to the unsmoothed estimates and present an algorithm to perform parameter estimation.
We conclude by presenting simulation results on simulated data together with some guidance
on suitable parameter choices for practical applications.
\end{abstract}

\textit{Keywords:}
broken stick model, broken plane model, parameter estimation,
smooth approximation, Nesterov smoothing, quasi-Newton algorithm

\section{Introduction}
\label{section_introduction}
We are interested in parameter estimation and inference in a regression model of the type
\begin{equation}\label{eq:RegGen}
Y = g_\theta(X) + \epsilon,
\end{equation}
where $Y$ is the response variable, $X \in \X \subset \R^d$ ($d \ge 1$), and $g_\theta:\R^d \to \R$ is {\it continuous $k$-piecewise affine} ($k$-PWA; $k \ge 2$) --- there exists a partition of $\X$ into polyhedral sets $\{ C_i \}_{i=1}^k$ (i.e., $C_i \subset \X$, $C_i \cap C_j = \emptyset$ for all $i \neq j$) such that
\begin{equation}\label{eq:RegMdl}
	g_\theta(x) = a_i^\top x + b_i \qquad \mbox{if }\; x \in C_i,
\end{equation}
and that $g$ is continuous on $\X$; see e.g.,~\citet{Scholtes2012}. We assume that $k \in \N$ is given and denote the unknown parameter as $\theta = (a_1,\ldots, a_k,b_1,\ldots, b_k) \in \R^{k(d+1)}$. The unobserved error $\epsilon$ is assumed to have zero mean and finite variance.

Given i.i.d.~data $\{(X_i,Y_i)\}_{i=1}^n$ from the above model the goal is to estimate the unknown parameter $\theta$ and develop valid inferential procedures for the obtained estimator.  A naive approach to solving the above parametric regression problem is to consider the least squares estimator (LSE): 
\begin{equation}\label{eq:LSE-Naive}
\tilde \theta = \argmin_{\theta \in \R^{k(d+1)}} \sum_{i=1}^n (Y_i - g_\theta(X_i))^2,
\end{equation}
where the minimization is over all continuous PWA $g_\theta$. However, the above estimator is computationally intractable --- the optimization problem is non-smooth and non-convex; as noted in~\citet[Chapter 5]{Polyak1987}, virtually no computational guarantees are available for such problems. Such non-smooth functions can only be optimized using gradient-free or subgradient methods which typically attain a square root convergence rate as opposed to the superlinear convergence rate of the quasi-Newton method (under suitable conditions); see e.g.,~\cite{Shor1985} and~\cite{Nesterov2005}.

In this paper we resolve this non-smoothness in the estimation of $\theta$ by first smoothing $g_\theta$ appropriately and then minimizing the least squares criterion with the smoothed approximation of $g_\theta$ using a non-linear smooth optimization method such as \textit{BFGS} \citep{Broyden1970,Fletcher1970,Goldfarb1970,Shanno1970}. The novelty of our approach lies in the fact that: (i) we give provable bounds on the (smooth) approximation error of $g_\theta$, and (ii) we theoretically analyze the obtained computationally feasible estimator $\hat \theta$ and prove that $\hat \theta$ has the same statistical efficiency as $\tilde \theta$, the LSE described in~\eqref{eq:LSE-Naive}. 

Before we describe our procedure in detail let us look at two motivating real examples where modeling the regression function as in~\eqref{eq:RegMdl} can be useful.

\begin{table}[t]
\label{table_example2d}
\centering
\begin{tabular}{l||l|l||l|l}
&\multicolumn{2}{c||}{$d=1$}	&\multicolumn{2}{c}{$d=2$}\\
&average of $R$ 	&time [s]	&average of $R$ 	&time [s]\\
\hline
\cite{NelderMead1965}			&0.32	&75.5		&0.058	&1.9\\
Algorithm~\ref{algorithm_smoothing}	&0.30	&15.9		&0.047	&0.8
\end{tabular}
\caption{Empirical norm and computation time for the two segmented regressions in one (Example~\ref{example_1d}) and two (Example~\ref{example_2d}) dimensions. \cite{NelderMead1965} method and Algorithm~\ref{algorithm_smoothing}.}
\end{table}

\begin{figure}
\centering
\includegraphics[width=0.5\textwidth]{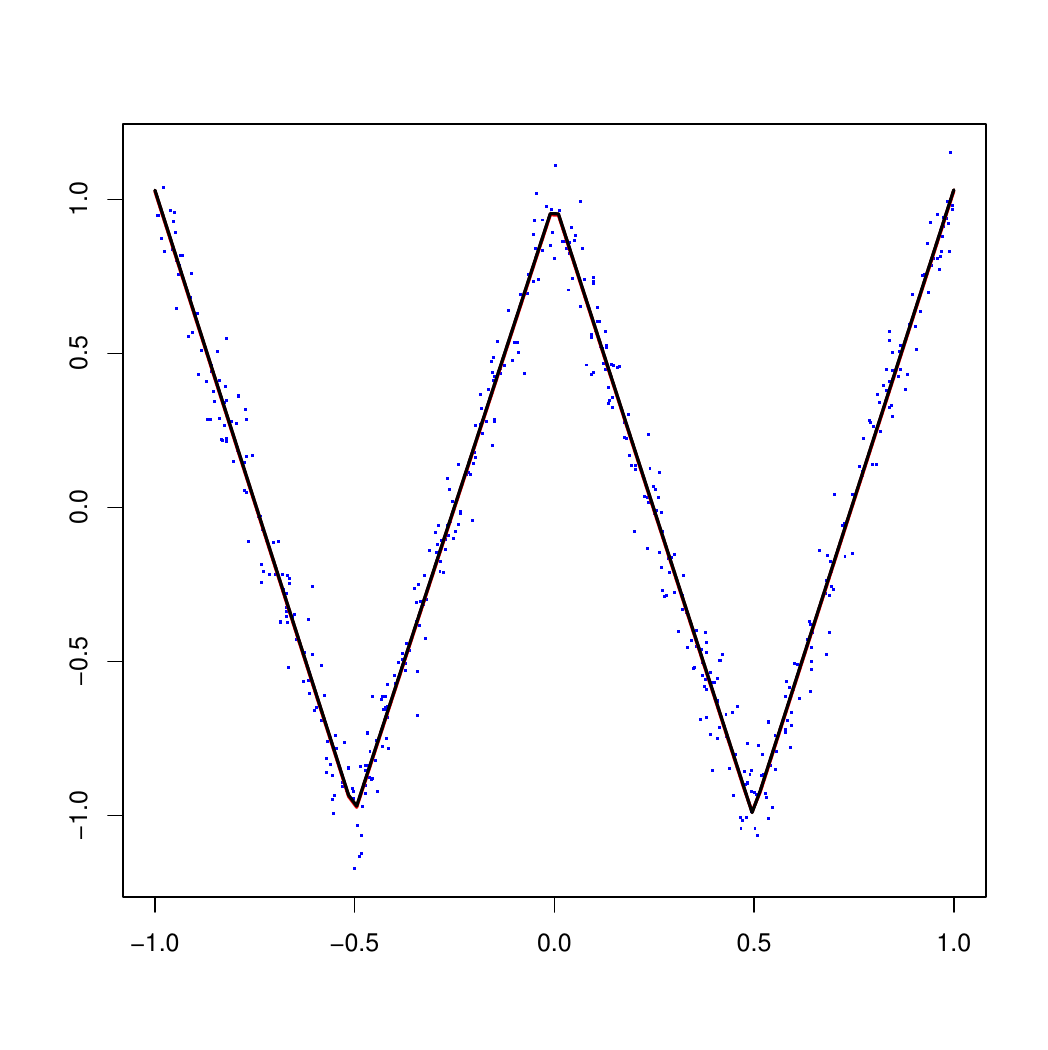}
\caption{Points generated from the PWA in \cite[Figure 2]{HempelEtAl13} (blue) and fitted PWA (red) consisting of a difference of two PWAs with three and two lines, respectively.
\label{fig:introexample1d}}
\end{figure}
\begin{figure}
\centering
\begin{minipage}{0.49\textwidth}
\includegraphics[width=\textwidth]{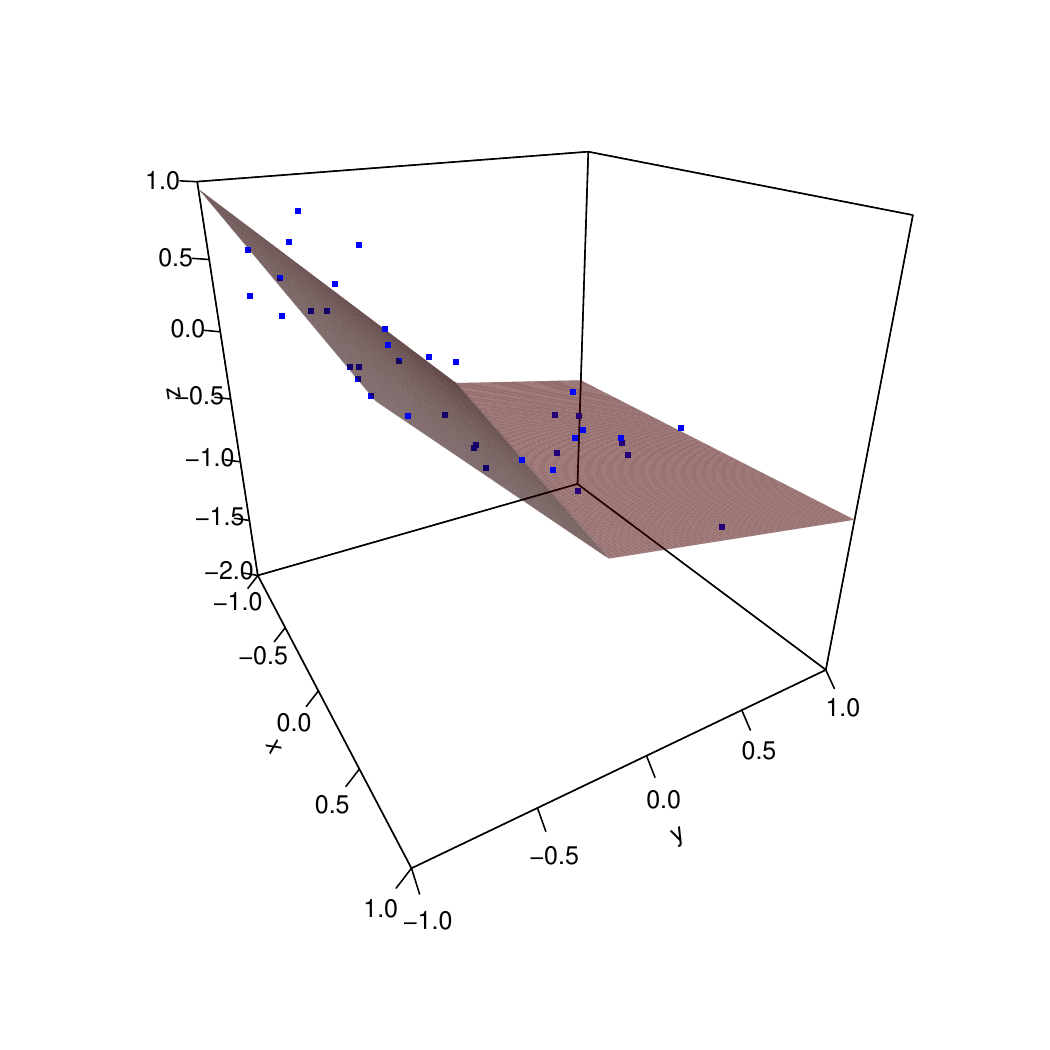}
\end{minipage}~
\begin{minipage}{0.49\textwidth}
\includegraphics[width=\textwidth]{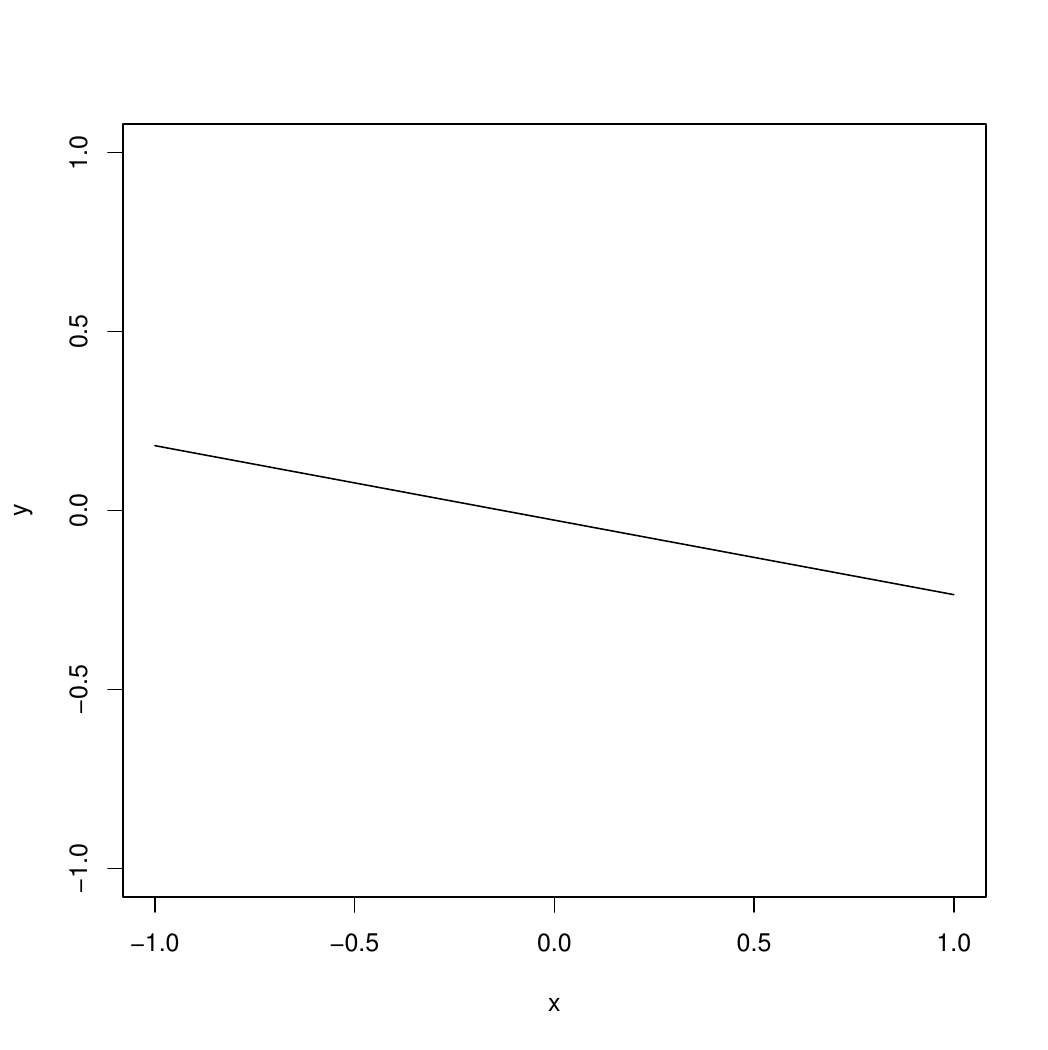}
\end{minipage}
\caption{Left: Visual representation of our two scenarios derived from the car dataset in \cite[Table 3]{HendersonVelleman1981}. 
Left: Variable \textit{MPG} as a function of the variables describing acceleration \textit{ACCEL} and weight \textit{WT}.
Right: Projected intersection line for the best fit displayed on the left.}
\label{fig:introexample2d}
\end{figure}

\begin{example}
\label{example_1d}
As a first example we consider the PWA in \cite[Figure 2]{HempelEtAl13} displayed in Figure~\ref{fig:introexample1d} (scaled down to lie in the interval $[-1,1]$ for conventional reasons).
This PWA is composed of a difference of two PWAs, the first one consisting of three and the second of two lines (see \cite{HempelEtAl13}).
We hence try to fit the PWA
$g_\theta(x) = \max_{i=1,2,3} \{ a_i^\top x+b_i \} - \max_{j=1,2} \{ c_i^\top x+d_j \}$,
where $\theta = (a_1,b_1,\ldots,a_3,b_3,c_1,d_1,c_2,d_2)$ and $a_i,b_i \in \R$, $c_i,d_i \in \R$ for $i=1,2,3$, $j=1,2$.
We first simulated $500$ points from this PWA by adding Gaussian noise (draws from iid.\ random variables with zero mean and standard deviation $0.1$) to uniformly selected function values.
We then use our approach (formalized as Algorithm~\ref{algorithm_smoothing} in Section~\ref{section_algorithm}) and the gradient-free \cite{NelderMead1965} algorithm to fit $g_\theta(x)$ to the data.

Algorithm~\ref{algorithm_smoothing} has one tuning parameter, the smoothness parameter $\mu$. In this and the following example we used $\mu=0.1$.
For the \cite{NelderMead1965} algorithm we used the implementation in the $R$ function \textit{optim}.
We measure the performance of any algorithm using the \textit{empirical norm} (the average squared residuals), that is $R:=\frac{1}{n} \sum_{i=1}^n (Y_i - g_{\hat{\theta}}(X_i))^2$, where $(X_i,Y_i)_{i=1}^n$ are the data, $g_\theta(x)$ is as above and $\hat{\theta}$ is the fitted parameter vector.

Results based on $1000$ repetitions are given in Table~\ref{table_example2d}. As seen from the table, both the \cite{NelderMead1965} and our algorithm perform comparably, yet Algorithm~\ref{algorithm_smoothing} yields results considerably faster even in the one dimensional case.
\end{example}

\begin{example}
\label{example_2d}
As a second example we consider the car dataset in \cite[Table 3]{HendersonVelleman1981} which has already been used in various publications to assess the performance of algorithms fitting breakpoint models.
This dataset contains consumer report data for $38$ car models captured in $11$ variables which include fuel consumption in miles per gallon, number of cylinders or the car weight.

We attempt to fit the fuel consumption \textit{MPG} as a function of the acceleration \textit{ACCEL} and weight \textit{WT} using a PWA consisting of two planes given by $g_\theta(x) = \max_{i=1,2} \{ a_i^\top x+b_i \}$, where $x,a_i \in \R^2$, $b_i \in \R$ for $i=1,2$. The choice of parameters for both algorithms as well as the evaluation using empirical norm and computational time are as in Example~\ref{example_1d}.

Table~\ref{table_example2d} (right) shows results. Our algorithm yields a speed-up of a factor two at a higher accuracy over the method of \cite{NelderMead1965} at comparable accuracy of both methods.
A graphical representation of the fitted functional is given in Figure~\ref{fig:introexample2d} (left).

Importantly, the best fit obtained with our algorithm displayed in Figure~\ref{fig:introexample2d} (left) finds two planes
with non-trivial intersection line shown in Figure~\ref{fig:introexample2d} (right) given by $y=-0.20 x-0.06$.
Approaches regressing only one variable at a time would thus not have been able to obtain a fit with comparable precision.
\end{example}

Many more real-life phenomena are captured by a model of type (\ref{eq:RegGen}) with continuous piecewise affine $g_\theta$:
these include, for instance, the Michigan Bone Health and Metabolism Study (MBHMS) in \cite{Das2015},
export prices in six EEC countries in \cite{Ginsburgh1980}, leukaemia and lung cancer data or time series of AIDS cases
\citep{StasinopoulosRigby1992,RigbyStasinopoulos1992,Molinari2001,Muggeo2003}.
\vspace{5mm}

We summarize the main contributions of the paper below.
\begin{enumerate}
  \item Our smoothing approach uses ideas from~\cite{Nesterov2005} and yields a smooth approximation $g^{\mu}_\theta$ of $g_\theta$ that is uniformly close (up to any desired precision depending on the smoothing parameter $\mu >0$) to $g_\theta$. In fact, we can provide theoretical bounds on the quality of the smooth approximation (i.e., $|g_\theta^{\mu} - g_\theta|$). Our approach is very different from the usual techniques (e.g., kernel smoothing) employed in nonparametric
statistics to obtain smooth approximations of non-smooth functions. To apply our smoothing technique we first express $g_\theta$ as the difference of two PWA convex functions which are then smoothed separately while preserving convexity.  

\item We give an algorithm based on a quasi-Newton method to estimate $\theta$ which is fast and stable in practice. We evaluate our procedure  on simulated data and provide empirical guidelines on how to choose the tuning parameter $\mu$ in practice. In contrast to existing approaches summarized below, which usually address $d=1$ or $2$, the results in this article are not limited to any dimension. Moreover, existing methods such as the ones of \cite{TishlerZang1981} or \cite{Muggeo2003} which are able to handle more than $k=2$ components are usually based on heuristics and lack a proper theoretical study. To the best of our knowledge, the proposed procedure seems to be the first attempt at a systematic study of the estimation of $\theta$ under such generality (i.e., $k \ge 2$ and $d \ge 1$).
  
\item To describe the theoretical results on the statistical performance of the proposed estimator $\hat \theta$ let us consider the case $k=2$, which has received some attention in the statistical literature --- see e.g.,~\cite{Ginsburgh1980}, \cite{SmithCook1980}, \cite{BaconWatts1971}, \cite{vdGeer1988}. In this situation, $g_\theta$ is either convex or concave. Without loss of generality let us assume that $g_\theta$ is convex, in which case $g_\theta$ can be easily represented as
\begin{align}
g_\theta(x) = \max \{ \alpha^\top x+\gamma,\beta^\top x+\phi\},
\label{eq:g}
\end{align}
where $\theta=(\alpha,\beta,\gamma,\phi) \in \R^{2d+2}$. Even in this simple setup, the computation of $\tilde \theta$ (using~\eqref{eq:LSE-Naive}) is non-trivial:
The non-smoothness of $g_\theta$ complicates the use of gradient based optimization methods (such as quasi-Newton or Newton-Raphson) to minimize the least squares criterion. In this setting we theoretically analyze the statistical properties of our proposed estimator $\hat \theta$. We show that, under proper choices of $\mu$ and mild conditions on the distribution of $(X,Y)$, $\hat \theta$ is $\sqrt{n}$-consistent and asymptotically normal with the {\it same} limiting distribution as that of the computationally infeasible LSE $\tilde \theta$. This immediately yields confidence intervals for $\theta$. Our analysis, in principle, can be extended beyond the setting $k=2$ where similar results would also hold for the proposed estimator.
\end{enumerate}

Early work on linear regression with change points dates back to~\cite{Quandt1958} and \cite{Quandt1960} who consider $d=1$; also see~\cite{Blischke1961},~\cite{Robison1964},~\cite{Hudson1966}. Asymptotic results on the limiting distributions of the LSE in regression models with separate analytical forms in different regions appear in~\cite{Feder1975}. \cite{SiegmundZhang1994}
develop conservative confidence regions for the change point of a broken stick model (i.e., $k=2$) in one dimension.

The idea of smoothing dates back to \cite{TishlerZang1981}.
In order to overcome the lack of smoothness around each change point,
they propose to use a smooth quadratic approximation of the model in an interval $(-\beta,\beta)$
around each change point, hence allowing for efficient minimization of the likelihood function using the Newton algorithm.
Although both the smooth approximation as well as the dependence on the parameter $\beta$
are evaluated numerically, no theoretical results or asymptotic distributions are given. The present work is closely related to the work of \cite{Das2015}
who consider a broken stick model in two dimensions, possibly with multiple change points.
\cite{Das2015} propose a method relying on local smoothing in a neighborhood of each change point
and show that the resulting estimate is $\sqrt{n}$ consistent and asymptotically normal.

Nevertheless, all of the aforementioned works suffer from the drawback that
only one dimensional problems are considered, and that their results are not extendable to higher dimensions.

The article is structured as follows.
Section~\ref{section_smoothed} discusses the smoothing of piecewise affine functions,
presents the \cite{Nesterov2005} smoothing technique (Section~\ref{subsection_Nesterov_smoothing})
and discusses the choice of the so-called prox function required for smoothing (Section~\ref{subsection_prox}).
We investigate two smoothing techniques based on two different prox functions.
An algorithm to compute estimates of both the change point as well as the regression parameters
will then be given in Section~\ref{section_algorithm}.
Section~\ref{section_asymptotic_results} discusses asymptotic results.
We show that the least squares estimate of our proposed smoothed problem is
also $\sqrt{n}$ consistent, asymptotically normal and moreover converges to the one of the unsmoothed
problem as the \cite{Nesterov2005} smoothing parameter goes to zero at an appropriate rate.
Section~\ref{section_experimentals} presents selected simulation results demonstrating
the accuracy of the estimates obtained with our approach,
verifying convergence rates and highlighting computational issues.
All proofs can be found in the appendix.

\section{Smoothing piecewise affine functions}
\label{section_smoothed}
Consider the regression model~\eqref{eq:RegGen} where $g_\theta$ is a continuous piecewise affine (PWA) function. The following result, which states that any continuous PWA function can be expressed as the difference of two convex PWA functions, will be crucial for the rest of the sequel.

\begin{lemma}\label{lem:BasicDecomp}
Let $\X \subset \R^d$ be a convex polyhedral region (i.e., a set formed by the intersection of finitely many hyperplanes). Every continuous PWA function $g: \X \to \R$ defined over a convex polyhedral partition of $\X$ with full dimensional elements $C_i$, $i=1,\ldots, k$, can be written as the difference of two convex PWA functions, i.e., $$g(x) = g_1(x) - g_2(x), \qquad \mbox{ for all } x \in \X,$$ where $g_i: \X \to \R$, $i=1,2$, are convex and PWA functions.
\end{lemma}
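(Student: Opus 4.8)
The plan is to prove this decomposition result by exploiting the structure of the polyhedral partition and the continuity of $g$ across the shared faces of adjacent cells. The key idea is that each affine piece $a_i^\top x + b_i$ can be recovered as a pointwise maximum (or minimum) of affine functions, so the obstruction to writing $g$ globally as a single max-of-affines (which would be automatically convex) is the lack of convexity of $g$ itself. I would therefore aim to ``convexify'' $g$ by adding a large convex correction term and then subtracting it off.

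Concretely, here is the approach I would take. Write $g$ on cell $C_i$ as the affine function $\ell_i(x) = a_i^\top x + b_i$. First I would observe that the maximum $g_1(x) := \max_{1 \le i \le k} \ell_i(x)$ of all the affine pieces is automatically convex and PWA, and likewise $\min_i \ell_i(x)$ is concave. The natural attempt $g = g_1 - (g_1 - g)$ then reduces the problem to showing that the \emph{residual} $g_2 := g_1 - g$ is convex. This is where the real work lies: $g_2$ is continuous and PWA (being a difference of such functions on the same partition, after passing to a common refinement), but it need not be convex as stated, because $g$ is only guaranteed to coincide with the running maximum locally, not globally. To control this, I would invoke the characterization of convexity for PWA functions in terms of their behavior across the $(d-1)$-dimensional facets separating adjacent full-dimensional cells: a continuous PWA function is convex if and only if at every shared facet the function ``bends upward,'' i.e., the gradient jump across the facet points in the correct direction relative to the facet's outward normal. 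This is a purely local, finite set of linear conditions.

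The cleanest route, which I expect to be the main engine of the proof, is to \emph{inductively add affine corrections facet by facet} to force convexity. For each interior facet $F_{ij} = \overline{C_i} \cap \overline{C_j}$ shared by two adjacent cells, the failure of convexity is measured by the sign of the gradient jump $(a_i - a_j)$ projected onto the normal of $F_{ij}$; by adding to $g$ a sufficiently large multiple of a convex PWA ``tent'' or ``fold'' function supported transversally across the offending facets, one can flip every downward bend into an upward bend while leaving the already-convex bends convex (since adding a convex function preserves convexity at the good facets). Summing a large enough convex PWA correction $h$ over all facets yields a convex function $g_1 := g + h$, and setting $g_2 := h$ (which is convex and PWA by construction) gives the desired decomposition $g = g_1 - g_2$. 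The finiteness of the partition guarantees that only finitely many facet conditions must be satisfied, so a single large enough common multiple works uniformly.

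\textbf{The hard part} will be making the ``add a convex correction across each facet'' step rigorous: one must exhibit an explicit convex PWA function whose gradient jumps across the finitely many facets dominate the (possibly wrong-signed) jumps of $g$, while ensuring that the correction is genuinely convex as a global function on $\X$ and not merely convex facet-by-facet. I would handle this by taking $h$ to be a large multiple of a strictly convex PWA function such as $x \mapsto \max_j (v_j^\top x + w_j)$ built from a fine enough family of supporting affine functions (for instance, a discretized quadratic), chosen so that its smallest facet-bending exceeds the largest downward bend of $g$; convexity of such a max-of-affines is automatic, and a compactness argument on the bounded polyhedral region $\X$ supplies the needed uniform bound on the gradient jumps of $g$. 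An alternative, slicker route worth mentioning is to appeal directly to the known representation theorem for continuous PWA functions as differences of two polyhedral convex functions (the max-min / lattice representation of Scholtes and of Melzer), in which case the lemma follows immediately from the cited structural result; but I would prefer to give the self-contained facet-correction argument above, since it makes the construction of $g_1$ and $g_2$ explicit and keeps the proof elementary.
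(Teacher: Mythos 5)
The paper itself gives no argument for this lemma: it simply cites Kripfganz and Schulze (1987) (and Hempel et al.) for the result. So your self-contained facet-correction strategy is necessarily a different route, and in outline it is the right one --- it is essentially the standard proof underlying the cited reference: convexify $g$ by adding a convex PWA correction $h$ whose upward bends across the interior facets dominate the downward bends of $g$, then write $g = (g+h) - h$, using the characterization that a continuous PWA function on a common polyhedral refinement is convex if and only if it bends upward across every $(d-1)$-dimensional interior facet (segments passing through faces of dimension $\le d-2$ are handled by perturbation and continuity).

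However, your concrete instantiation of the correction has a genuine flaw. A ``discretized quadratic'' $h(x)=\max_j (v_j^\top x + w_j)$ has its kinks along \emph{its own} facets, which generically meet the facets of $g$'s partition in a lower-dimensional set; at any point in the relative interior of a facet $F_{ij}$ of $g$ that lies inside a linearity region of $h$, the bend of $g + Mh$ across $F_{ij}$ equals the bend of $g$ --- still negative --- no matter how large $M$ is and no matter how fine the discretization. Likewise, a localized ``tent'' across a single facet cannot be globally convex: any convex PWA function that kinks on a facet must kink along the entire hyperplane containing it. The repair is to force the correction's kinks to lie exactly on the partition's facet hyperplanes: let $H_l = \{x \in \R^d : c_l^\top x + d_l = 0\}$, $l=1,\ldots,m$, be the hyperplanes spanned by the interior facets of the partition, and set $h(x) = M \sum_{l=1}^m \vert c_l^\top x + d_l \vert$. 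This $h$ is convex and PWA; across each facet of the refinement induced by $H_1,\ldots,H_m$ it contributes bend $2M\Vert c_l\Vert > 0$, and since $h$ is globally convex it cannot destroy upward bends of $g$ elsewhere. Because there are only finitely many cells, the gradient jumps $\Vert a_i - a_j\Vert$ are bounded (no compactness of $\X$ is needed, contrary to what you suggest), so a single sufficiently large $M$ makes $g_1 := g + h$ bend upward across every facet, hence convex, and $g_2 := h$ completes the decomposition.
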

The above result is proved in~\cite{KS87}; also see~\cite{HempelEtAl13}. Consequently, to find a smooth approximation of a PWA (continuous) $g$ we first consider the case when $g$ is PWA and also {\it convex}. The advantage of using a convex PWA is that it can be conveniently represented as a maximum of affine functions, i.e., 
\begin{equation}\label{eq:Cvx-PWA}
g(x) = \max_{i=1,\ldots, k} \{a_i ^\top x + b_i\}
\end{equation} 
for some $k \ge 1$ where $a_1,\ldots, a_k \in \R^d$ and $b_1,\ldots, b_k \in \R$. Let $A$ be a $k \times (d+1)$ matrix such that the $i$-th row of $A$ is $(a_i,b_i) \in \R^{d+1}$. Then $g$ can be succinctly represented as $$g(x) = \max_{i=1,\ldots, k} (A [x,1])_i,$$ where for a vector $u\in \R^k$, $(u)_i$ denotes the $i$-th component of $u$, and where $[x,1] \in \R^{d+1}$ is the concatenation of vector $x$ and scalar $1$. 

In the following subsection we detail an approach to smooth a convex PWA. This procedure will then be used to construct a smooth approximation to any continuous PWA function.

\subsection{The smoothing approach}
\label{subsection_Nesterov_smoothing}
Suppose that $f:\R^q \to \R$ ($q \ge 1$) be a piecewise linear convex function. Then we can express $f$ as
\begin{equation}\label{eq:Represent-CvxPWA}
f(z) = \max_{i=1,\ldots, p} (A z)_i, \qquad \mbox{for all } z \in \R^q,
\end{equation} for some matrix $A$ of order $p \times q$. In light of the specific representation of interest given in \eqref{eq:Cvx-PWA}, the dimension $q$ will correspond to $q=d+1$ and likewise for a $k$-PWA, we will later consider $p=k$. In this subsection we find a smooth convex approximation to any $f$ of the form defined in~\eqref{eq:Represent-CvxPWA} and study some of its properties.

Let $\|\cdot \|_p$ ($p \ge 1$) be a norm on $\R^p$ and let $\langle \cdot,\cdot \rangle$ denote the usual Euclidean inner product. Let $Q_p \subset \R^p$ be the unit simplex in $\R^p$, i.e., $$Q_p := \left\{w = (w_1,\ldots, w_p) \in \R^p: \sum_{i=1}^p w_i = 1, \mbox{ and } w_i \ge 0, \mbox{ for all } i=1,\ldots, p \right\}.$$ Following \cite{Nesterov2005}, let $\rho$ be a proximity function (\textit{prox function}) --- a nonnegative continuously differentiable {\it strongly convex} function (with respect to the norm $\|\cdot\|_p$) on $Q_p$, i.e., 
$$\rho(s) \geq \rho(t) + \langle \nabla \rho(t),t-s \rangle + \frac{1}{2} \Vert t-s \Vert_p, \qquad \mbox{for all }  s,t \in Q_p.$$ Consider now the function $f^\mu: \R^q \rightarrow \R$ defined as 
\begin{equation}
f^\mu(z) := \max_{w \in Q_p} \left\{ \langle Az,w \rangle - \mu \rho(w) \right\},
\label{eq:nesterov}
\end{equation}
where $\mu >0$ is a tuning parameter. Then $f^\mu$ is our {\it smooth approximation} of $f$. Observe that when $\mu = 0$, $f^0(z)$ recovers the unperturbed function $f$ (cf.~\eqref{eq:Represent-CvxPWA}): $$f(z) = \max_{w \in Q_p} \left\{ \langle Az,w \rangle \right\} = f^0(z).$$
The following lemma, taken from~\citet[Theorem 1]{Nesterov2005}, shows that $f^\mu$ is a smooth convex function.
\begin{lemma}
\label{lemma_nesterov}
For any $\mu > 0$,
the function $f^\mu$ defined in (\ref{eq:nesterov}) is convex and everywhere differentiable in $z$.
The gradient of $f^\mu$ is given by $\frac{\partial}{\partial z}f^\mu(z) = A^\top \hat{w}_\mu$, where
$\hat{w}_\mu = \arg\max_{w \in Q_p} \left\{ \langle Az,w \rangle - \mu \rho(w) \right\}$.
Moreover, the gradient $z \mapsto \frac{\partial}{\partial z} f^\mu(z)$ is Lipschitz continuous with parameter $\Vert A \Vert_{p,q}^2/\mu$, where $$\Vert A \Vert_{p,q} := \max_{u,v} \left\{ \langle Au,v \rangle: \Vert u \Vert_q = 1, \Vert v \Vert_p = 1, u \in \R^q, v \in \R^p \right\}$$ is a norm on the space of matrices in $\R^{p \times q}$.
\end{lemma}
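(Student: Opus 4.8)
The plan is to establish the four assertions in sequence, treating convexity as immediate and viewing the differentiability and the Lipschitz bound as the substantive parts. For convexity, I would note that for each fixed $w \in Q_p$ the map $z \mapsto \langle Az,w\rangle - \mu\rho(w)$ is affine in $z$, and that $f^\mu$ is the pointwise supremum over $w \in Q_p$ of this family. Since a supremum of affine functions is convex, $f^\mu$ is convex.

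The next step, and the structural input for everything that follows, is to show that the inner maximization in~\eqref{eq:nesterov} admits a \emph{unique} solution $\hat w_\mu = \hat w_\mu(z)$ for every $z$. Existence holds because $Q_p$ is compact and $w \mapsto \langle Az,w\rangle - \mu\rho(w)$ is continuous; uniqueness holds because $\rho$ is strongly convex with respect to $\|\cdot\|_p$, so this objective is strictly concave in $w$. With the maximizer unique, I would invoke Danskin's theorem (the envelope theorem for smooth parametric optimization over a fixed compact set): the value function $f^\mu$ is differentiable at $z$, and its gradient equals the partial gradient of the objective in $z$ evaluated at the maximizer, namely $\partial_z[\langle Az,w\rangle - \mu\rho(w)]|_{w=\hat w_\mu} = A^\top \hat w_\mu$. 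This delivers the differentiability claim and the gradient formula $\nabla f^\mu(z) = A^\top \hat w_\mu$ together.

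The main work is the Lipschitz estimate, and the crux is to first prove that $z \mapsto \hat w_\mu(z)$ is Lipschitz with constant $\|A\|_{p,q}/\mu$. Fix $z_1,z_2$ and set $w_1 = \hat w_\mu(z_1)$, $w_2 = \hat w_\mu(z_2)$. Since each $w_i$ maximizes a concave function over the convex set $Q_p$, it satisfies the first-order variational inequality $\langle Az_i - \mu\nabla\rho(w_i), w-w_i\rangle \le 0$ for all $w \in Q_p$. I would take $w = w_2$ in the inequality for $z_1$ and $w = w_1$ in the inequality for $z_2$, and add the two. After rearranging, the affine parts combine into $\langle A(z_1-z_2), w_2-w_1\rangle$ and the prox parts into $-\mu\langle \nabla\rho(w_1)-\nabla\rho(w_2), w_2-w_1\rangle$. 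Invoking the monotonicity of $\nabla\rho$ implied by strong convexity, $\langle \nabla\rho(w_1)-\nabla\rho(w_2), w_1-w_2\rangle \ge \|w_1-w_2\|_p^2$, yields $\mu\|w_1-w_2\|_p^2 \le \langle A(z_1-z_2), w_1-w_2\rangle$. Bounding the right-hand side by $\|A\|_{p,q}\|z_1-z_2\|_q\|w_1-w_2\|_p$ via the definition of the matrix norm and cancelling one factor of $\|w_1-w_2\|_p$ gives $\|w_1-w_2\|_p \le (\|A\|_{p,q}/\mu)\|z_1-z_2\|_q$.

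Finally, I would transfer this stability estimate to the gradients. Since $\nabla f^\mu(z_i) = A^\top w_i$, the same matrix-norm inequality gives $\|A^\top(w_1-w_2)\|_{q^*} \le \|A\|_{p,q}\|w_1-w_2\|_p$, where $\|\cdot\|_{q^*}$ denotes the norm dual to $\|\cdot\|_q$. Chaining this with the Lipschitz bound on $\hat w_\mu$ produces $\|\nabla f^\mu(z_1) - \nabla f^\mu(z_2)\|_{q^*} \le (\|A\|_{p,q}^2/\mu)\|z_1-z_2\|_q$, exactly the claimed constant. I expect the hardest part to be the third step --- correctly combining the two variational inequalities and extracting the factor $1/\mu$ from the strong convexity of $\rho$ --- together with the norm bookkeeping in the final step, since one must track carefully which (primal or dual) norm governs each inequality so that $\|A\|_{p,q}$ enters squared.
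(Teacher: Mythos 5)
Your proof is correct and is essentially the argument behind the result the paper invokes: the paper gives no in-text proof but cites \citet[Theorem 1]{Nesterov2005}, whose proof proceeds exactly as yours does --- uniqueness of the maximizer from strong concavity, Danskin's theorem for the gradient formula $A^\top \hat{w}_\mu$, and the summed first-order variational inequalities combined with strong monotonicity of $\nabla\rho$ and the definition of $\Vert A \Vert_{p,q}$ to get the Lipschitz constant $\Vert A \Vert_{p,q}^2/\mu$. Your norm bookkeeping (Lipschitz from $\Vert\cdot\Vert_q$ into the dual norm) is also the correct reading of the statement.
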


The definition of the smoothed estimator $f^\mu$ in \eqref{eq:nesterov} immediately allows us to obtain bounds on the approximation error $|f(z) - f^\mu(z)|$. Indeed,
\begin{align}
f^\mu(z) \geq \sup_{w \in Q_p} \langle Az,w \rangle - \mu \sup_{w \in Q_p} \rho(w) = f^0(z) - \mu \sup_{w \in Q_p} \rho(w), \nonumber\\
f^\mu(z) = \sup_{w \in Q_p} \left\{ \langle Az,w \rangle - \mu \rho(w) \right\} \leq \sup_{w \in Q_p} \langle Az,w \rangle = f^0(z), \nonumber
\end{align}
using the nonnegativity of the prox function \citep{Nesterov2005}; also see~\citet[Section 3.1]{Mazumder2015} for a detailed description of this approach. Summarizing the above considerations we obtain the following two useful results:
\begin{enumerate}
  \item The function $f^\mu$ defined in (\ref{eq:nesterov}) is smooth for any $\mu>0$, convex and has a Lipschitz continuous gradient which is proportional to $\mu^{-1}$. The original unsmoothed PWA function $f$ is recovered by setting $\mu=0$.
  \item $f^\mu(z)$ is a {\it uniform approximation} to $f^0(z)$ as
\begin{equation}
  f^0(z) - \mu \sup_{w \in Q_p} \rho(w) \leq f^\mu(z) \leq f^0(z).
  \label{eq:Nesterovbounds}
  \end{equation}
Thus the uniform approximation error is upper bounded by $$\sup_{z \in \R^p}|f(z) - f^\mu(z)| \le \mu \sup_{w \in Q_p} \rho(w) = O(\mu)$$ which depends only on $\mu$ and $\rho$.
\end{enumerate}
The choice of $\mu$ will be important and we give some sufficient conditions on $\mu$ in Section~\ref{section_asymptotic_results} for our main theoretical results to hold. We further provide practical guidelines on how to choose $\mu$ in Section~\ref{subsection_dependence_smoothing_parameter}. There are various possible choices of the prox function $\rho$. We will discuss two (standard) choices in detail in the following section which are also used throughout the paper, including in our simulation studies.

We aim to use the approach in~\eqref{eq:nesterov} to obtain a smooth uniform approximation to any convex PWA $f$ of the form defined in~\eqref{eq:Cvx-PWA}.

\subsection{Choice of the prox function}
\label{subsection_prox}
We consider two prox functions in this article, one based on the squared error loss and the other based on the entropy loss. As will be shown in Section~\ref{subsection_smoothed} both these prox functions yield $\sqrt{n}$-consistent and asymptotically normal estimators of $\theta$ with the {\it same} limiting distribution as the one of the naive LSE $\tilde \theta$.

\begin{itemize}
  \item \noindent {\bf Entropy prox function:} The entropy prox function $\rho: \R^p \rightarrow \R$ is
given by $$\rho(w) = \sum_{i=1}^p w_i \log(w_i)+\log p,$$ where $w = (w_1,\ldots, w_p)$. Letting $\Vert \cdot \Vert_p$ be the $\ell_1$-norm in $\R^p$ (i.e., $\|u\|_{p} = \sum_{i=1}^p |u_i|$) it can be shown that $\rho$ is strongly convex with respect to this norm and satisfies $$\sup_{w \in Q_p} \rho(w) = \log p;$$
see e.g.,~\cite{Nesterov2005}.
More importantly, the entropy prox function allows an analytic expression: for the PWA convex function $f$ given in~\eqref{eq:Cvx-PWA}, the prox-smoothed approximation of $f$ is given by
\begin{align}
\nonumber
f^\mu(x) &= \max_{w \in Q_k} \left\{
\sum_{i=1}^k w_i (a_i^\top x+b_i) - \mu \left( \sum_{i=1}^k w_i \log w_i  + \log k  \right) \right\}\\
&= \mu \log \left( \frac{1}{k} \sum_{i=1}^k e^\frac{a_i^\top x+b_i}{\mu} \right).
\label{eq:formular_entropy}
\end{align}
Further, using~\eqref{eq:Nesterovbounds}, we have
\begin{equation}\label{eq:EntProxApp}
\sup_{x \in \R^d} |f(x)-f^\mu(x)| \leq \mu \sup_{w \in  Q_k} \rho(w) = \mu \log k.
\end{equation}

\item \noindent {\bf Squared error prox function:} The squared error prox function $\rho: \R^p \rightarrow \R$ is
given by $$\rho(w) = \frac{1}{2} \left\Vert w - \frac{1}{m} \mathbf{1} \right\Vert_p^2,$$ where $\mathbf{1} \in \R^d$ is the vector of ones and $\Vert \cdot \Vert_p$ denotes the Euclidean norm in $\R^p$. Following \cite{Mazumder2015}, the optimization problem in (\ref{eq:nesterov}) applied to the PWA function in~\eqref{eq:Cvx-PWA} is equivalent to the following convex program:
\begin{align}
f^\mu(x) =  \min_{w \in Q_k} \left\{ \frac{1}{k} \sum_{i=1}^k w_i^2 - \sum_{i=1}^k w_i c_{i}^{\theta,\mu}(x) \right\},
\label{eq:michelot}
\end{align}
where $c_{i}^{\theta,\mu}(x) := (a_i^\top x + b_i)/\mu - 1/k$, $i \in \{ 1,\ldots,k \}$, and $\theta = (a_1,\ldots,a_k,b_1,\ldots,b_k)$. This problem is identical to the one of finding the Euclidean projection of the vector $(c_{i}^{\theta,\mu}(x))_{i=1}^k$ onto $Q_k$, the $k$-dimensional unit simplex. It can be solved efficiently using the algorithm of \cite{Michelot1986}. Denoting the Euclidean projection of the vector $(c_{i}^{\theta,\mu}(x))_{i=1}^k$ onto $Q_k$ by $\hat w^{\theta,\mu}(x) = (\hat w_{i}^{\theta,\mu}(x))_{i=1}^k$, the prox-smoothed approximation for the PWA convex function $f$ given in~\eqref{eq:Cvx-PWA} can be written as
\begin{align}
f^\mu(x) = \sum_{i=1}^k \hat w_{i}^{\theta,\mu}(x) \cdot (a_i^\top x+b_i) - \mu \rho(\hat w^{\theta,\mu}(x)).
\label{eq:formular_sq}
\end{align}
As $\sup_{w \in Q_k} \frac{1}{k} \left| w - \frac{1}{k} \mathbf{1} \right|^2 = 1-\frac{1}{k}$, we have
\begin{equation}\label{eq:SqErrProxApp}
\sup_{x \in \R^d} |f(x)-f^\mu(x)| \leq \mu \sup_{w \in Q_k} \rho(w) = \mu \left( 1-\frac{1}{k} \right).
\end{equation}
\end{itemize}

\section{Our Algorithm}
\label{section_algorithm}
In this section we describe our algorithm to estimate the $k$ hyperplanes given in~\eqref{eq:RegMdl}. By Lemma~\ref{lem:BasicDecomp} we know that the continuous $k$-PWA function $g_\theta$ (in~\eqref{eq:RegMdl}) can be represented as
\begin{equation}\label{eq:DiffOf2Cvx}
g_\theta(x) = g_{1,\theta_1}(x) - g_{2,\theta_2}(x)\qquad \mbox{for all } x \in \X,
\end{equation}
where $\theta=(\theta_1,\theta_2)$ and $g_{i,\theta_i}$, $i=1,2$, is a convex PWA with the representation (see~\eqref{eq:Cvx-PWA})
\begin{equation}\label{eq:EachCvxFnc}
g_{i,\theta_i}(x) = \max_{j=1,\ldots, k_i} \{a_{i,j} ^\top x + b_{i,j}\}
\end{equation}
for $\theta_i=(a_{i,1},\ldots,a_{i,k_i},b_{i,1},\ldots,b_{i,k_i}) \in \R^{k_i(d+1)}$ and nonnegative integers $k_i$ (note that $k_i = \# \theta_i/(d+1)$, where $\# \theta_i$ denotes the length of vector $\theta_i$). We assume here that $k_1$ and $k_2$ are specified in advance by the user. For identifiability reasons we can take $a_{2,1}=0, b_{2,1}=0$ as formalized in the next lemma.
\begin{lemma}
\label{lemma_identifiability}
In the model~\eqref{eq:DiffOf2Cvx} with $g_{i,\theta_i}$ as defined in~\eqref{eq:EachCvxFnc}, if $k_i \ge 1$ for $i=1,2$, we can without loss of generality assume that $a_{2,1}=0, b_{2,1}=0$.
\end{lemma}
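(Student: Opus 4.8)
The plan is to exploit the translation-equivariance of the max-of-affine representation. For any affine function $\ell(x) = a^\top x + b$ on $\X$ and any convex PWA $g_{i,\theta_i}(x) = \max_{j=1,\ldots,k_i}\{a_{i,j}^\top x + b_{i,j}\}$, one has
\begin{equation}
g_{i,\theta_i}(x) - \ell(x) = \max_{j=1,\ldots,k_i}\{(a_{i,j}-a)^\top x + (b_{i,j}-b)\},
\end{equation}
since subtracting the same affine function from each term inside the maximum factors out of the max. Hence subtracting $\ell$ from a convex PWA again yields a convex PWA of the same form, with the same number $k_i$ of affine pieces. The hypothesis $k_i \ge 1$ guarantees that both $g_{1,\theta_1}$ and $g_{2,\theta_2}$ have at least one affine piece, so in particular $g_{2,\theta_2}$ possesses a well-defined first piece that we can use below.

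First I would set $\ell(x) := a_{2,1}^\top x + b_{2,1}$, the affine function corresponding to the first piece of $g_{2,\theta_2}$. I then define new parameter blocks $\theta_1' = (a_{1,j}', b_{1,j}')_{j=1}^{k_1}$ and $\theta_2' = (a_{2,j}', b_{2,j}')_{j=1}^{k_2}$ by
\begin{equation}
a_{i,j}' := a_{i,j} - a_{2,1}, \qquad b_{i,j}' := b_{i,j} - b_{2,1}, \qquad i=1,2,\; j=1,\ldots,k_i.
\end{equation}
By construction the first piece of the new function $g_{2,\theta_2'}$ is $a_{2,1}' = 0$, $b_{2,1}' = 0$, which is precisely the asserted normalization.

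Next I would verify that the decomposition is preserved. Applying the displayed identity to both summands gives $g_{1,\theta_1'} = g_{1,\theta_1} - \ell$ and $g_{2,\theta_2'} = g_{2,\theta_2} - \ell$, so that
\begin{equation}
g_{1,\theta_1'}(x) - g_{2,\theta_2'}(x) = \bigl(g_{1,\theta_1}(x) - \ell(x)\bigr) - \bigl(g_{2,\theta_2}(x) - \ell(x)\bigr) = g_\theta(x)
\end{equation}
for all $x \in \X$. Thus $(\theta_1',\theta_2')$ is a difference-of-convex PWA representation of the \emph{same} function $g_\theta$, with the same piece counts $k_1,k_2$, with each $g_{i,\theta_i'}$ still convex PWA, and with the required normalization in place.

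Since this construction applies to an arbitrary admissible $(\theta_1,\theta_2)$, the normalization $a_{2,1}=0$, $b_{2,1}=0$ entails no loss of generality. I do not expect a substantive obstacle here; the only subtle point worth flagging is that the difference-of-convex representation is itself non-unique — adding any common affine function to both $g_1$ and $g_2$ leaves $g_\theta$ unchanged — and the lemma merely pins down one convenient representative by fixing this gauge freedom. It is this redundancy, rather than full identifiability of $\theta$, that the normalization removes.
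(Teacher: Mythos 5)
Your proof is correct and follows essentially the same route as the paper: both arguments subtract the affine function $a_{2,1}^\top x + b_{2,1}$ from each of the two convex pieces (the paper phrases this as adding and subtracting that affine function and invoking $\max\{u,v\}-w=\max\{u-w,v-w\}$), which leaves the difference $g_\theta$ unchanged while zeroing out the first piece of $g_{2,\theta_2}$. Your closing remark correctly identifies the gauge freedom in the difference-of-convex representation that the normalization fixes, which the paper leaves implicit.
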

\begin{proof}
We have $g_\theta(x) = \max_{j=1,\ldots, k_1} \{a_{1,j} ^\top x + b_{1,j}\} - \max_{j=1,\ldots, k_2} \{a_{2,j} ^\top x + b_{2,j}\}$. We can express $g_\theta$ as:
$$g_\theta(x) = \max_{j=1,\ldots, k_1} \{a_{1,j} ^\top x + b_{1,j}\} - \max_{j=1,\ldots, k_2} \{a_{2,j} ^\top x + b_{2,j}\}
\pm (a_{2,1} ^\top x + b_{2,1})$$
and use the fact that $\max \{ u,v \} - w = \max \{ u-w,v-w \}$ for arbitrary $w,v,w \in \R$.
Setting $\bar{a}_{i,j} := a_{i,j} - a_{2,1}$ and $\bar{b}_{i,j} := b_{i,j}-b_{2,1}$ for all $j=1,\ldots,k_i$, $i =1, 2$, leads to
$$g_\theta(x) = \max_{j=1,\ldots, k_1} \{\bar{a}_{1,j} ^\top x + \bar{b}_{1,j}\} - \max_{j=1,\ldots, k_2} \{\bar{a}_{2,j} ^\top x + \bar{b}_{2,j}\}$$ with $\bar{a}_{2,1}=0, \bar{b}_{2,1}=0$.
\end{proof}
With this change in parametrization, the goal is now to estimate the parameter $\theta = (\theta_1,\theta_2) \in \R^{(k_1 + k_2)(d+1)}$. Given a set of data points $\{(X_i,Y_i)\}_{i=1}^n$ we use the method of least squares to estimate $\theta$. As $g_{\theta}$ is non-smooth we cannot directly employ a gradient descent algorithm. 

As a first step to resolve this difficulty, we compute smoothed convex approximations to $g_{1,\theta_1}$ and $g_{2,\theta_2}$ using the approach outlined in Section~\ref{subsection_Nesterov_smoothing} and Section~\ref{subsection_prox}. This leads to smoothed functions $g_{1,\theta_1}^\mu $ and $g_{2,\theta_2}^\mu $ and thus to a smoothed approximation of $g_\theta$: $$g^\mu_\theta := g_{1,\theta_1}^\mu -g_{2,\theta_2}^\mu.$$
Let $$M_n^\mu(\theta) :=\frac{1}{n} \sum_{i=1}^n \left( Y_i - g^\mu_\theta(X_i) \right)^2$$ be the least squares criterion function we now try to minimize over $\theta \in \R^{(k_1+k_2)(d+1)}$. Thus, the LSE of $\theta$ we consider is 
\begin{equation}\label{eq:SmoothedLSE}
\hat \theta = (\hat \theta_1,\hat \theta_2) := \argmin_{\theta \in \R^{(k_1+k_2)(d+1)}}M_n^\mu(\theta).
\end{equation}
We study the computation and the statistical properties of  $\hat \theta$ is this paper. In this section we focus on the computation of $\hat \theta$. To minimize $M_n^\mu$ we first compute its Jacobian matrix:
$$J_n^\mu(\theta) = 2 \sum_{i=1}^n \left( Y_i-g_{\theta}^\mu(X_i) \right) \cdot \nabla_\theta g_\theta^\mu(X_i),$$
where the gradient $\nabla_\theta g_\theta^\mu$ varies depending on the choice of the prox function used for smoothing. As $$\nabla_\theta g_\theta^\mu(x) = [\nabla_{\theta_1} g_{1,\theta_1}^\mu(x), - \nabla_{\theta_2} g_{2,\theta_2}^\mu(x)] \qquad \mbox{for all }\; x \in \X,$$ we have to first find the gradients of the function $g_{i,\theta_i}^\mu$ ($i=1,2$) with respect to $\theta_i$. To this end we state the general form of the gradients for \cite{Nesterov2005} smoothed functions.
\begin{lemma} \label{lemma_gradient}
For $\theta = (a_1,\ldots,a_{k},b_1,\ldots,b_{k}) \in \R^{k(d+1)}$, let $$f_\theta(x) := \max_{j=1,\ldots, k} \{a_j ^\top x + b_j\} = \max_{j=1,\ldots, k} (A [x,1])_j,$$
where $A \in \R^{k \times (d+1)}$ is a matrix whose $j$'th row is given by $(a_j,b_j) \in \R^{d+1}$. Then the derivatives (with respect to $a_j$'s and $b_j$'s) of the \cite{Nesterov2005} smooth approximation of $f_\theta(x)$ defined in~\eqref{eq:nesterov}, i.e.\
\begin{equation}\label{eq:MaxQ_p}
	f^\mu_\theta(x) := \max_{w \in Q_p} \left\{ \langle A[x,1],w \rangle - \mu \rho(w) \right\} = (A[x,1])^\top \hat{w}^{\theta,\mu} - \mu \rho(\hat{w}^{\theta,\mu})
\end{equation} 
where $\hat{w}^{\theta,\mu} := \arg\max_{w \in Q_p} \left\{ \langle A[x,1],w \rangle - \mu \rho(w) \right\}$  (as in Lemma~\ref{lemma_nesterov})
are $$\frac{\partial f^\mu_{\theta}}{\partial a_j}(x) = \hat w_j^{\theta,\mu} x \qquad \mbox{and} \qquad \frac{\partial f^\mu_\theta}{\partial b_j}(x) = \hat w_j^{\theta,\mu}\qquad \mbox{for } j =1,\ldots, k.$$
\end{lemma}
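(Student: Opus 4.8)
The plan is to recognize that the formula to be established is precisely an instance of Danskin's theorem (the envelope theorem for a maximum over a compact set), now applied with the roles reversed relative to Lemma~\ref{lemma_nesterov}: there one differentiates the value $f^\mu$ in the argument $z=[x,1]$ while holding $A$ fixed, whereas here one holds $x$ (hence $z$) fixed and differentiates in the entries of $A$, i.e.\ in the parameters $\theta$. Writing the inner objective from~\eqref{eq:MaxQ_p} as
$$\phi(w,\theta) := \langle A[x,1], w\rangle - \mu\rho(w) = \sum_{j=1}^k w_j\bigl(a_j^\top x + b_j\bigr) - \mu\rho(w),$$
I would first check the hypotheses of the envelope theorem: $\phi$ is jointly continuous on $Q_p \times \R^{k(d+1)}$; the map $w \mapsto \phi(w,\theta)$ is strictly concave on the compact simplex $Q_p$ because $\rho$ is strongly convex, so the maximizer $\hat w^{\theta,\mu}$ from~\eqref{eq:nesterov} is unique; and $\theta \mapsto \partial_\theta\phi(w,\theta)$ is continuous. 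Under exactly these conditions Danskin's theorem guarantees that $\theta \mapsto f^\mu_\theta(x) = \max_{w\in Q_p}\phi(w,\theta)$ is differentiable with $\nabla_\theta f^\mu_\theta(x) = \nabla_\theta\phi(w,\theta)\big|_{w = \hat w^{\theta,\mu}}$.

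The computation of the right-hand side is then immediate, since $\phi$ is affine in $\theta$: only the $j$-th summand depends on $a_j$ and $b_j$, and $\rho(w)$ carries no $\theta$-dependence, so
$$\frac{\partial\phi}{\partial a_j}(w,\theta) = w_j\, x, \qquad \frac{\partial\phi}{\partial b_j}(w,\theta) = w_j.$$
Evaluating at $w = \hat w^{\theta,\mu}$ yields $\partial f^\mu_\theta/\partial a_j(x) = \hat w^{\theta,\mu}_j\, x$ and $\partial f^\mu_\theta/\partial b_j(x) = \hat w^{\theta,\mu}_j$, as claimed. Equivalently, one may differentiate in the generic entry $A_{jm}$ and observe that $\partial_{A_{jm}}\phi = w_j z_m$ with $z = [x,1]$, which reproduces both formulas at once upon reading off $m = 1,\ldots,d$ (the components of $a_j$) and $m = d+1$ (giving $b_j$).

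The step I expect to be the crux is the rigorous justification of the envelope identity, i.e.\ that no term of the form $\sum_l(\partial_{w_l}\phi)(\partial\hat w_l/\partial\theta)$ survives. The subtlety is genuine, because the feasible set is the simplex $Q_p$, so the maximizer may lie on its boundary (some $\hat w^{\theta,\mu}_j = 0$) and the naive stationarity condition $\nabla_w\phi = 0$ need not hold; the correct optimality statement involves the KKT conditions for the constrained problem. Danskin's theorem is precisely the tool that bypasses this difficulty and delivers the clean gradient, its proof relying only on the uniqueness of $\hat w^{\theta,\mu}$ together with the continuity of $\partial_\theta\phi$. As an independent check of the formula I would verify it directly in the entropy-prox case, where $f^\mu_\theta(x) = \mu\log\bigl(\tfrac1k\sum_i e^{(a_i^\top x + b_i)/\mu}\bigr)$ by~\eqref{eq:formular_entropy}: differentiating this log-sum-exp in $a_j$ and $b_j$ reproduces the softmax weights $\hat w^{\theta,\mu}_j = e^{(a_j^\top x + b_j)/\mu}/\sum_i e^{(a_i^\top x + b_i)/\mu}$ multiplied by $x$ and by $1$ respectively, matching the general result.
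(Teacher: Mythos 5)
Your proposal is correct and follows essentially the same route as the paper: the paper's own proof is a compressed envelope-theorem argument, resting exactly on the two facts you isolate --- the inner objective is linear in $\theta$, and the maximizer $\hat w^{\theta,\mu}$ is unique by strong convexity of $\rho$ (the paper's phrase ``strongly convex function in $w$'' should read strongly \emph{concave}, as you correctly state) --- so that the gradient of the value function is the partial $\theta$-gradient of the objective evaluated at $\hat w^{\theta,\mu}$. Your explicit invocation of Danskin's theorem, the verification of its hypotheses, and the log-sum-exp sanity check merely spell out what the paper leaves implicit.
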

\begin{proof}
Fix $x \in \R^d$. Indeed, $f^\mu_\theta(x)$ (viewed as a function of $\theta$) is a maximum of functions, which are linear in $a_j$'s and $b_j$'s. Note that $f^\mu_\theta(x)$ admits the representation $$f_\theta^\mu(x) =  \sum_{j=1}^k \hat w_j^{\theta,\mu} (a_j^\top x) + \sum_{j=1}^k \hat w_j^{\theta,\mu} b_j - \mu \rho(\hat{w}^{\theta,\mu}).$$ The result now follows because $f^\mu_\theta(x)$ is differentiable in $\theta$ since $\hat w^{\theta,\mu}$ is unique (note that $\langle A[x,1],w \rangle - \mu \rho(w)$ is a strongly convex function in $w$).
\end{proof}

We now specialize in the two prox functions used before and give explicit expressions for $\nabla_{\theta_i} g_{i,\theta_i}^\mu(x)$, for $i = 1,2$.

\begin{itemize}
 \item {\bf Squared Error prox function:}
The approximation $g_{i,\theta_i}^\mu$ with squared error prox function makes use of the projected vector $\hat{w}^{\theta,\mu}$ maximizing~\eqref{eq:MaxQ_p} (see the discussion after~\eqref{eq:michelot}). It is thus straightforward to apply Lemma~\ref{lemma_gradient}, leading to
$$\nabla_{\theta_i} g_{i,\theta_i}^\mu(x) = [x,1] \otimes \hat{w}^{\theta_i,\mu},$$ where
for two vectors $u=(u_1,\ldots,u_p) \in \R^p$ and $v=(v_1,\ldots,v_q) \in \R^q$
we define
$u \otimes v := \left( u_1 v_1,\ldots,u_p v_1, u_1 v_2, \ldots, u_p v_2,u_1 v_3, \ldots \right)$
and
$\hat{w}^{\theta,\mu}$ is as defined in Lemma~\ref{lemma_gradient}.

\item {\bf Entropy prox function:}
Although Lemma~\ref{lemma_gradient} gives an explicit derivative of each $g_{i,\theta_i}^\mu$ for $i=1,2$, the vector $\hat{w}^{\theta,\mu} \in Q_p$ maximizing~\eqref{eq:MaxQ_p} is non-trivial to compute.
Instead, the closed form expression
$$g_{i,\theta_i}^\mu(x) = \mu \log \left( \frac{1}{k_i} \sum_{j=1}^{k_i} e^\frac{a_{i,j}^\top x+b_{i,j}}{\mu} \right)$$
of the smooth approximation of $g_{i,\theta_i}$ given in (\ref{eq:formular_entropy}) can be differentiated directly.
This leads to $$\nabla_{\theta_i} g_{i,\theta_i}^\mu(x) = \Vert r_{i,\theta_i} \Vert_{k_i}^{-1} \cdot \left[ x \otimes r_{i,\theta_i}, r_{i,\theta_i} \right],$$
where $r_{i,\theta_i} = \left( e^\frac{a_{i,1}^\top x+b_{i,1}}{\mu}, \ldots, e^\frac{a_{i,k_i}^\top x+b_{i,k_i}}{\mu} \right)$
and $\Vert \cdot \Vert_{k_i}$ is the $l_1$-norm in $\R^{k_i}$.
\end{itemize}

Once the computation of the Jacobian matrix is completed, we use a hill-climbing optimization technique (quasi-Newton method) with random initial starting value to minimize $M_n^\mu(\theta)$. Details are given in Algorithm~\ref{algorithm_smoothing}.
\begin{algorithm}
\label{algorithm_smoothing}
\caption{\texttt{Computation of the LSE for the smoothed PWA function}}
\SetKwInOut{Input}{input}
\SetKwInOut{Output}{output}
\SetKwFor{Loop}{repeat}{}{end}
\Input{data points $\{ (X_i,Y_i) \}_{i=1}^n$, number of planes $k_1,k_2 \in \N$, smoothing parameter $\mu>0$, tolerance $\tau>0$ for convergence of quasi-Newton method}
\Output{estimate $\hat{\theta}$}
Determine $m_0 \in \N$ such that $2^{m_0} \mu > 1$\;
Sample random initial starting value $\hat{\theta}_0=(a_{1,1},\ldots,a_{1,k_1},b_{1,1},\ldots,b_{1,k_1},a_{2,1},\ldots) \in [-r,r]^{(k_1+k_2)(d+1)}$ for some $r>0$\;
\label{algo:initial}
\For{$m \leftarrow 0$ \KwTo $m_0$}{
Set $\mu_m := 2^{m_0-m} \mu$\;
Perform quasi-Newton minimization of $M_n^{\mu_m}(\theta)$ with initial starting value $\hat{\theta}_m$, gradient $J_n^{\mu_m}(\theta)$ and tolerance $\tau$; if the quasi-Newton method fails to converge (within a pre-set number of steps) then restart from line~\ref{algo:initial}\;
\label{algo:minimization}
Set minimum found by quasi-Newton step as $\hat{\theta}_{m+1}$\;
}
\Return{$\hat{\theta}:=\hat{\theta}_{m_0+1}$};
\end{algorithm}

The idea behind Algorithm~\ref{algorithm_smoothing} is as follows.
Due to the fact that the degree of smoothness of
$g_{\theta}^\mu$ (and hence of $M_n^\mu(\theta)$) decreases as $\mu$ vanishes, minimizing $M_n^\mu(\theta)$ becomes increasingly challenging as $\mu \to 0$. To overcome this problem, we propose to iteratively refine the least squares solution by starting with a large initial value for the smoothness parameter ($\mu_0>1$ in Algorithm~\ref{algorithm_smoothing}; the value $1$ is an arbitrary choice) and by decreasing the smoothness parameter by a factor of two in every iteration. For this we first determine a $m_0 \in \N$ such that $2^{m_0} \mu > 1$, thus making sure that after $m_0$ iterations, an estimate of $\theta$ for the desired value $\mu$, chosen by the user, is obtained. The optimization itself is carried out using a standard quasi-Newton scheme such as \textit{BFGS} \citep{Broyden1970,Fletcher1970,Goldfarb1970,Shanno1970}. In each iteration $m$,
the estimate $\hat{\theta}_m$ from the last iteration serves as a new initial value for the next call of the minimization (quasi-Newton) method in line~\ref{algo:minimization} with decreased value of $\mu$ and a user-specified tolerance $\tau$ (typically of the order of $10^{-5}$).
Alternative criteria for termination can also be employed.
If any instance of the quasi-Newton method fails to converge or shows numerical instabilities
(for instance due to singularity of the Jacobian matrix), the whole algorithm is restarted using a new random initial value.
The estimate $\hat{\theta}$ returned by Algorithm~\ref{algorithm_smoothing}
(that is the estimate corresponding to the desired degree of smoothness $\mu$) is the minimum found in last minimization (line~\ref{algo:minimization}) for $\mu_{m_0}=\mu$.

\section{Statistical properties of the LSE: asymptotic normality and efficiency}
\label{section_asymptotic_results}

\subsection{The unsmoothed case}
\label{subsection_unsmoothed}
In this section we study the statistical properties of the LSE (obtained from the smoothed PWA function) discussed in Section~\ref{section_algorithm}. To keep the presentation simple and the technical arguments less cumbersome, we  consider the case when $k_1=2$ and $k_2 = 0$, i.e., the regression function $g_\theta$ can be expressed as
\begin{equation}\label{eq:g_Cvx}
g_\theta(x) = \max\{\alpha^\top x + \gamma, \beta^\top x+ \phi\} \quad \mbox{for }
\theta = (\alpha,\beta, \gamma, \phi) \in \R^{2d+2}.
\end{equation}
Let $\{(X_i,Y_i)\}_{i=1}^n$ be i.i.d.\ (having joint distribution $P$ on $\R^d \times \R$) from the regression model
$$Y = g_\theta(X) + \eps,$$
where $Y$ is the response variable, $X \in \X \subset \R^d$ is the predictor, $\eps$ is the unobserved error such that $\E(\eps|X) = 0$ almost everywhere (a.e.) and has finite variance $\sigma^2$, and $g_\theta$ has the form given in~\eqref{eq:g_Cvx}. We assume that the unknown parameter $\theta = (\alpha,\beta, \gamma, \phi) \in \Theta$, where $\Theta$ is a compact set in $\R^{2d+2}$.

Let $\theta_0$ be the true value of the parameter $\theta$, which we assume lies in the interior of $\Theta$. Our goal is to estimate $\theta_0$ from the observed data by using the method of least squares:
\begin{align}
\hat{\theta}_n := \arg \min_{\theta \in \Theta} \frac{1}{n} \sum_{i=1}^n [Y_i - g_\theta(X_i)]^2.
\label{eq:P}
\end{align}

Computing the least squares estimate (\ref{eq:P}) is challenging in practice.
This is due to the lack of smoothness of the functions in the class $\mathcal{G} := \{g_\theta: \theta \in \Theta\}$ under consideration. To remedy this problem,
the following subsection investigates the smoothed approximation of the class $\mathcal{G}$ of \cite{Nesterov2005}
which allows to compute a gradient for any $g_\theta \in \mathcal{G}$ and hence the use of quasi-Newton methods to compute
the least squares estimate (\ref{eq:P}).

We first aim to establish that the least squares estimator of $\theta_0$ for the unsmoothed
class of functions $\mathcal{G}$ is $\sqrt{n}$ consistent and asymptotically normal.
This result is summarized in the next lemma and relies on the following assumption.

\begin{assumption}[Moment conditions]
\label{as:1}
$\E[\|X\|] < \infty, \;  \E[|Y|\|X\|] < \infty,\; \E[Y^2] < \infty.$
We further assume that $X$ does not put all its mass on any hyperplane in $\R^d$. 
\end{assumption}

\begin{lemma}
\label{lemma_unsmoothed}
Under Assumption~\ref{as:1}, the following statements hold true.
\begin{enumerate}
  \item $\hat{\theta}_n-\theta_0 = O_P(n^{-1/2})$.
  \item As $n \rightarrow \infty$,
$\sqrt{n}(\hat{\theta}_n - \theta_0)$ converges to a multivariate $\text{Normal}(0,V^{-1} W V^{-1})$ distribution, where
\begin{align*}
W &= 4 \sigma^2 \int\limits_{g_{\theta_0}(x)=\alpha_0^\top x + \gamma_0}
\begin{pmatrix} x x^\top & x & 0 & 0\\ x^\top & 1 & 0 & 0\\ 0&0&0&0\\ 0&0&0&0 \end{pmatrix}dP(x)\\
&+ 4 \sigma^2 \int\limits_{g_{\theta_0}(x)=\beta_0^\top x + \phi_0}
\begin{pmatrix} 0&0&0&0\\ 0&0&0&0\\ 0 & 0 & x x^\top & x\\ 0 & 0 & x^\top & 1 \end{pmatrix}dP(x)
\end{align*}
and $V = W/(2\sigma^2)$.
\end{enumerate}
\end{lemma}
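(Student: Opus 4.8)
The plan is to read \eqref{eq:P} as an M-estimation problem with a locally Lipschitz but non-smooth criterion and to invoke the standard master theorem for M-estimators whose population objective is twice differentiable at $\theta_0$ even though the individual summands are not (the argument underlying Theorem~5.23 of van der Vaart's \emph{Asymptotic Statistics}). Write $m_\theta(x,y)=(y-g_\theta(x))^2$, $M_n(\theta)=\frac1n\sum_{i=1}^n m_\theta(X_i,Y_i)$ and $M(\theta)=\E[m_\theta(X,Y)]$. Since $\E(\eps\mid X)=0$, a first reduction gives $M(\theta)-M(\theta_0)=\E[(g_\theta(X)-g_{\theta_0}(X))^2]\ge 0$, so $\theta_0$ minimizes $M$. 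I would first prove consistency, $\hat\theta_n\to_P\theta_0$, and then feed it into the master theorem, which delivers the $\sqrt n$ rate of part~1 and the limit law of part~2 simultaneously through a single linear (Bahadur) representation.

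For consistency I would show that $\{m_\theta:\theta\in\Theta\}$ is Glivenko--Cantelli. Over the compact set $\Theta$ the map $\theta\mapsto g_\theta(x)$ is Lipschitz with constant of order $\|x\|+1$, hence $\theta\mapsto m_\theta(x,y)$ is Lipschitz with an envelope controlled by Assumption~\ref{as:1}; this gives $\sup_{\theta\in\Theta}|M_n(\theta)-M(\theta)|\to_P 0$. Together with the fact that $\theta_0$ is a well-separated point of minimum of $M$ --- which holds because, modulo the label-swap symmetry $(\alpha,\gamma)\leftrightarrow(\beta,\phi)$, the nondegeneracy of $X$ in Assumption~\ref{as:1} forces $\E[(g_\theta-g_{\theta_0})^2]>0$ whenever $\theta\neq\theta_0$ --- the usual argmin-consistency theorem yields $\hat\theta_n\to_P\theta_0$, working in a neighbourhood of $\theta_0$ to break the relabelling ambiguity.

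To upgrade this to asymptotic normality I would verify the three hypotheses of the master theorem. (i) For $P$-almost every $(x,y)$ the map $\theta\mapsto m_\theta(x,y)$ is differentiable at $\theta_0$, with $\dot m_{\theta_0}(x,y)=-2\eps\,\nabla_\theta g_{\theta_0}(x)$, where $\nabla_\theta g_{\theta_0}(x)$ places $[x,1]$ in the $(\alpha,\gamma)$-block and zeros in the $(\beta,\phi)$-block on the open region $\{\alpha_0^\top x+\gamma_0>\beta_0^\top x+\phi_0\}$, and vice versa on its complement; the sole kink sits on the separating hyperplane, which is $P$-null by nondegeneracy of $X$. (ii) The local Lipschitz bound with square-integrable constant holds by Assumption~\ref{as:1}. (iii) $M$ admits the expansion $M(\theta)=M(\theta_0)+\tfrac12(\theta-\theta_0)^\top V(\theta-\theta_0)+o(\|\theta-\theta_0\|^2)$ with $V$ nonsingular. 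Granting these, the theorem produces $\sqrt n(\hat\theta_n-\theta_0)=-V^{-1}\tfrac{1}{\sqrt n}\sum_{i=1}^n\dot m_{\theta_0}(X_i,Y_i)+o_P(1)$, so the central limit theorem gives the $\text{Normal}(0,V^{-1}WV^{-1})$ limit with $W=\E[\dot m_{\theta_0}\dot m_{\theta_0}^\top]=4\,\E[\eps^2\,\nabla_\theta g_{\theta_0}(X)\nabla_\theta g_{\theta_0}(X)^\top]$. Using $\E[\eps^2\mid X]=\sigma^2$ and the block structure of $\nabla_\theta g_{\theta_0}$, and noting $[x,1][x,1]^\top=\begin{pmatrix}xx^\top & x\\ x^\top & 1\end{pmatrix}$, reproduces exactly the two-integral formula for $W$; one then checks $W=2\sigma^2V$, i.e.\ $V=W/(2\sigma^2)$, so the sandwich collapses to $2\sigma^2V^{-1}$.

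The main obstacle is hypothesis~(iii): producing the quadratic expansion of $M$ in the presence of the kink. On the interiors of the two half-spaces $g_\theta$ is affine in $\theta$, so there $(g_\theta-g_{\theta_0})^2$ is exactly quadratic in $\theta-\theta_0$ and contributes the two displayed integrals, giving the claimed $V$. The delicate piece is the thin switching slab around the separating hyperplane in which the active affine component changes as $\theta$ moves away from $\theta_0$: there $g_\theta-g_{\theta_0}=O(\|\theta-\theta_0\|)$ while the slab has $P$-measure $O(\|\theta-\theta_0\|)$, so its contribution to $M(\theta)-M(\theta_0)$ is $O(\|\theta-\theta_0\|^3)=o(\|\theta-\theta_0\|^2)$ and leaves the Hessian untouched --- provided $X$ does not concentrate near the hyperplane, which is exactly where the nondegeneracy (absolute continuity) of $X$ enters. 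Nonsingularity of $V$ follows because each diagonal block $\int_{R_i}[x,1][x,1]^\top\,dP$ is positive definite once $X$ restricted to the full-dimensional region $R_i$ charges no hyperplane. Making the slab estimate and the positive definiteness of $V$ precise under Assumption~\ref{as:1} is the crux; the remaining steps are routine bookkeeping once these are established.
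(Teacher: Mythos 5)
Your proposal is correct in outline but follows a genuinely different route from the paper. The paper's entire proof of Lemma~\ref{lemma_unsmoothed} is a two-line citation: the $\sqrt{n}$ rate is obtained ``similarly to Proposition 6.4.4'' of \cite{vdGeer1988} and the limit law ``along the same argument as Example 6.6'' there; that is, the authors outsource both claims to van de Geer's entropy-based theory of least squares over function classes, in which two-phase (segmented) regression appears as a worked example. You instead verify the hypotheses of the classical master theorem for M-estimators with almost-everywhere differentiable, Lipschitz-in-parameter criteria (Theorem~5.23 of van der Vaart's \emph{Asymptotic Statistics}): a.e.\ differentiability of $\theta\mapsto m_\theta$ at $\theta_0$ (the kink sitting on a $P$-null hyperplane), a square-integrable Lipschitz envelope, and a second-order expansion of $M$ at $\theta_0$ with nonsingular $V$, with the kink's contribution relegated to the remainder. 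Your route is self-contained, yields the Bahadur representation $\sqrt{n}(\hat\theta_n-\theta_0)=-V^{-1}n^{-1/2}\sum_{i=1}^n \dot m_{\theta_0}(X_i,Y_i)+o_P(1)$ explicitly, and is structurally parallel to the paper's own appendix treatment of the smoothed estimator (rate theorem plus Donsker theorem for changing classes from \cite{vdVaartWellner2000}), so the two arguments could share ingredients such as Lemma~\ref{lemma_gLipschitz}; the paper's citation route buys brevity and avoids handling the kink pointwise. Incidentally, for the expansion of $M$ you do not need the switching region to have measure $O(\|\theta-\theta_0\|)$: since $|g_\theta(x)-g_{\theta_0}(x)|\le(\|x\|+1)\|\theta-\theta_0\|$ holds on that region as well, dominated convergence already gives an $o(\|\theta-\theta_0\|^2)$ remainder once the kink set is $P$-null, so no density bound is required.

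Two caveats, which affect the lemma as stated at least as much as your argument. First, the master theorem requires the Lipschitz envelope $F(x,y)=(\|x\|+1)\left(2|y|+C(\|x\|+1)\right)$ to lie in $L_2(P)$, i.e., essentially $\E[\|X\|^4]<\infty$ and $\E[Y^2\|X\|^2]<\infty$, whereas Assumption~\ref{as:1} literally provides only $\E[\|X\|]<\infty$, $\E[|Y|\|X\|]<\infty$, $\E[Y^2]<\infty$; note, however, that Assumption~\ref{as:1} is already too weak for the matrix $W$ in the lemma to be finite (that needs $\E[\|X\|^2]<\infty$), so some strengthening is implicit in the paper too. Second, ``$X$ does not put all its mass on any hyperplane'' does not by itself make the kink hyperplane $P$-null, nor does it force both affine pieces to be active with positive probability; both are needed for your well-separation/identifiability step (modulo the label swap) and for positive definiteness of $V$. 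You correctly single these out as the crux, but be aware that they require reading Assumption~\ref{as:1} more strongly than written.
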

\begin{proof}
The rates of convergence can be calculated similarly to Proposition 6.4.4 in \cite{vdGeer1988}.
The limiting distribution follows along the same argument as the one in Example 6.6 of \cite{vdGeer1988}.
\end{proof}

\subsection{Asymptotic results for the class of smoothed PWAs}
\label{subsection_smoothed}
We aim to compute the LSE in (\ref{eq:P}).
However,
as $g_\theta$ is a non-smooth function (of $\theta$) we consider a smooth surrogate of $g_\theta$.
For $\mu >0$ let us define the following smooth approximation of $g_\theta$:
$$g_\theta^\mu(x) = \mu \log \left(e^{\frac{\alpha^\top x + \gamma}{\mu}} +
e^{\frac{\beta^\top x+ \phi}{\mu}}\right) - \mu \log 2
\quad \mbox{for } \theta = (\alpha,\beta, \gamma, \phi) \in \Theta \subset \R^{2d+2}.$$
Here are a few important facts about $g_\theta$ and $g_\theta^\mu$ \citep{Nesterov2005}:

\begin{itemize}
  \item[(i)] $\sup_{x \in \R^d} \left| g_\theta(x) - g_\theta^\mu(x) \right| \le \mu \log 2$, see \eqref{eq:EntProxApp}.
  \item[(ii)] The function $g_\theta^\mu$ is continuously differentiable with a gradient that is Lipschitz,
  given by $\mu^{-1} \max_{\Vert x \Vert_{d+1}=1} \langle [\alpha,\gamma],x \rangle^2 +  \langle [\beta,\phi],x \rangle^2$
  (see Lemma~\ref{lemma_nesterov}).
\end{itemize}

Suppose that $\{\mu_n\}_{n \ge 1}$ is a sequence of positive numbers such that $\mu_n = o(1)$.
Similarly to \eqref{eq:P},
we estimate $\theta_0$ using the method of least squares: 
\begin{equation}
	\hat \theta_n := \arg \min_{\theta \in \Theta} \frac{1}{n} \sum_{i=1}^n [Y_i - g_\theta^{\mu_n}(X_i)]^2.
\end{equation}
For the asymptotic results presented in the following theorem to hold true, amongst others,
it is necessary that the amount of smoothing
decreases at a $o \left( n^{-1/2} \right)$ rate, summarized in the next two assumptions:

\begin{assumption}[Parameter space]
\label{as:2}
Let $\Theta \subset \R^{2 d + 2}$ be a compact set such that $\theta_0 = (\alpha_0,\beta_0, \gamma_0, \phi_0)$
belongs to the interior of $\Theta$.
We assume $(\alpha_0,\beta_0) \ne (\gamma_0, \phi_0)$.
\end{assumption}

\begin{assumption}[Order of $\mu_n$]
\label{as:3}
Suppose $\{\mu_n\}_{n \ge 1}$ is a sequence of constants such that $\mu_n = o(n^{-1/2})$.
\end{assumption}

The main result is then summarized in the next theorem.
\begin{theorem}
\label{theorem_entropy_rates}
Under Assumptions~\ref{as:1}--\ref{as:3}, the following holds true.
\begin{enumerate}
  \item The least squares estimator is consistent and satisfies $\hat{\theta}_n-\theta_0= O_P(n^{-1/2})$.
  \item We have $$n^{1/2}(\hat \theta_n - \theta_0) \stackrel{d}{\to} N_d(0, V^{-1} W V^{-1}),$$ where 
$V := 2 P[\dot{g}_{\theta_0} \dot{g}_{\theta_0}^\top]$
and $W = P(\dot{m}_{\theta_0} \dot{m}_{\theta_0}^\top)  = 4 P[\eps^2 \dot{g}_{\theta_0} \dot{g}_{\theta_0}^\top].$
In particular, when $\eps$ is independent of $X$ with variance $\sigma^2 >0$, then $W = 2 \sigma^2 V$.
\end{enumerate}
\end{theorem}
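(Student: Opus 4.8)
The plan is to exploit that, because $g_\theta^{\mu_n}$ is smooth in $\theta$, $\hat\theta_n$ is a \emph{Z-estimator}: it solves an exact first-order condition, so I can linearise it and show it inherits the asymptotic representation of the (infeasible) unsmoothed LSE of Lemma~\ref{lemma_unsmoothed}. Write $m_\theta^\mu(x,y):=(y-g_\theta^\mu(x))^2$ with score $\dot m_\theta^\mu=-2(y-g_\theta^\mu(x))\,\dot g_\theta^\mu(x)$, and let $m_\theta,\dot m_\theta$ denote the corresponding unsmoothed objects (defined $P$-a.e.). By Lemma~\ref{lemma_gradient}, $\dot g_\theta^\mu(x)=(\hat w_1 x,\hat w_2 x,\hat w_1,\hat w_2)$ with softmax weights $\hat w_1,\hat w_2\in[0,1]$, $\hat w_1+\hat w_2=1$; the single decisive fact I would extract is the resulting $(\theta,\mu)$-uniform envelope $\|\dot g_\theta^\mu(x)\|\le C(\|x\|+1)$, which under the moment conditions of Assumption~\ref{as:1} is square-integrable and reduces every empirical-process step below to an ordinary Donsker argument. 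Throughout, $P$ is population expectation, $\mathbb{P}_n$ the empirical measure, and $\G_n:=\sqrt n(\mathbb{P}_n-P)$.

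\emph{Consistency and rate.} First I would note that the uniform approximation bound (i), $\sup_x|g_\theta-g_\theta^{\mu_n}|\le\mu_n\log 2$, makes $M_n^{\mu_n}$ and $M_n$ uniformly close, so both converge uniformly to $M(\theta):=P[(Y-g_\theta(X))^2]$. Since $Y=g_{\theta_0}(X)+\eps$ with $\E[\eps\mid X]=0$, the cross term vanishes and $M(\theta)-M(\theta_0)=P[(g_\theta-g_{\theta_0})^2]$, which is uniquely minimised at $\theta_0$ and admits the quadratic expansion $\tfrac12(\theta-\theta_0)^\top V(\theta-\theta_0)+o(\|\theta-\theta_0\|^2)$ with $V=2P[\dot g_{\theta_0}\dot g_{\theta_0}^\top]$ (positive definite by the spread condition in Assumption~\ref{as:1} together with both affine pieces of $g_{\theta_0}$ being genuinely present, Assumption~\ref{as:2}). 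Consistency would then follow from the usual argmin theorem, and the $n^{-1/2}$ rate from a standard M-estimation rate bound (cf.\ \cite{vdGeer1988}) in which the quadratic curvature supplies the separation and the smoothing bias contributes only $O(\mu_n)=o(n^{-1/2})$ by Assumption~\ref{as:3}.

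\emph{Asymptotic normality.} Next, from $\mathbb{P}_n\dot m_{\hat\theta_n}^{\mu_n}=0$ (valid by smoothness and interiority of $\theta_0$) I would decompose
\begin{align*}
0=\sqrt n\,\mathbb{P}_n\dot m_{\hat\theta_n}^{\mu_n}
=\G_n\big(\dot m_{\hat\theta_n}^{\mu_n}-\dot m_{\theta_0}^{\mu_n}\big)
+\G_n\,\dot m_{\theta_0}^{\mu_n}
+\sqrt n\,P\dot m_{\hat\theta_n}^{\mu_n}.
\end{align*}
The first term is $o_P(1+\sqrt n\|\hat\theta_n-\theta_0\|)$ by stochastic equicontinuity of the score classes. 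For the second, $\dot g_{\theta_0}^{\mu_n}\to\dot g_{\theta_0}$ pointwise off the crease $\{\alpha_0^\top x+\gamma_0=\beta_0^\top x+\phi_0\}$ --- a $P$-null set by Assumption~\ref{as:1} --- so dominated convergence gives $P\|\dot m_{\theta_0}^{\mu_n}-\dot m_{\theta_0}\|^2\to0$, whence $\G_n\,\dot m_{\theta_0}^{\mu_n}=\sqrt n\,\mathbb{P}_n\dot m_{\theta_0}+o_P(1)\stackrel{d}{\to}N(0,W)$ with $W=4P[\eps^2\dot g_{\theta_0}\dot g_{\theta_0}^\top]$, using $P\dot m_{\theta_0}=-2P[\eps\dot g_{\theta_0}]=0$. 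The drift term expands as $\sqrt n\,P\dot m_{\hat\theta_n}^{\mu_n}=V\sqrt n(\hat\theta_n-\theta_0)+o_P(1+\sqrt n\|\hat\theta_n-\theta_0\|)$ by the curvature above. Combining with the rate yields $\sqrt n(\hat\theta_n-\theta_0)=-V^{-1}\sqrt n\,\mathbb{P}_n\dot m_{\theta_0}+o_P(1)\stackrel{d}{\to}N(0,V^{-1}WV^{-1})$, which collapses to $W=2\sigma^2V$ when $\eps\perp X$, matching Lemma~\ref{lemma_unsmoothed}.

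\emph{Main obstacle.} The hard part will be that the function classes change with $n$ through the vanishing bandwidth $\mu_n$, while their pointwise limits $\dot m_{\theta_0},\dot g_{\theta_0}$ exist only $P$-a.e.\ because of the kink. Both empirical-process claims above --- the stochastic equicontinuity of $\{\dot m_\theta^\mu\}$ and the negligibility $\G_n(\dot m_{\theta_0}^{\mu_n}-\dot m_{\theta_0})=o_P(1)$ --- must therefore be proved for a \emph{triangular array} rather than one fixed Donsker class. I expect the crux to be a uniform-entropy (or bracketing) bound valid simultaneously over $\mu\in(0,\mu_0]$ and over $\theta$ near $\theta_0$; the facts that rescue this are the $(\theta,\mu)$-uniform Lipschitz continuity of $\theta\mapsto\dot g_\theta^\mu(x)$ and the common square-integrable envelope $C(\|x\|+1)$, which together reduce the triangular-array statement to a single Donsker class carrying an $L_2(P)$-convergent sequence of representatives.
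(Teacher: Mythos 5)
Your overall architecture --- consistency and rate via uniform convergence plus a standard rate theorem, then a Z-estimator linearization of the exact first-order condition $\P_n\dot m^{\mu_n}_{\hat\theta_n}=0$ --- is a genuinely different route from the paper's. The paper never differentiates the smoothed criterion at all: it studies the localized process $\tilde\M_n(h) := n\P_n(m^{\mu_n}_{\theta_0+hn^{-1/2}} - m^{\mu_n}_{\theta_0})$, strips off the smoothing at the level of criterion \emph{values} using only the sup-norm bound (the correction is at most $n\mu_n^2(\log 2)^2 + 2n\mu_n(\log 2)\,\P_n[G(X)]\,\|h\|n^{-1/2}$, which vanishes uniformly on $\{\|h\|\le K\}$ precisely because $\sqrt{n}\,\mu_n\to 0$), applies the changing-classes Donsker theorem (Theorem 2.11.23 of van der Vaart \& Wellner, with the Lipschitz-in-parameter bracketing bound of their Theorem 2.7.11) to the \emph{unsmoothed} differences $\sqrt{n}(m_{\theta_0+hn^{-1/2}}-m_{\theta_0})$, and concludes with the argmin continuous mapping theorem. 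This design is deliberate, and it is exactly what your route gives up: by keeping the smoothed scores inside every empirical-process and drift computation, you must control objects whose regularity degenerates as $\mu_n \to 0$.

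Two of your key claims fail or are too thin for this reason. First, the ``decisive fact'' you invoke to dispose of the triangular-array problem --- $(\theta,\mu)$-uniform Lipschitz continuity of $\theta\mapsto\dot g^\mu_\theta(x)$ --- is false: for the entropy smoothing, $\partial^2 g^\mu_\theta/\partial\gamma^2 = \hat w_1\hat w_2/\mu$, which equals $1/(4\mu)$ on the crease, so the Lipschitz constant of the score in $\theta$ blows up like $1/\mu_n$. The envelope bound $\|\dot g^\mu_\theta(x)\|\le C(\|x\|+1)$ is correct, but on its own it does not yield stochastic equicontinuity; one would instead need, e.g., a VC-subgraph argument for the weight class $\{x\mapsto \hat w_1^{\theta,\mu}(x)\}$ (these are monotone transforms of affine functions of $x$, so such an argument is available, but it is a different and more delicate tool than the one you name). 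Second, the drift step ``$\sqrt{n}\,P\dot m^{\mu_n}_{\hat\theta_n} = V\sqrt{n}(\hat\theta_n-\theta_0)+o_P(\cdot)$ by the curvature above'' is a gap: the curvature you established is for the unsmoothed $M$, while the drift involves the smoothed population score, and the naive bridge --- a Taylor expansion of $M_n(\theta):=P[m^{\mu_n}_\theta]$ --- fails because the Hessian term $P[(Y-g^{\mu_n}_{\bar\theta})\nabla^2_\theta g^{\mu_n}_{\bar\theta}]$ carries a factor of order $\|\bar\theta-\theta_0\|/\mu_n$, which need not be bounded under Assumption~\ref{as:3} since $\hat\theta_n-\theta_0 = O_P(n^{-1/2})$ while $\mu_n = o(n^{-1/2})$. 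The repair is to avoid the Hessian entirely: write $P\dot m^{\mu_n}_\theta = -2P[(g_{\theta_0}-g_\theta)\dot g^{\mu_n}_\theta] - 2P[(g_\theta - g^{\mu_n}_\theta)\dot g^{\mu_n}_\theta]$ (the $\eps$-term dies by $\E[\eps|X]=0$), bound the second piece by $O(\mu_n)=o(n^{-1/2})$ uniformly, and compare the first piece with $\nabla M(\theta) = 2P[(g_\theta-g_{\theta_0})\dot g_\theta]$ by dominated convergence, which costs only $o(\|\theta-\theta_0\|)$ because $X$ puts no mass on the crease. With these two repairs your Z-estimator argument does go through and recovers the same limit $N(0,V^{-1}WV^{-1})$ with $W = 2\sigma^2 V$ under independence; as written, however, these are exactly the steps at which the proof would break.
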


\section{Experimental results}
\label{section_experimentals}
This section evaluates the proposed approach (Section~\ref{section_smoothed}) from a numerical perspective.
We start by exemplarily showing how our proposed approach performs in parameter estimation of a broken stick and a broken plane model (Section~\ref{subsection_example_1d2d}).
Further numerical evaluations address the dependence on the smoothing parameter $\mu$ (Section~\ref{subsection_dependence_smoothing_parameter}),
the ability of Algorithm~\ref{algorithm_smoothing} to find successfully minimize (\ref{eq:P}),
as well as an extension to estimation of PWAs with $k=3$ components (Section~\ref{subsection_extension_k3}).
In the entire section we used the implementation of the Newton method
provided by the \texttt{optim} function in $R$, obtained by setting its parameter \textit{method} to ``BFGS''.
Initial values for BFGS (the parameter $r$ in Algorithm~\ref{algorithm_smoothing} were always generated in the interval $[-1,1]$.

\subsection{Examples of parameter estimation in a broken stick and broken plane model}
\label{subsection_example_1d2d}
\begin{figure}
\centering
\begin{minipage}{0.49\textwidth}
\includegraphics[width=\textwidth]{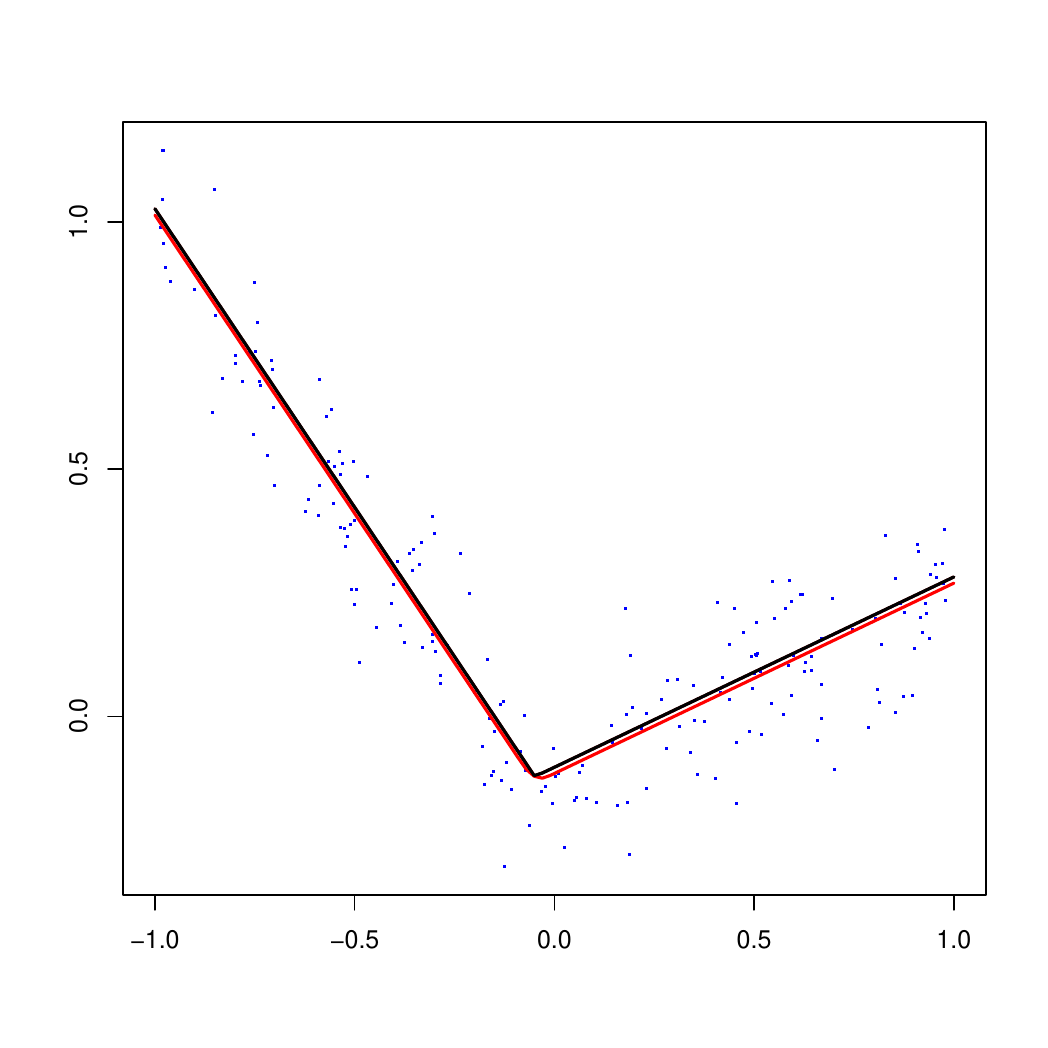}
\end{minipage}~
\begin{minipage}{0.49\textwidth}
\includegraphics[width=\textwidth]{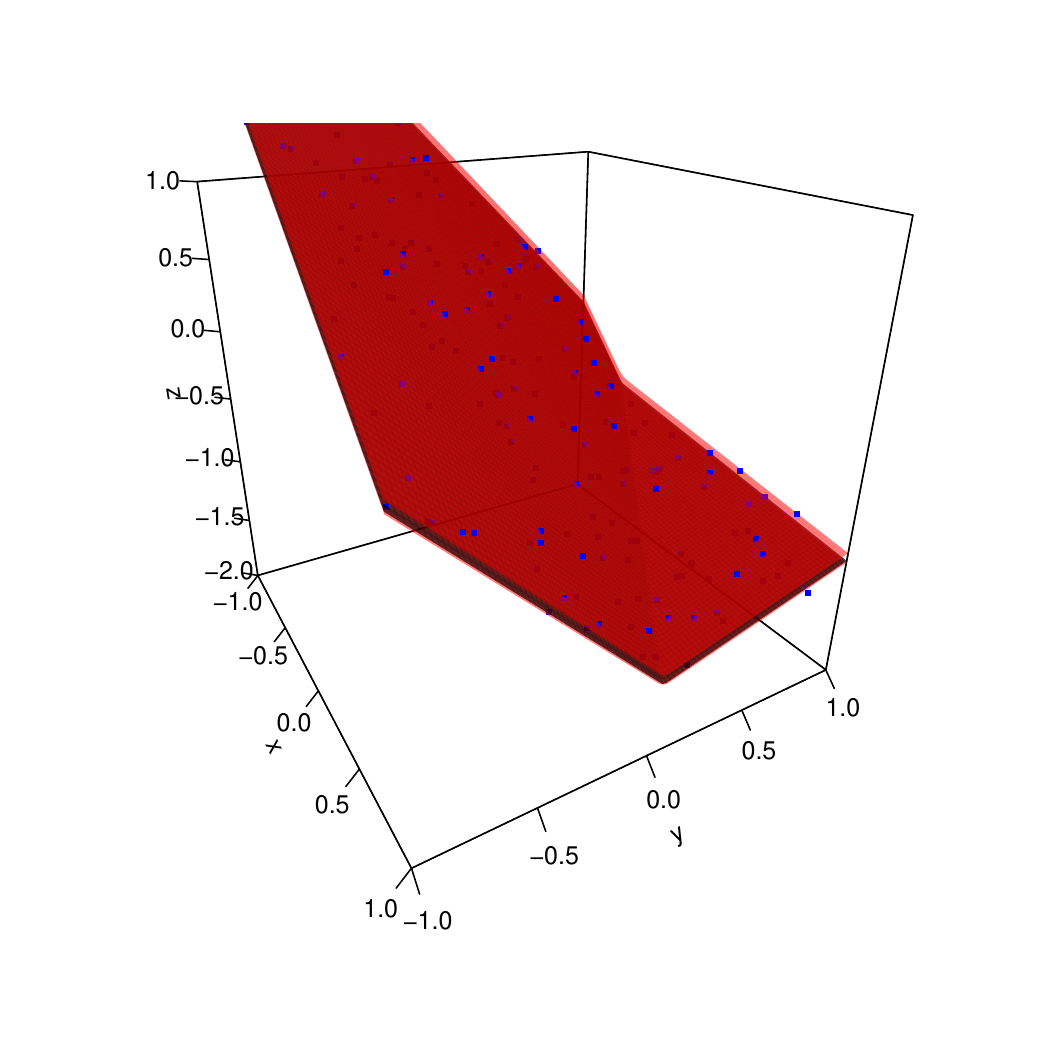}
\end{minipage}
\caption{Random points (blue) generated from a broken line (left) and broken plane (right) model with two components.
True model (darkred), \cite{Nesterov2005} smooth approximation (red) with $\mu=0.1$ and estimated model (black) from the data.}
\label{fig:example1d2d}
\end{figure}
We exemplarily show how Algorithm~\ref{algorithm_smoothing}
can be used in practice to estimate regression parameters.
We generated two parameter vectors, one for a broken stick and one for a broken plane model with two components, respectively.
We then simulated $n=200$ points from each model by uniformly selecting points on the lines (planes)
and by adding iid.\ normal noise with mean $0$ and standard deviation $0.1$.

We used Algorithm~\ref{algorithm_smoothing} as given in Section~\ref{section_algorithm}
in connection with squared error loss smoothing and a smoothing parameter $\mu=0.1$ to compute fitted parameters.

Figure~\ref{fig:example1d2d} shows the true (generated) model (darkred),
its smooth \cite{Nesterov2005} approximation (red) as well as the randomly generated points (blue).
The best fit obtained with Algorithm~\ref{algorithm_smoothing} (black)
shows that the original line (plane) can be reasonably well recovered in both cases.
We will elaborate on the quality and computability of such approximations in the following sections.

\subsection{Comparison of Algorithm~\ref{algorithm_smoothing} to a gradient free method}
\label{subsection_comparison}
\begin{figure}
\centering
\begin{minipage}{0.49\textwidth}
\includegraphics[width=\textwidth]{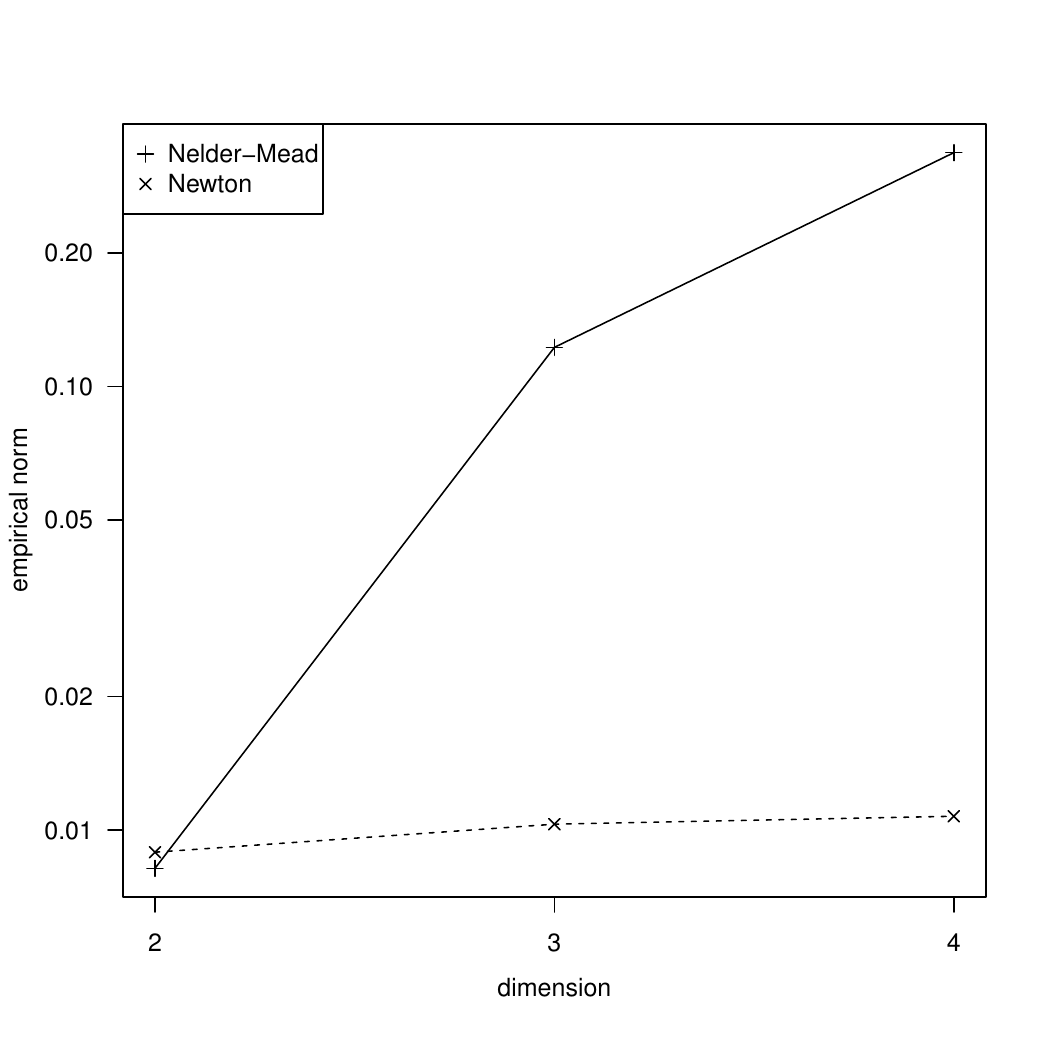}
\end{minipage}~
\begin{minipage}{0.49\textwidth}
\includegraphics[width=\textwidth]{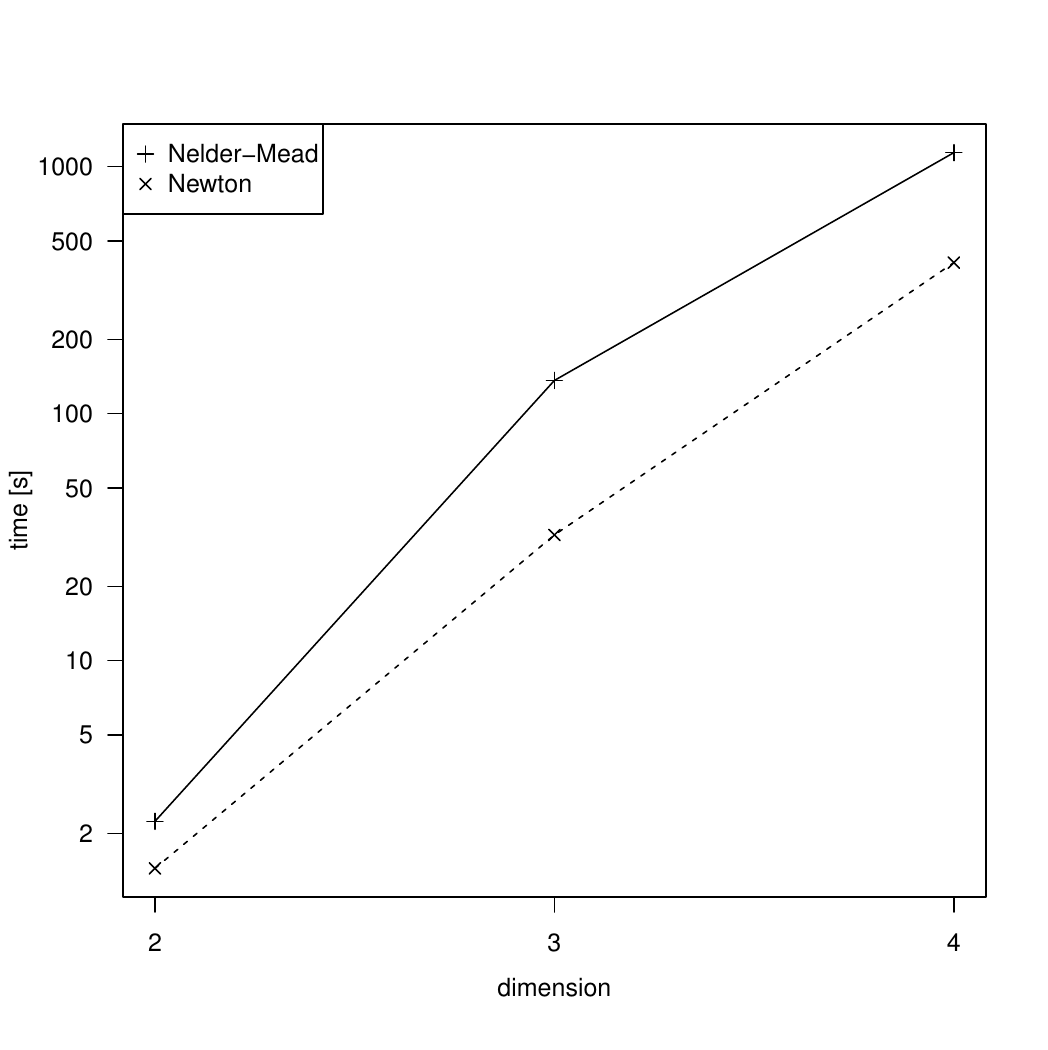}
\end{minipage}
\caption{Comparison of Algorithm~\ref{algorithm_smoothing} with the \cite{NelderMead1965} method
in terms of average empirical norm (left) and computational time (right) for the best fit.
Log scale on the time axis.}
\label{fig:plot_normtime}
\end{figure}
We compare the performance of Algorithm~\ref{algorithm_smoothing} to the popular gradient free method of \cite{NelderMead1965}.
For this we generate two planes in dimensions $d \in \{2,\ldots,4\}$ and make sure that these planes
intersect in the $[-1,1]^d$ cube with at least a $90$ degree angle in order to avoid singular cases.
We then generate $n=10^d$ points from each model in dimension $d$ and apply the \cite{NelderMead1965} method
as well as Algorithm~\ref{algorithm_smoothing} with squared error prox function and fixed smoothing parameter $\mu=0.1$.

We evaluate the quality of the fit using the empirical norm criterion as in Example~\ref{example_1d}.

Figure~\ref{fig:plot_normtime} shows results.
We see that in comparison to a gradient free method, using \cite{Nesterov2005} smoothing to obtain
gradients in Algorithm~\ref{algorithm_smoothing} yields a fraction of the computational runtime as the dimension increases.
The empirical norm increases for both methods as the dimension increases,
although the one of Algorithm~\ref{algorithm_smoothing} seems to consistently stay around one order of magnitude below
the one of the \cite{NelderMead1965} method
(note that the time axis has a log scale).

\subsection{Dependence on the smoothing parameter}
\label{subsection_dependence_smoothing_parameter}
\begin{figure}
\centering
\includegraphics[width=0.5\textwidth]{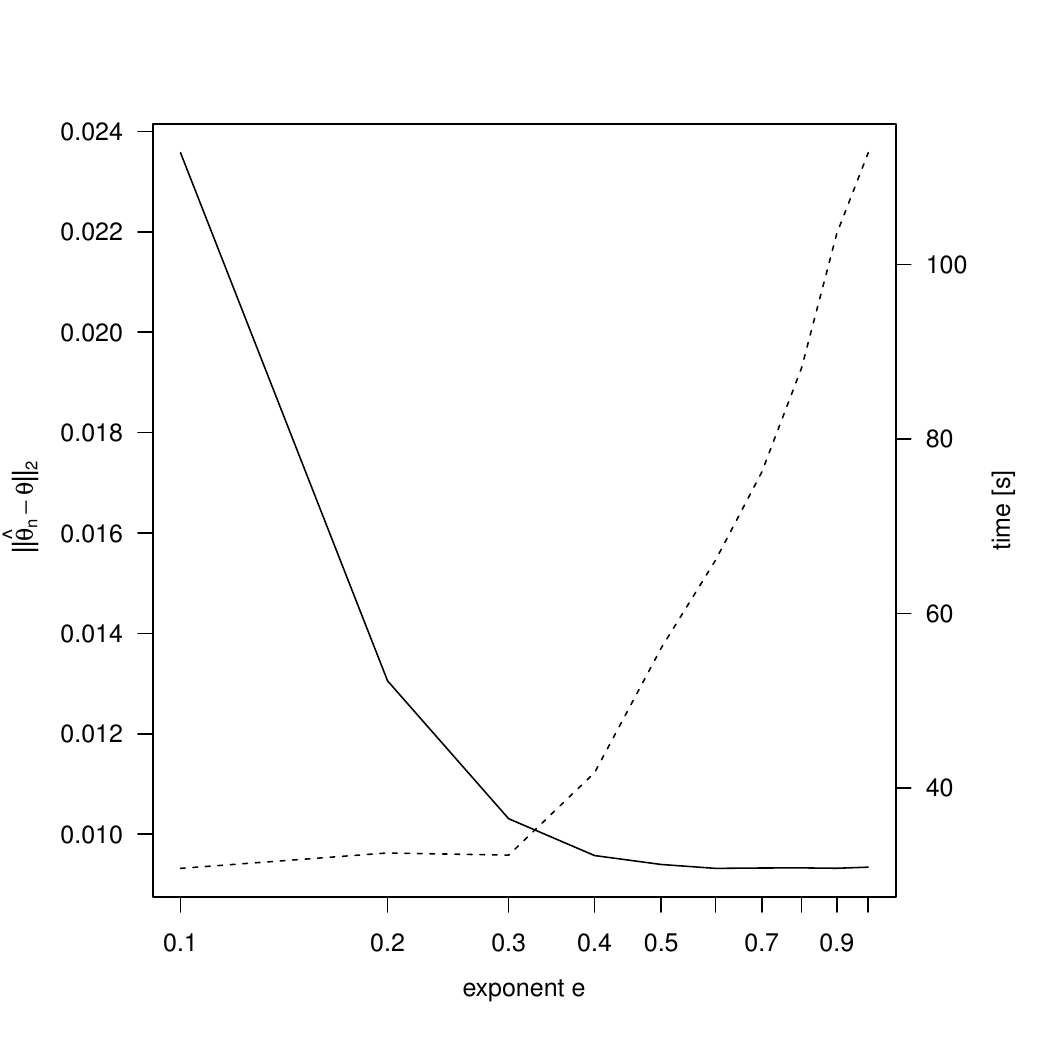}
\caption{Progression of $\left| \hat{\theta}_n-\theta_0\right|$ (solid line) as a function of $\mu=n^{-e}$ and corresponding time (dashed line) required for its computation. The number of points $n=1000$ was fixed and the exponent $e$ was varied.}
\label{fig:plot_mu}
\end{figure}
We further evaluate the dependence of the smoothing parameter on the performance of Algorithm~\ref{algorithm_smoothing}. For this we fix a parameter vector $\theta$ of a plane in two dimensions, generate $n=1000$ points on it and repeatedly fit a two plane segmented model to the data points using Algorithm~\ref{algorithm_smoothing} with squared error prox loss function. The smoothing parameter was chosen as $\mu=n^{-e}$, where the exponent $e \in \{0.1,0.2,\ldots,1\}$ ranged from $0.1$ to $1$ in steps of $0.1$.

For each $\mu$, we repeatedly fit the plane fixed at the start of the experiment a total number of $p=100$ times. From this pool we select the best estimate $\hat{\theta}_n$ measured in terms of the empirical norm and record its deviation $\left| \hat{\theta}_n-\theta \right|$ from $\theta$ as well as the computational time needed to compute $\hat{\theta}_n$.
This procedure is repeated a total number of $100$ times in order to average both the deviation as well as the computational time.

Figure~\ref{fig:plot_mu} shows averages of both $\left| \hat{\theta}_n-\theta \right|$ (solid line) as well as of the computational time (dashed line). As expected, the quality of the recovered plane increases (ie.\ the deviation of the estimate $\hat{\theta}_n$ from the true parameters $\theta$ goes to zero) as the parameter $\mu$ goes to zero. Likewise, the computational runtime to obtain estimates of this quality increases.

\subsection{Assessment of Newton's method and local minimization}
\label{subsection_newton_performance}
\begin{figure}
\centering
\begin{minipage}{0.49\textwidth}
\includegraphics[width=\textwidth]{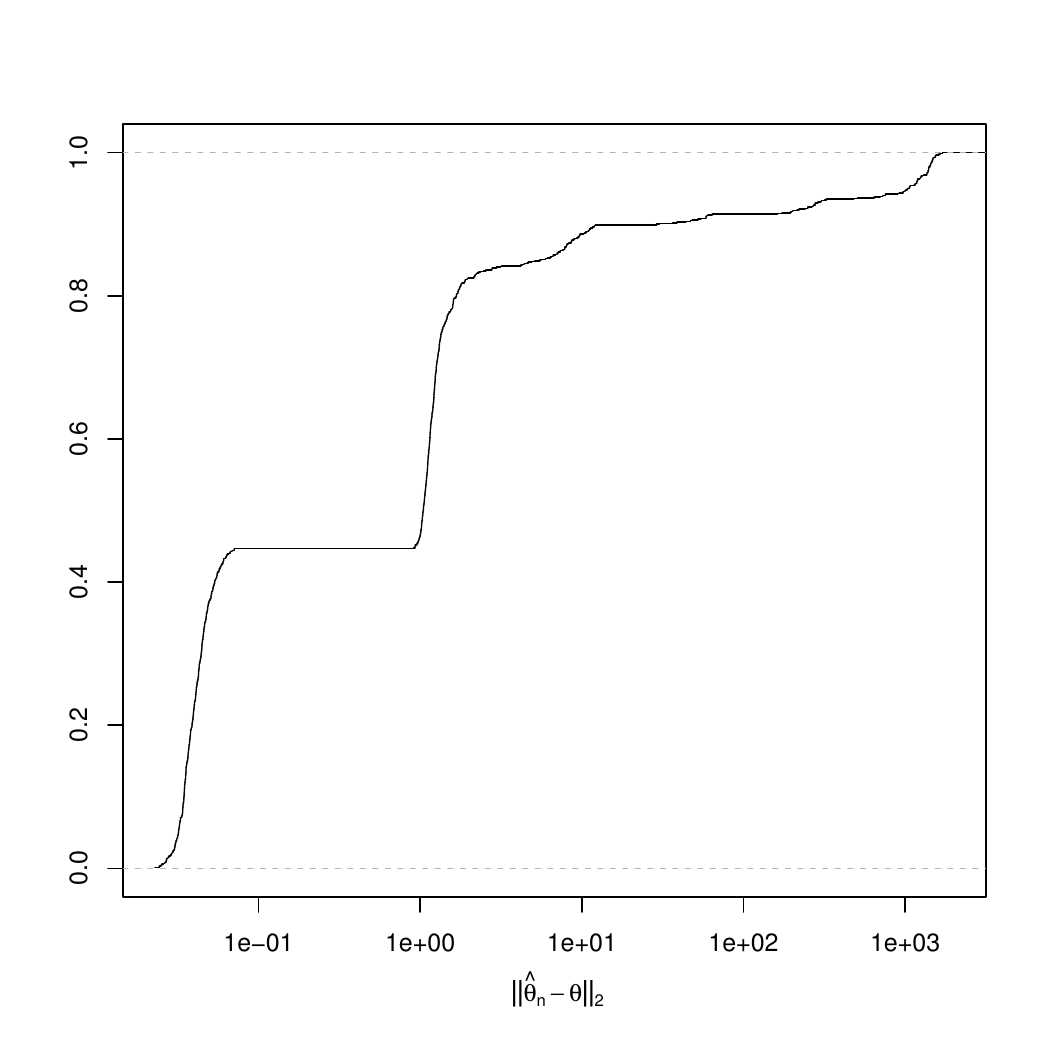}
\end{minipage}~
\begin{minipage}{0.49\textwidth}
\includegraphics[width=\textwidth]{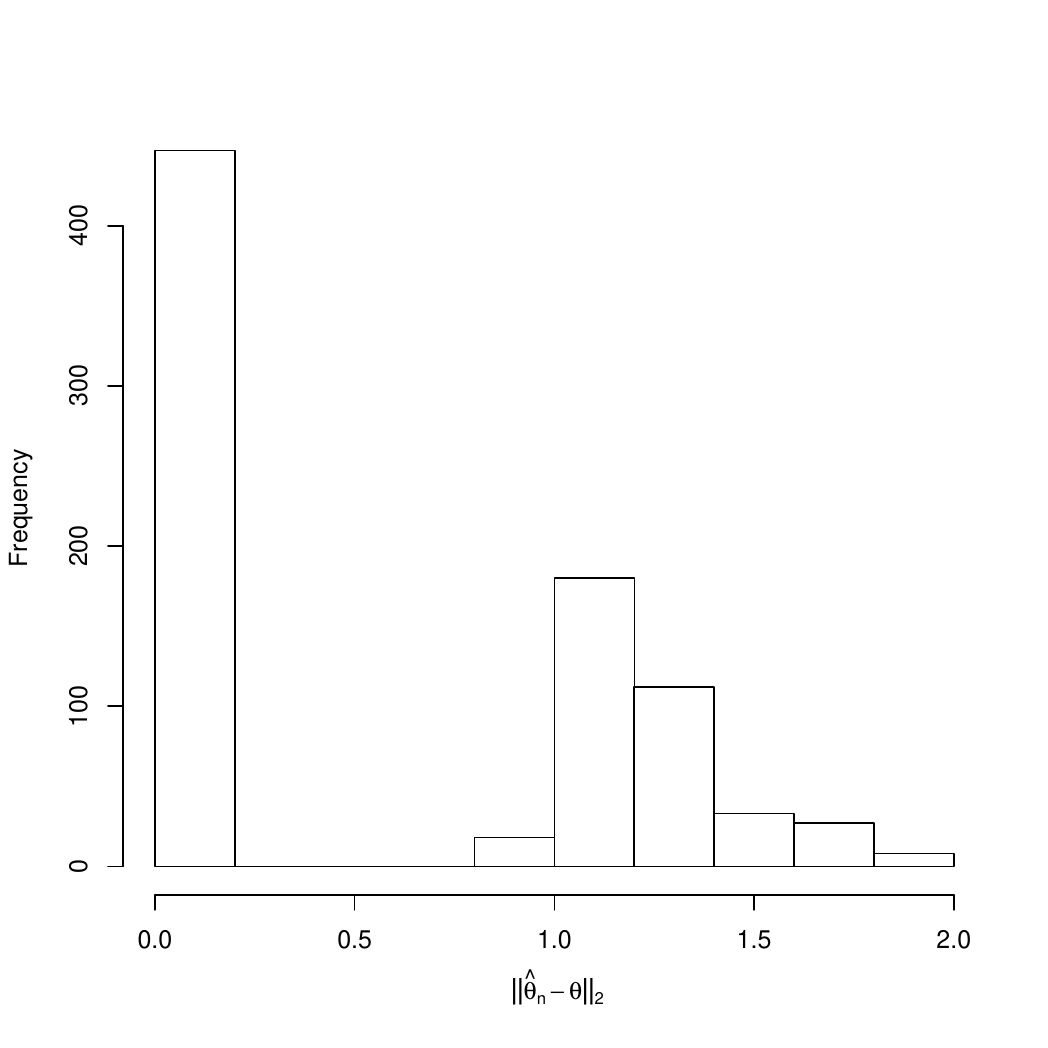}
\end{minipage}
\caption{Broken line regression with Algorithm~\ref{algorithm_smoothing} with squared error prox function and $\mu=0.1$. Results based on $1000$ repetitions.
Left: Ecdf of $\left| \hat{\theta}_n - \theta_0 \right|$, the deviation of the estimated from the generated parameters in the $l_2$ norm. Right: Histogram of $\left| \hat{\theta}_n - \theta_0 \right|$ after truncation at $0.1$.}
\label{fig:hist}
\end{figure}

\begin{figure}
\centering
\includegraphics[width=0.5\textwidth]{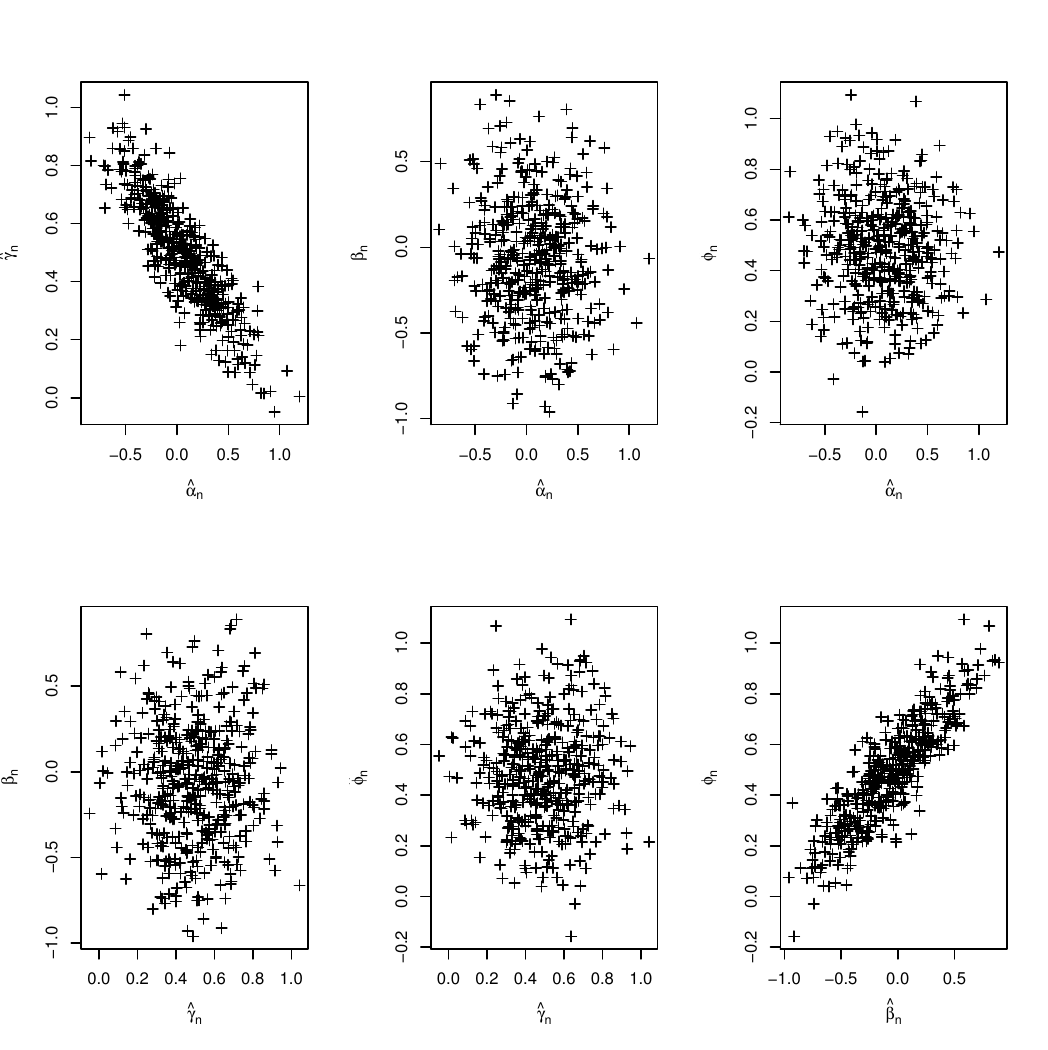}
\caption{Projections of all six pairs of parameter estimates of $\theta=(\alpha,\gamma,\beta,\phi) \in \R^4$.
Plot only shows estimates lying in the truncated range $[-1,1]$.}
\label{fig:multiplot}
\end{figure}

Figure~\ref{fig:hist} investigates how likely the Newton method in Algorithm~\ref{algorithm_smoothing} is able to find the global minimum in the case of a broken line regression (dimension one) with two components.

To this end we fix a broken plane model in one dimension and repeatedly apply Algorithm~\ref{algorithm_smoothing} using \cite{Nesterov2005} smoothing in connection with the squared error loss prox function and smoothing parameter $\mu=0.1$. All results are based on $1000$ repetitions.

Figure~\ref{fig:hist} (left) shows the ecdf of $\left| \hat{\theta}_n - \theta_0 \right|$ based on all $1000$ runs, that is the deviation of the estimated from the generated parameters in the $l_2$ norm. As can be seen from the figure, the Newton method converges to an estimated set of parameters very close to the true parameters in $l_2$ norm roughly in $90\%$ of all cases.
A separate histogram for all estimates $\hat{\theta}_n$ satisfying $\left| \hat{\theta}_nsubsection_extension_k3 - \theta_0 \right| < 0.1$ is displayed in Figure~\ref{fig:hist} (right).

Even though the Newton method converged to a very good approximation in most of the cases, the remaining estimates deviate from the (true) underlying parameters $\theta$ in $l_2$ norm by more than $10^3$.
An investigation of these cases showed that due to the particular implementation of the Newton method in the $R$ function \texttt{optim}, failure of converge results in the Newton run being aborted by some numerical criterion leading to value at the boundary of the parameter domain (of absolute magnitude up $10^3$ to $10^4$).

Figure~\ref{fig:multiplot} shows all six projections of two parameter estimates of $\theta=(\alpha,\gamma,\beta,\phi) \in \R^4$ in the $xy$-plane
for the $1000$ runs already analyzed in Figure~\ref{fig:hist}.
To capture the majority of estimates actually taking values of around zero, Figure~\ref{fig:multiplot} only shows estimates lying in the truncated range of $[-1,1]$.
It can be seen from the plot that a well defined optimum exists.

\subsection{Assessment of coverage probabilites}
\begin{table}[t]
\label{table_coverage}
\centering
\begin{tabular}{l||cc|cc}
Parameter & \multicolumn{2}{c}{\cite{SiegmundZhang1994}} & \multicolumn{2}{|c}{Algorithm~\ref{algorithm_smoothing}}\\
& coverage prob. & length & coverage prob. & length\\
\hline
$\theta_0$	&0.879 &0.169 &0.913 &0.184\\
$a_1$	&0.954 &0.208 &0.963 &0.225\\
$b_1$	&0.959 &0.119 &0.972 &0.128\\
$a_2$	&0.957 &0.219 &0.976 &0.242\\
$b_2$	&0.951 &0.128 &0.969 &0.141
\end{tabular}
\caption{Coverage probabilites and lengths of the confidence intervals for $\theta_0=(a_1,b_1,a_2,b_2)$ and its four components.
Model consisting of two lines in one dimension (one parametrized in $(a_1,b_1)$ and the other in $(a_2,b_2)$).
Exact method of \cite{SiegmundZhang1994} and Algorithm~\ref{algorithm_smoothing}.}
\end{table}
We assess the accuracy of confidence intervals computed for all parameters,
measured in terms of their coverage probabilites and lengths.
To this end, we compare confidence intervals computed with Algorithm~\ref{algorithm_smoothing}
to the ones computed with the exact method of \cite{SiegmundZhang1994}.

Our setup is as follows: We fix a PWA consisting of two lines in one dimension, characterized through a $\theta_0$.
We then repeat the following procedure $R=1000$ times:
\begin{enumerate}
  \item Using Algorithm~\ref{algorithm_smoothing} (with $\mu=0.01$) and the exact method by \cite{SiegmundZhang1994},
we compute the best estimate of $\hat{\theta}_n$, measured with respect to the empirical norm, from a pool of $10$ repetitions.
For each of the $10$ repetitions, we generate $200$ points on the PWA.
For the method of \cite{SiegmundZhang1994}, we determine the change point in $[-1,1]$ of the two lines
via a grid search with $1000$ equidistant points in $[-1,1]$.

  \item For the best estimate determined in the previous step,
we store $\hat{\theta}_n$, the set $P$ of $200$ points generated on the PWA
as well as the empirical covariance matrix $C$ computed with $\hat{\theta}_n$ and $P$.
The matrix $C$ is computed using plug-in estimates for the integrals in matrix $W$ of Lemma~\ref{lemma_unsmoothed},
where the points in $P$ are divided up into two subsets corresponding
to the two sides of the estimated change-point specified in $\hat{\theta}_n$.

  \item Finally, a normal confidence interval is computed for
the $i$'th component of $\theta_0$ as $\hat{\theta}_n^{(i)} \pm 1.96 \sqrt{C_{i,i}/N_i}$ for all $i \in \{1,\ldots,4\}$,
where $\hat{\theta}_n^{(i)}$ is the $i$'th entry of the estimate $\hat{\theta}_n$ and
$N_i$ is the number of points in $P$ which fall into the line segment defined through $\hat{\theta}_n^{(i)}$.
\end{enumerate}
In each repetition of the above procedure we record
(1) the number of times each of the four confidence intervals contains the truth in $\theta_0$
as well as the number of times all confidence intervals simultaneously contains all entries of $\theta_0$
(this allows us to compute coverage probabilites)
and (2) the length of all confidence intervals.

Results are shown in Table~\ref{table_coverage}.
The table demonstrates that for the individual components of $\theta_0$,
the exact method of \cite{SiegmundZhang1994} yields confidence intervals which keep the $95\%$ coverage
while being only slightly conservative.
Algorithm~\ref{algorithm_smoothing} yields individual confidence intervals
which are slightly too wide, hence resulting in higher lengths than the exact ones and coverage probabilites which exceed $95\%$.
With respect to the simultaneous coverage for the entire vector $\theta_0$,
the method of \cite{SiegmundZhang1994} yields a confidence interval with a lower coverage probability than the one of
Algorithm~\ref{algorithm_smoothing}.

\subsection{Extension to three and more planes}
\label{subsection_extension_k3}
\begin{figure}
\centering
\includegraphics[width=0.5\textwidth]{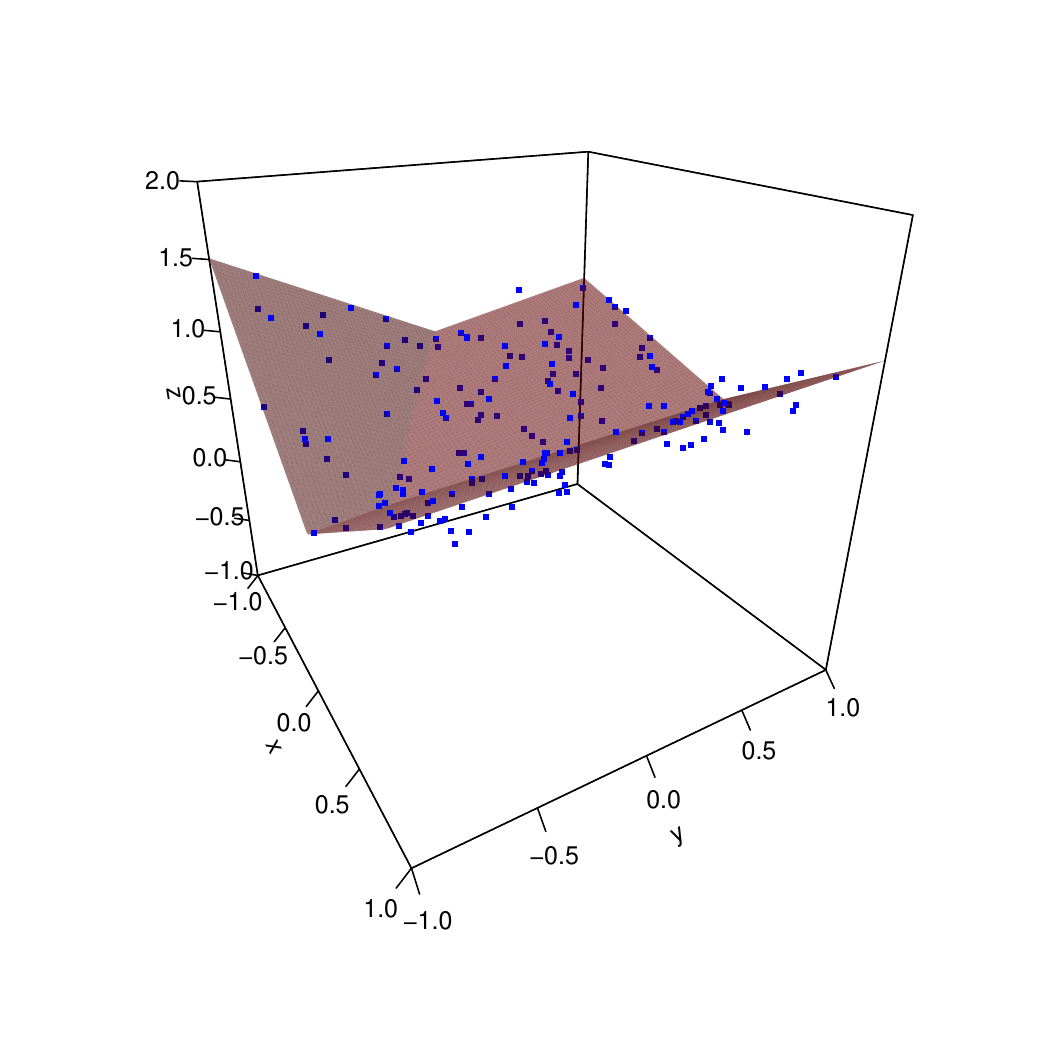}
\caption{Fit of parameters for a segmented plane consisting of three components.
Algorithm~\ref{algorithm_smoothing} with squared error prox smoothed function.
Blue points are generated points from the true model (red), fitted segmented planes in black.}
\label{fig:plot3planes}
\end{figure}
An extension to three planes can be seen in Figure~\ref{fig:plot3planes}.

\newpage
\appendix
\section{Proofs of Section~\ref{subsection_smoothed}}
\label{section_proofs_smoothed_entropy}
This section proves that the least squares estimate of the entropy smoothed functions
$g_\theta^\mu$ are $\sqrt{n}$ consistent and asymptotically normal.
This is the result of Theorem~\ref{theorem_entropy_rates}.
It will be proven by means of the following lemmas and theorems.

\begin{lemma}\label{lem:Consistency}
Under Assumptions~\ref{as:1}-\ref{as:3} we have $\hat \theta_n -\theta_0 = o_p(1)$.
\end{lemma}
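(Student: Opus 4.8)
The plan is to prove consistency by the standard argmin (Wald-type) route: exhibit a population contrast that is uniquely and well-separatedly minimised at $\theta_0$, and show that the smoothed empirical criterion converges to it uniformly over the compact set $\Theta$. Write $M(\theta) := P[(Y - g_\theta(X))^2]$ for the population least-squares criterion and $M_n^{\mu_n}(\theta) := n^{-1}\sum_{i=1}^n (Y_i - g_\theta^{\mu_n}(X_i))^2$ for the smoothed empirical one, so that $\hat\theta_n = \argmin_{\theta\in\Theta} M_n^{\mu_n}(\theta)$. Since $Y = g_{\theta_0}(X) + \eps$ with $\E(\eps\mid X)=0$, the cross term vanishes and $M(\theta) - M(\theta_0) = P[(g_{\theta_0}(X) - g_\theta(X))^2] \ge 0$, with equality iff $g_\theta = g_{\theta_0}$ almost everywhere under the law of $X$. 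Under Assumption~\ref{as:1} (the law of $X$ charges no hyperplane) and Assumption~\ref{as:2} ($(\alpha_0,\gamma_0)\ne(\beta_0,\phi_0)$, so $g_{\theta_0}$ has a genuine kink), the map $\theta\mapsto g_\theta$ is identifiable, so $\theta_0$ is the unique minimiser; compactness of $\Theta$ and continuity of $M$ then give the well-separation $\inf_{\|\theta-\theta_0\|\ge\delta} M(\theta) > M(\theta_0)$ for every $\delta>0$.

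The key reduction is to peel off the smoothing bias uniformly. Using the entropy approximation bound $\sup_x|g_\theta(x) - g_\theta^{\mu_n}(x)| \le \mu_n\log 2$ (fact~(i), cf.~\eqref{eq:EntProxApp}) and the factorisation $(Y-g_\theta^{\mu_n})^2 - (Y-g_\theta)^2 = (g_\theta - g_\theta^{\mu_n})(2Y - g_\theta - g_\theta^{\mu_n})$, I would bound
$$\sup_{\theta\in\Theta}\bigl|M_n^{\mu_n}(\theta) - M_n^0(\theta)\bigr| \le (\mu_n\log 2)\,\frac1n\sum_{i=1}^n\Bigl(2|Y_i| + \sup_{\theta\in\Theta}|g_\theta(X_i)| + \sup_{\theta\in\Theta}|g_\theta^{\mu_n}(X_i)|\Bigr),$$
where $M_n^0$ denotes the unsmoothed empirical criterion. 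On the compact $\Theta$ one has the linear envelope $\sup_\theta|g_\theta(x)| \le C(1+\|x\|)$, and the same for $g_\theta^{\mu_n}$ up to the additive $\mu_n\log2$; hence by the law of large numbers the average is $O_P(1)$ thanks to $\E|Y|<\infty$ and $\E\|X\|<\infty$ (Assumption~\ref{as:1}). Since $\mu_n = o(1)$ by Assumption~\ref{as:3}, this yields $\sup_\theta|M_n^{\mu_n}(\theta) - M_n^0(\theta)| = O_P(\mu_n) = o_P(1)$, so the smoothed and unsmoothed criteria share the same uniform limit.

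It remains to establish the uniform law of large numbers $\sup_{\theta\in\Theta}|M_n^0(\theta) - M(\theta)| \to 0$ in probability. Expanding $(Y - g_\theta(X))^2 = Y^2 - 2Yg_\theta(X) + g_\theta(X)^2$, the first term is $\theta$-free and converges by $\E[Y^2]<\infty$, the cross term has integrable envelope $|Y|\cdot C(1+\|X\|)$ by $\E[|Y|\|X\|]<\infty$, and each summand is Lipschitz in $\theta$ over the compact $\Theta$, so the class $\{(x,y)\mapsto (y - g_\theta(x))^2 : \theta\in\Theta\}$ is Glivenko--Cantelli. Combining the two previous steps gives $\sup_\theta|M_n^{\mu_n}(\theta) - M(\theta)| \to 0$ in probability, and the classical consistency theorem for $M$-estimators (uniform convergence of the criterion plus a well-separated minimiser) then delivers $\hat\theta_n - \theta_0 = o_P(1)$.

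The main obstacle is the quadratic envelope $g_\theta(X)^2 \sim \|X\|^2$ in the last step: Assumption~\ref{as:1} supplies only $\E\|X\|<\infty$ and $\E[|Y|\|X\|]<\infty$, not $\E\|X\|^2<\infty$, so the naive Glivenko--Cantelli envelope for the squared criterion is not obviously integrable. I would handle this either by working with the centred contrast $M(\theta) - M(\theta_0) = P[(g_{\theta_0}-g_\theta)^2]$ together with a truncation/localisation argument on $\|X\|$, or---more cleanly---by piggy-backing on the unsmoothed theory: the estimator minimising $M_n^0$ is already $\sqrt{n}$-consistent (Lemma~\ref{lemma_unsmoothed}), and the uniform bias bound above forces the argmins of $M_n^{\mu_n}$ and $M_n^0$ to differ by $o_P(1)$, transferring consistency to $\hat\theta_n$. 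Verifying identifiability up to the harmless relabelling of the two affine pieces is the only other point needing care.
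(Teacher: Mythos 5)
Your overall route is the same one the paper intends: Wald-type argmin consistency (Theorem 3.2.3 of van der Vaart and Wellner) via a well-separated population minimizer plus uniform convergence of the smoothed empirical criterion, with the smoothing bias peeled off using the uniform bound $\mu_n\log 2$ of~\eqref{eq:EntProxApp}. The differences are in where the splitting happens. You compare the smoothed and unsmoothed criteria at the \emph{empirical} level and then need a ULLN only for the fixed, $n$-independent class $\{(y-g_\theta(x))^2:\theta\in\Theta\}$ (Lipschitz-in-parameter, hence Glivenko--Cantelli). The paper instead splits at the \emph{population} level --- writing $\M_n-M=[\M_n-M_n]+[M_n-M]$ with $M_n(\theta)=P[(Y-g^{\mu_n}_\theta(X))^2]$ --- exploits $\E(\eps\mid X)=0$ to replace $Y$ by $g_{\theta_0}(X)$ in the bias term (so that only first moments of $\|X\|$ enter there), and then must control the empirical process over the $n$-dependent smoothed class by a bracketing maximal inequality. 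Your version is arguably cleaner on the stochastic side; note also that the paper's own proof is literally unfinished (it breaks off in the middle of the bracketing-entropy bound), so your writeup goes further than the source does.

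Two concrete points. First, the envelope problem you flag is a genuine gap, and it is shared by the paper's argument: Assumption~\ref{as:1} gives $\E\|X\|<\infty$ and $\E[|Y|\,\|X\|]<\infty$ but not $\E\|X\|^2<\infty$, and without second moments the population criterion $M(\theta)=\sigma^2+P[(g_\theta-g_{\theta_0})^2]$ can be $+\infty$ for any $\theta$ whose slopes differ from the true ones (take $X$ with a heavy-tailed component lying in directions where $\alpha_0$ and $\beta_0$ are degenerate), so \emph{no} uniform convergence statement $\sup_\theta|\M_n(\theta)-M(\theta)|\to 0$ can hold as written; either a strengthened moment condition or your truncation route (a) is unavoidable. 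You deserve credit for identifying this rather than burying it. Second, however, your ``cleaner'' fallback (b) is flawed as stated: uniform $o_P(1)$ closeness of $M_n^{\mu_n}$ and $M_n^0$ does \emph{not} force their argmins to be $o_P(1)$-close --- two uniformly close functions can have arbitrarily distant minimizers when the criterion is flat. What the bias bound actually gives is that $\hat\theta_n$ is an $o_P(1)$-near-minimizer of $M_n^0$, and to conclude consistency you then need either the ULLN-plus-well-separation argument you were trying to bypass, or an extension of Lemma~\ref{lemma_unsmoothed} (i.e., of van de Geer's results) to \emph{approximate} minimizers; that extension is plausible but must be invoked explicitly, not assumed. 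So route (a) is the one that has to carry the proof. Finally, your remark on relabeling is correct and worth keeping: because $g_\theta$ is invariant under swapping $(\alpha,\gamma)\leftrightarrow(\beta,\phi)$, $\theta_0$ is only identified up to this permutation, so well-separation (and hence the consistency statement itself) must be formulated modulo relabeling --- a point the paper passes over in silence.
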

\begin{proof}
For $\theta \in \Theta$, let us define $$M(\theta) := P[(Y - g_\theta(X))^2]\qquad \mbox{and} \qquad M_n(\theta) = P[(Y - g_\theta^{\mu_n}(X))^2].$$ Also, let $$\M_n(\theta) := \P_n[(Y - g_\theta^{\mu_n}(X))^2] = \frac{1}{n} \sum_{i=1}^n [Y_i - g_\theta^{\mu_n}(X_i)]^2.$$ We will apply Theorem 3.2.3 of VdV\&W to prove this result. In particular, we will show that $\sup_{\theta \in \Theta}| \M_n(\theta) - M(\theta)| \to 0$ in probability. Observe that $$ \M_n(\theta) - M(\theta) =  [\M_n(\theta) - M_n(\theta)] +  [M_n(\theta) - M(\theta)].$$ Now, \begin{eqnarray}
|(M_n - M)(\theta)| & = & \left|P[(Y - g_\theta^{\mu_n}(X))^2] - P[(Y - g_\theta(X))^2] \right| \nonumber \\
& = & \left| P\left[\{g_\theta^{\mu_n}(X) - g_\theta(X)\}\cdot \{-2 Y + g_\theta^{\mu_n}(X) + g_\theta(X) \} \right] \right| \nonumber \\
& = & \left| P\left[\{g_\theta^{\mu_n}(X) - g_\theta(X)\}\cdot \{-2 g_{\theta_0}(X) + g_\theta^{\mu_n}(X) + g_\theta(X) \} \right] \right| \nonumber \\
& \le & P\left[|g_\theta^{\mu_n}(X) - g_\theta(X)|\cdot |-2 g_{\theta_0}(X) + g_\theta^{\mu_n}(X) + g_\theta(X) | \right] \nonumber \\
& \le & (\mu_n \log 2) \; P[|g_\theta^{\mu_n}(X) - g_{\theta}(X)| + 2 |g_\theta(X) - g_{\theta_0}(X)|]  \nonumber \\
& \le & (\mu_n \log 2) \; P[ (\mu_n \log 2) + 2 G(X) \|\theta - \theta_0\|] \label{eq:M_n-M-1}
\end{eqnarray}
where we have used the facts:
$$\sup_{x \in \R^d} |g_\theta(x) - g_\theta^{\mu_n}(x)| \le \mu_n \log 2,\qquad \mbox{for all } \theta \in \Theta,$$ and 
\begin{equation}\label{eq:Lip_G}
|g_\theta(x) - g_{\theta_0}(x)| \le G(x) \|\theta - \theta_0\|,
\end{equation} for $G(x) = \|(1, x)\|$ (see Lemma~\ref{lemma_gLipschitz}).
Thus, by Assumptions~\ref{as:1} and~\ref{as:3}, $$\sup_{\theta \in \Theta}| M_n(\theta) - M(\theta)| \stackrel{a.s.}{\to} 0 \qquad \mbox{ as } \; n \to \infty.$$

Now, for any $\eta >0$, $$\P \left(\sup_{\theta \in \Theta}| \M_n(\theta) - M_n(\theta)| > \eta \right) \le \frac{1}{\eta} \E \left[\sup_{\theta \in \Theta}| \M_n(\theta) - M_n(\theta)| \right] \le J_{[\;]}(1) \|F\|_{P,2},$$ 
where $J_{[\;]}(1) := $

\end{proof}
The following result gives the rate of convergence of the estimator $\hat \theta_n$.
\begin{theorem}
\label{tmr:ConvRate}
Under Assumptions~\ref{as:1}-\ref{as:3} we have $\hat \theta_n -\theta_0 = O_p(n^{-1/2})$.
\end{theorem}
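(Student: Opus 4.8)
The plan is to invoke the rate-of-convergence theorem for $M$-estimators (Theorem 3.2.5 of VdV\&W) with the Euclidean metric $d(\theta,\theta_0) = \|\theta - \theta_0\|$ and target rate $r_n = \sqrt{n}$. Since consistency $\hat\theta_n - \theta_0 = o_p(1)$ is already available from Lemma~\ref{lem:Consistency}, it remains to verify two local conditions: a quadratic curvature bound $M(\theta) - M(\theta_0) \gtrsim \|\theta - \theta_0\|^2$ for the population criterion $M(\theta) := P[(Y - g_\theta(X))^2]$ on a neighborhood of $\theta_0$, and a modulus-of-continuity bound for the centered stochastic criterion,
$$\E \sup_{\|\theta - \theta_0\| < \delta} \left| (\M_n - M)(\theta) - (\M_n - M)(\theta_0) \right| \lesssim \frac{\delta}{\sqrt{n}}.$$
With modulus $\phi_n(\delta) = \delta$ the map $\delta \mapsto \phi_n(\delta)/\delta^\alpha$ is non-increasing for $\alpha = 1 < 2$, so the theorem's defining relation $r_n^2 \phi_n(1/r_n) \le \sqrt{n}$ yields $r_n = \sqrt{n}$, hence $\hat\theta_n - \theta_0 = O_p(n^{-1/2})$.

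For the curvature I would first use $\E[\eps \mid X] = 0$ to cancel the cross term, giving the exact identity $M(\theta) - M(\theta_0) = P[(g_\theta(X) - g_{\theta_0}(X))^2] = \|g_\theta - g_{\theta_0}\|_{L_2(P)}^2$. The hard part will be the lower bound $\|g_\theta - g_{\theta_0}\|_{L_2(P)}^2 \gtrsim \|\theta - \theta_0\|^2$, which is the main obstacle of the proof. Here I would exploit Assumption~\ref{as:2}, under which $(\alpha_0,\gamma_0) \ne (\beta_0,\phi_0)$, so that $\X$ splits into two regions with nonempty interior on which $g_{\theta_0}$ equals $\alpha_0^\top x + \gamma_0$ and $\beta_0^\top x + \phi_0$, respectively. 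Restricting the integral to a compact subset of each open region carrying positive $P_X$-mass, for $\theta$ close enough to $\theta_0$ the active affine piece of $g_\theta$ coincides with that of $g_{\theta_0}$ there, so that $g_\theta - g_{\theta_0}$ reduces to $(\alpha - \alpha_0)^\top x + (\gamma - \gamma_0)$ on one region and $(\beta - \beta_0)^\top x + (\phi - \phi_0)$ on the other. Assumption~\ref{as:1} (that $X$ does not concentrate on a hyperplane) is then used to guarantee that the restricted second-moment matrices of $(X,1)$ are nonsingular, yielding separate quadratic lower bounds in $(\alpha - \alpha_0, \gamma - \gamma_0)$ and $(\beta - \beta_0, \phi - \phi_0)$ that combine to control $\|\theta - \theta_0\|^2$.

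For the modulus I would split $\M_n(\theta) - M(\theta)$ into the empirical fluctuation $(\P_n - P)[m_\theta^{\mu_n}]$ and the deterministic smoothing bias $M_n(\theta) - M(\theta)$, where $m_\theta^\mu(x,y) := (y - g_\theta^\mu(x))^2$ and $M_n(\theta) := P[m_\theta^{\mu_n}]$. The bias increment over $\|\theta - \theta_0\| < \delta$ is of order $\mu_n \delta + \mu_n^2$ by the uniform approximation bound $\sup_x |g_\theta - g_\theta^{\mu_n}| \le \mu_n \log 2$ combined with the Lipschitz estimate~\eqref{eq:Lip_G}; since $\mu_n = o(n^{-1/2})$ by Assumption~\ref{as:3}, this is $o(\delta/\sqrt{n})$ and negligible. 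For the empirical part, the key observation is that $\theta \mapsto g_\theta^\mu$ is Lipschitz with envelope $G(x) = \|(1,x)\|$ uniformly in $\mu > 0$: by Lemma~\ref{lemma_gradient} its $\theta$-gradient is $\hat w^{\theta,\mu} \otimes [x,1]$ with weights in the unit simplex, so smoothing does not inflate the Lipschitz constant. Consequently $\{m_\theta^{\mu_n} - m_{\theta_0}^{\mu_n} : \|\theta - \theta_0\| < \delta\}$ is a finite-dimensionally indexed, Lipschitz-in-parameter class with envelope of order $\delta\, G(x)(|Y| + G(x))$, and a standard maximal inequality for such a class, using the integrability afforded by Assumption~\ref{as:1}, delivers the required $\delta/\sqrt{n}$ modulus. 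Feeding the two verified conditions into Theorem 3.2.5 of VdV\&W completes the proof.
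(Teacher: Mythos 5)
Your proposal is correct and follows essentially the same route as the paper's own proof: both verify the hypotheses of Theorem 3.2.5 of VdV\&W by (i) establishing the quadratic curvature bound $M(\theta)-M(\theta_0)\gtrsim \|\theta-\theta_0\|^2$ near $\theta_0$ and (ii) bounding the modulus by splitting $\M_n - M$ into the empirical fluctuation of the smoothed criterion (handled by metric entropy for a Lipschitz-in-parameter class) plus the deterministic smoothing bias (handled via $\sup_x|g_\theta(x)-g_\theta^{\mu_n}(x)|\le \mu_n\log 2$, the Lipschitz bound~\eqref{eq:Lip_G}, and Assumption~\ref{as:3}), then invoking consistency from Lemma~\ref{lem:Consistency}. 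If anything, you supply more justification than the paper at the two points it leaves sketchy, namely the region-splitting argument behind the curvature bound (which the paper simply asserts as strict convexity) and the uniform-in-$\mu$ Lipschitz property of $\theta\mapsto g_\theta^\mu$ that makes the entropy bound uniform in $n$.
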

\begin{proof}
We will apply Theorem 3.2.5 of VdV and W to prove the result. 

Note that $\theta_0$ minimizes $M(\theta)$ for $\theta \in \Theta$. Hence, as $M(\theta)$ is strictly convex around $\theta_0$, we have $$M(\theta)- M(\theta_0) \ge c \|\theta - \theta_0\|^2, \qquad \mbox{for $\theta$ in a neighborhood of $\theta_0$},$$ for some $c >0$.

We now have to bound 
\begin{equation}\label{eq:ExpSup}
\E \left[\sup_{\|\theta - \theta_0\| < \delta} \sqrt{n} |(\M_n - M)(\theta) - (\M_n - M)(\theta_0)| \right].\end{equation} 
Observe that
\begin{eqnarray*}
	\sqrt{n} |(\M_n - M)(\theta) - (\M_n - M)(\theta_0)| \le S_n(\theta) + T_n(\theta) 
\end{eqnarray*}
where 
\begin{eqnarray*}
S_n(\theta) & := & \sqrt{n} |(\M_n - M_n)(\theta) - (\M_n - M_n)(\theta_0)| \\
T_n(\theta) & := & \sqrt{n} |(M_n - M)(\theta) - (M_n - M)(\theta_0)|.
\end{eqnarray*}
The stochastic process $S_n(\theta)$ can be bounded using metric entropy calculations. Next we have to control  the deterministic process $T_n(\theta)$. Similarly, we can show that
\begin{eqnarray}
|(M_n - M)(\theta_0)| & = & \left|P[(Y - g_{\theta_0}^{\mu_n}(X))^2] - P[(Y - g_{\theta_0}(X))^2] \right| \nonumber\\
& = & P\left[|g_{\theta_0}^{\mu_n}(X) - g_{\theta_0}(X)|^2\right] \\
& \le & \mu_n^2 (\log 2)^2. \label{eq:M_n-M-2}
\end{eqnarray}
Therefore, the expected supremum in~\eqref{eq:ExpSup} can be upper bounded by 
\begin{eqnarray*}
& & \E \left[\sup_{\|\theta - \theta_0\| < \delta} S_n(\theta) \right] + \E \left[\sup_{\|\theta - \theta_0\| < \delta} T_n(\theta) \right] \\
& \le & c_0 \delta + \sqrt{n} [c_1 \mu_n^2 + c_2 \mu_n \delta] =: \phi_n(\delta),
\end{eqnarray*}
where $c_0$ can be found using metric entropy, $c_1 = 2 (\log 2)^2$ and $c_2 = 2 (\log 2) P[G(X)]$. Note that $\phi_n(\delta)/\delta^\alpha$ is a decreasing function for any $1 < \alpha < 2$. Then with $r_n := n^{1/2}$, we have $$r_n^2 \phi_n(r_n^{-1}) = c_0 r_n + c_1 \sqrt{n}  r_n^2 \mu_n^2 + c_2 r_n \mu_n \lesssim \sqrt{n}, \qquad \mbox{ for every } n,$$ as $r_n \mu_n = O(1)$.

As $\hat \theta_n \to_p \theta_0$ (by Lemma~\ref{lem:Consistency}) then by Theorem 3.2.5 of VdV\&W we have $$n^{1/2}(\hat \theta_n - \theta_0) = O_p(1).$$
\end{proof}

Let us define $$m_\theta(x,y) := (y - g_\theta(x))^2.$$ Observe that $(x,y) \mapsto m_\theta(x,y)$ is a measurable function for each $\theta \in \Theta$ and that $\theta \mapsto m_\theta(x,y)$ is differentiable at $\theta_0$ for $P$-a.e.~$x$ with derivative $$\dot{m}_{\theta_0}(x,y) = -2 (y - g_{\theta_0}(x)) \dot{g}_{\theta_0}(x).$$ 
\begin{theorem}
\label{tmr:limit}
Under assumptions~\ref{as:1}-\ref{as:3} we have $$n^{1/2}(\hat \theta_n - \theta_0) \stackrel{d}{\to} N_d(0, V^{-1} W V^{-1}),$$ where 
\begin{equation}\label{eq:Def_V}
V := 2 P[\dot{g}_{\theta_0} \dot{g}_{\theta_0}^\top]
\end{equation}
 and $$W = P(\dot{m}_{\theta_0} \dot{m}_{\theta_0}^\top)  = 4 P[\eps^2 \dot{g}_{\theta_0} \dot{g}_{\theta_0}^\top].$$
In particular, when $\eps$ is independent of $X$ with variance $\sigma^2 >0$, then $W = 2 \sigma^2 V$.
\end{theorem}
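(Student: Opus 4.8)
The plan is to prove asymptotic normality through the quadratic-expansion (argmax) route for M-estimators, entering at the $\sqrt n$-rate already furnished by Theorem~\ref{tmr:ConvRate}. Write $m_\theta^{\mu_n}(x,y) := (y - g_\theta^{\mu_n}(x))^2$, $m_\theta(x,y) := (y - g_\theta(x))^2$, and $M(\theta) := P m_\theta$, and reparametrize by $u := \sqrt n(\theta - \theta_0)$. Since $\hat\theta_n \to_p \theta_0$ (Lemma~\ref{lem:Consistency}) and $\theta_0$ is interior (Assumption~\ref{as:2}), the localized estimator $\hat u_n := \sqrt n(\hat\theta_n - \theta_0)$ is, for all large $n$, the minimizer of the map $u \mapsto n\,\P_n(m_{\theta_0 + u/\sqrt n}^{\mu_n} - m_{\theta_0}^{\mu_n})$, and $\hat u_n = O_p(1)$ by Theorem~\ref{tmr:ConvRate}. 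The goal is to show this map obeys, uniformly on compact sets of $u$,
\begin{equation}
n\,\P_n\big(m_{\theta_0 + u/\sqrt n}^{\mu_n} - m_{\theta_0}^{\mu_n}\big) = \tfrac12\, u^\top V u + u^\top Z_n + o_p(1),
\label{eq:plan-quad}
\end{equation}
where $Z_n := \sqrt n\,\P_n \dot m_{\theta_0}^{\mu_n}$ is the smoothed score at the truth, $\dot m_\theta^{\mu_n}(x,y) := -2\,(y - g_\theta^{\mu_n}(x))\,\dot g_\theta^{\mu_n}(x)$, and $V$ is as in~\eqref{eq:Def_V}. Granting~\eqref{eq:plan-quad} together with $V \succ 0$ and $\hat u_n = O_p(1)$, the argmax theorem of VdV\&W yields $\hat u_n = -V^{-1} Z_n + o_p(1)$, so that the conclusion reduces to showing $Z_n \stackrel{d}{\to} N_d(0,W)$.

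Establishing~\eqref{eq:plan-quad} and the score limit decomposes into three pieces. (i) Population Taylor expansion: because $M(\theta) = P[(g_{\theta_0} - g_\theta)^2] + \sigma^2$ has a quadratic minorant at $\theta_0$ (as used in Theorem~\ref{tmr:ConvRate}) with Hessian $V = 2P[\dot g_{\theta_0}\dot g_{\theta_0}^\top]$, I would show $nP(m_{\theta_0 + u/\sqrt n}^{\mu_n} - m_{\theta_0}^{\mu_n}) = \tfrac12 u^\top V u + o(1)$; the only new point is the smoothing perturbation $P(m_\theta^{\mu_n} - m_\theta)$, whose size is bounded in~\eqref{eq:M_n-M-1}--\eqref{eq:M_n-M-2}, so that after centering at $\theta_0$ and scaling by $n$ it contributes terms of order $\sqrt n\,\mu_n$ and $n\,\mu_n^2$, both $o(1)$ by Assumption~\ref{as:3}. (ii) Stochastic equicontinuity: the centered class $\{m_\theta^{\mu_n} - m_{\theta_0}^{\mu_n} : \|\theta - \theta_0\| \le \delta\}$ has a square-integrable envelope proportional to $\|\theta - \theta_0\|\,G(x)$ uniformly in $\mu_n$ (via the Lipschitz bound~\eqref{eq:Lip_G} with $G(x) = \|(1,x)\|$), hence is $P$-Donsker and the empirical process applied to it is asymptotically equicontinuous --- this is precisely the metric-entropy control of the $S_n$-term already invoked in the proof of Theorem~\ref{tmr:ConvRate}, and it converts the empirical-process part of~\eqref{eq:plan-quad} into the linear term $u^\top Z_n$. (iii) Score limit: writing $Y_i - g_{\theta_0}^{\mu_n}(X_i) = \eps_i + (g_{\theta_0} - g_{\theta_0}^{\mu_n})(X_i)$, the last term is $O(\mu_n)$ uniformly, so its contribution to $Z_n$ is $O_p(\sqrt n\,\mu_n) = o_p(1)$ under Assumptions~\ref{as:1} and~\ref{as:3}; since the softmax gradient $\dot g_{\theta_0}^{\mu_n}(x)$ converges to $\dot g_{\theta_0}(x)$ at every $x$ off the kink $\{\alpha_0^\top x + \gamma_0 = \beta_0^\top x + \phi_0\}$, which is $P$-null by Assumption~\ref{as:1}, dominated convergence of the covariances and a triangular-array (Lindeberg--Feller) CLT give $Z_n \stackrel{d}{\to} N_d(0,W)$ with $W = 4P[\eps^2\dot g_{\theta_0}\dot g_{\theta_0}^\top]$.

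Combining the three pieces gives $n^{1/2}(\hat\theta_n - \theta_0) = -V^{-1} Z_n + o_p(1) \stackrel{d}{\to} N_d(0, V^{-1} W V^{-1})$. The matrix identities then follow from the $P$-a.e.\ limit $\dot m_{\theta_0}(x,y) = -2(y - g_{\theta_0}(x))\dot g_{\theta_0}(x) = -2\eps\,\dot g_{\theta_0}(x)$, whence $W = P[\dot m_{\theta_0}\dot m_{\theta_0}^\top] = 4P[\eps^2\dot g_{\theta_0}\dot g_{\theta_0}^\top]$; when $\eps$ is independent of $X$ with variance $\sigma^2$, conditioning on $X$ collapses this to $W = 2\sigma^2 P[\dot g_{\theta_0}\dot g_{\theta_0}^\top] = 2\sigma^2 V$, as claimed.

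I expect the principal difficulty to be the triangular-array nature of the problem: since we minimize the surrogate $m_\theta^{\mu_n}$ rather than a fixed criterion, the usual M-estimation master theorem has to be re-run with all Lipschitz, envelope and Taylor constants uniform in $\mu_n$, and the smoothing bias must be shown negligible at each order ($\sqrt n\,\mu_n \to 0$ in the score, $n\mu_n^2 \to 0$ in the drift). A key benefit of phrasing everything through the quadratic expansion~\eqref{eq:plan-quad} is that it never invokes the empirical Hessian $\P_n \ddot g_{\theta_0}^{\mu_n}$, whose entries blow up like $\mu_n^{-1}$ near the kink; the Hessian $V$ enters only through the smooth population map and so stays bounded. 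Finally, the invertibility of $V = 2P[\dot g_{\theta_0}\dot g_{\theta_0}^\top]$ must be secured from Assumption~\ref{as:1} (so $X$ charges no hyperplane) together with Assumption~\ref{as:2} (so the two affine pieces are distinct and each is active on a set of positive $P$-measure), which guarantees that $\dot g_{\theta_0}$ is not degenerate.
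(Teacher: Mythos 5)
Your proposal is correct and shares the paper's skeleton: localization $u=\sqrt n(\theta-\theta_0)$, the same three-way decomposition into drift, empirical fluctuation, and smoothing bias (with the same $n\mu_n^2$ and $\sqrt n\,\mu_n$ bounds as in~\eqref{eq:M_n-M-1} and~\eqref{eq:M_n-M-2}), the same population Taylor expansion producing $\tfrac12 u^\top V u$, and the argmin continuous mapping theorem entered with the $O_p(1)$ tightness from Theorem~\ref{tmr:ConvRate}. Where you genuinely differ is the treatment of the Gaussian linear term. The paper first strips the smoothing at the level of the criterion (the second term of its display~\eqref{eq:Split} is shown to vanish uniformly on compacts), so that only the \emph{unsmoothed} localized increments $\sqrt n(m_{\theta_0+hn^{-1/2}}-m_{\theta_0})$ remain; it then proves weak convergence of the whole process $\tilde\M_n$ in $\ell^\infty(\{h:\|h\|\le K\})$ to $h^\top\Delta+\tfrac12 h^\top Vh$ by a single application of the Donsker theorem for classes changing with $n$ (Theorem 2.11.23 of VdV\&W, with bracketing entropy from the Lipschitz-in-parameter property), never forming a score. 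You instead keep the smoothed criterion throughout, isolate the explicit smoothed score $Z_n=\sqrt n\,\P_n \dot m^{\mu_n}_{\theta_0}$, derive the asymptotic linearity $\hat u_n=-V^{-1}Z_n+o_p(1)$, and then apply a Lindeberg--Feller triangular-array CLT to $Z_n$, killing the score bias via $\sqrt n\,\mu_n\to0$ and using the a.e.\ convergence $\dot g^{\mu_n}_{\theta_0}\to\dot g_{\theta_0}$ off the $P$-null kink set. Each route has a price: the paper's never differentiates the smoothed objective, so no uniform-in-$\mu_n$ quadratic-mean differentiability needs to be checked, at the cost of invoking the heavier changing-classes functional CLT; yours yields an explicit influence-function representation (convenient, e.g., for plug-in variance estimation), but your step (ii) --- converting the centered empirical increment into exactly $u^\top Z_n$ uniformly over compacts with constants uniform in $\mu_n$ --- requires essentially the same entropy and equicontinuity calculation that the paper performs inside Theorem 2.11.23, so there is no net saving in technical work. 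Your concluding covariance identities and the reduction $W=2\sigma^2 V$ under independence of $\eps$ and $X$ coincide with the paper's.
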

\begin{proof}
For the convenience of notation, let us write $$m_\theta^{\mu}(x,y) := (y - g_\theta^{\mu}(x))^2,\;\; (\mbox{for } \mu >0).$$
We will study the stochastic process
$$\tilde \M_n(h) := n \P_n(m_{\theta_0 + h n^{-1/2}}^{\mu_n} - m_{\theta_0}^{\mu_n}), \qquad \mbox{for }\; h \in \R^{2d +2}.$$ Observe that $$\sqrt{n} (\hat \theta_n - \theta_0) = \argmin_{h}  \tilde \M_n(h).$$ 

We will show that $$\tilde \M_n(h) \stackrel{d}{\to} h^\top \Delta + \frac{1}{2} h^\top V h =: \tilde \M(h) \;\; \mbox{in } \ell^\infty(\{h: \|h\|\le K\})$$ for every $K>0$, where $\Delta \sim N_d(0, W)$. Then the conclusion of the theorem follows from the argmax (argmin) continuous theorem (see e.g., Theorem 3.2.2 in VdV\&W) upon noticing that $$\sqrt{n} (\hat \theta_n - \theta_0) = \argmin_{h}  \tilde \M_n(h) \stackrel{d}{\to} \argmin_h \tilde \M(h) = - V^{-1} \Delta \sim N_d(0, V^{-1} W V^{-1}).$$

First observe that 
\begin{eqnarray}
n \P_n(m_{\theta_0 + h n^{-1/2}}^{\mu_n} - m_{\theta_0}^{\mu_n}) & = & n \P_n(m_{\theta_0 + h n^{-1/2}} - m_{\theta_0}) \nonumber \\ 
&& \; + \; n\P_n[(m_{\theta_0 + h n^{-1/2}}^{\mu_n} - m_{\theta_0}^{\mu_n}) - (m_{\theta_0 + h n^{-1/2}} - m_{\theta_0})]. \label{eq:Split}
\end{eqnarray}
{\bf Study of the second term on the right-hand side of~\eqref{eq:Split}:} The second term on the right-hand side of the above display can also be expressed as $$n (\M_n - M) (\theta_0 + h n^{-1/2}) - n (\M_n - M) (\theta_0).$$
Using a similar expansion as in~\eqref{eq:M_n-M-1} with $\theta = \theta_0 + h n^{-1/2}$ (and $\P_n$ instead of $P$) we have $$|n(\M_n - M) (\theta_0 + h n^{-1/2})| \le n \mu_n^2 (\log 2)^2 + 2 n \mu_n (\log 2) \P_n[G(X)] \|n^{-1/2} h\|$$ which converges uniformly to 0 a.s.~in $\ell^\infty(\{h: \|h\|\le K\})$ as $n^{-1/2} \mu_n \to 0$ and $\P_n[G(X)] \le P[G(X)] + 1 < \infty$ for all large $n$ a.s. (by Assumption~\ref{as:1}). Now using a similar calculation as in~\eqref{eq:M_n-M-2}, we have $$|n (\M_n - M) (\theta_0)| \le n \mu_n^2 (\log 2)^2 \to 0, \qquad \mbox{as }\; n \to \infty.$$ Thus, we have shown that the second term in~\eqref{eq:Split} goes to zero (a.s.) uniformly in $T_K := \{h \in \R^{2 d + 2}: \|h\|\le K\}$.

{\bf Study of the first term on the right-hand side of~\eqref{eq:Split}:} Observe that
\begin{eqnarray}
n \P_n(m_{\theta_0 + h n^{-1/2}} - m_{\theta_0}) & =  & \sqrt{n}(\P_n - P) [\sqrt{n} (m_{\theta_0 + h n^{-1/2}} - m_{\theta_0})] \nonumber \\ 
& & \qquad +  \; n P(m_{\theta_0 + h n^{-1/2}} - m_{\theta_0}).\label{eq:Split-2}
\end{eqnarray} 
By a second order Taylor expansion of $M(\theta) := P [m_\theta]$ about $\theta_0$, we have 
$$P [m_\theta] - P[m_{\theta_0}] = P[(g_\theta - g_{\theta_0})^2] = \frac{1}{2} (\theta - \theta_0)^\top V (\theta - \theta_0) + o(\|\theta -\theta_0\|^2),$$ where $V$ is defined in~\eqref{eq:Def_V}. Thus, the second term of the right side of the~\eqref{eq:Split-2} converges to $(1/2) h^\top V h$ uniformly on $T_K$. 

To handle the first term in~\eqref{eq:Split-2} we use the Donsker's theorem with classes of functions changing with $n$; see e.g., Theorem 2.11.23 of VdV\&W. For $K >0$ fixed, we consider the following class of functions $$\F_n := \{\sqrt{n}(m_{\theta_0 + h n^{-1/2}}- m_{\theta_0}): \|h \| \le K\}.$$ Notice that for all $\theta_1, \theta_2$ in a neighborhood $\mathcal{N}$ of $\theta_0$, $$|m_{\theta_1}(x,y) - m_{\theta_2}(x,y)| = |g_{\theta_1}(x) - g_{\theta_2}(x)|\cdot |-2 y + g_{\theta_1}(x) + g_{\theta_2}(x)|  \le F(x,y) \|\theta_1 - \theta_2\|$$ where $$F(x,y) := G(x) \left[2 |y| + \sup_{\theta_1,\theta_2 \in \mathcal{N}} |g_{\theta_1}(x) + g_{\theta_2}(x)| \right]$$ and $G$ is defined after~\eqref{eq:Lip_G}. Further note that $F \in L_2(P)$ (by Assumption~\ref{as:1}). 

Thus, the function class $\F_n$ has envelope $F_n(x,y) \equiv K F(x,y)$ for all $n$, and since $F(x,y) \in L_2(P)$ the Lindeberg conditions (see Equation~2.11.21 of VdV\&W) are satisfied easily. Furthermore, for $s,t \in \R^{2d +2}$ such that $\|s\|,\|t\| < K$, define $$f_{n,s} :=  \sqrt{n}(m_{\theta_0 + s n^{-1/2}} - m_{\theta_0}), \quad \mbox{and}\quad f_{n,t} := \sqrt{n}(m_{\theta_0 + t n^{-1/2}} - m_{\theta_0}).$$ By the dominated convergence theorem the covariance functions satisfy $$P(f_{n,g} f_{n,t}) - P(f_{n,s}) P(f_{n,t}) \to P(s^\top \dot{m}_{\theta_0} \dot{m}_{\theta_0}^\top t) = s^\top W t.$$ Lastly, to apply Theorem 2.11.23 of VdV\&W we need to that the bracketing entropy condition holds. To this end, observe that $$N_{[\;]}(2 \eta \|F_n\|_{P,2}, \mathcal{F}_n,L_2(P)) \le N(\eta, T_K, \|\cdot \|) \le \left(\frac{CK}{\eta}\right)^d,$$ where the first inequality follows from Theorem 2.7.11 of VdV\&W (classes that are Lipschitz in a parameter) and the second inequality follows from an upper bound on the $\eta$-covering number of an Euclidean ball in $\R^{2d+ 2}$. Thus, $$\int_0^\delta \sqrt{N_{[\;]}(2 \eta \|F_n\|_{P,2}, \mathcal{F}_n,L_2(P))} d \eta \lesssim \int_0^\delta \sqrt{d \log \left( \frac{CK }{\eta}\right)} d\eta \to 0 \qquad \mbox{as } \delta \to 0,$$ and hence the bracketing entropy condition holds. We conclude that $\tilde \M_n(h)$ converges weakly to $h^\top \Delta$ in $\ell^\infty(\{h: \|h\|\le K\})$, and the desired result holds.
\end{proof}

The proof of Theorem~\ref{theorem_entropy_rates} is merely a summary of the aforementioned results:
\begin{proof}[Proof of Theorem~\ref{theorem_entropy_rates}]
The first statement follows from Lemma~\ref{lem:Consistency} and Theorem~\ref{tmr:ConvRate}.
The second statement follows from Theorem~\ref{tmr:limit}.
\end{proof}

\subsection{Additional lemmas}
\begin{lemma}
\label{lemma_gLipschitz}
It holds true that $|g_\theta(x)-g_{\theta_0}(x)| \leq G(x) \Vert \theta-\theta_0 \Vert$.
\end{lemma}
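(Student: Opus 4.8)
The plan is to exploit the fact that the pointwise maximum of affine functions is a $1$-Lipschitz operation in the sup norm, and then to reduce each affine difference to an inner product controlled by Cauchy--Schwarz. Recall that $g_\theta(x) = \max\{\alpha^\top x + \gamma, \beta^\top x + \phi\}$ with $\theta = (\alpha,\beta,\gamma,\phi)$, and likewise $g_{\theta_0}(x) = \max\{\alpha_0^\top x + \gamma_0, \beta_0^\top x + \phi_0\}$.

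First I would invoke the elementary inequality $|\max\{a_1,a_2\} - \max\{b_1,b_2\}| \le \max\{|a_1-b_1|,|a_2-b_2|\}$, applied with $a_1 = \alpha^\top x + \gamma$, $a_2 = \beta^\top x + \phi$ and the corresponding quantities at $\theta_0$. This yields
$$|g_\theta(x) - g_{\theta_0}(x)| \le \max\bigl\{|(\alpha-\alpha_0)^\top x + (\gamma - \gamma_0)|,\; |(\beta-\beta_0)^\top x + (\phi-\phi_0)|\bigr\}.$$
Next I would rewrite each term inside the maximum as an inner product with the augmented vector $[x,1]$, e.g. $(\alpha-\alpha_0)^\top x + (\gamma-\gamma_0) = \langle (\alpha-\alpha_0, \gamma-\gamma_0), [x,1]\rangle$, and apply Cauchy--Schwarz to bound it by $\|(\alpha-\alpha_0,\gamma-\gamma_0)\|\cdot\|[x,1]\|$. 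Since $\|[x,1]\| = \|(1,x)\| = G(x)$, and arguing identically for the second term, I obtain
$$|g_\theta(x) - g_{\theta_0}(x)| \le G(x)\,\max\bigl\{\|(\alpha-\alpha_0,\gamma-\gamma_0)\|,\; \|(\beta-\beta_0,\phi-\phi_0)\|\bigr\}.$$

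Finally, the only point requiring any care is relating the maximum of the two partial norms to the full norm $\|\theta-\theta_0\|$. Since the maximum of two nonnegative numbers is dominated by the square root of their sum of squares, $\max\{u,v\} \le \sqrt{u^2 + v^2}$, and since $\|\theta-\theta_0\|^2 = \|(\alpha-\alpha_0,\gamma-\gamma_0)\|^2 + \|(\beta-\beta_0,\phi-\phi_0)\|^2$, the maximum above is bounded by $\|\theta-\theta_0\|$, which completes the proof. I do not expect any genuine obstacle here: the argument is a short chain of standard inequalities, so the main thing to get right is the bookkeeping that matches the constant $G(x) = \|(1,x)\|$ exactly. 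If one later wanted the analogous statement for a general $k$-PWA $\max_{j=1,\ldots,k}\{a_j^\top x + b_j\}$, the identical three steps go through verbatim, with the final step using $\max_j u_j \le (\sum_j u_j^2)^{1/2}$.
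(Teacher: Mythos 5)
Your proof is correct, but it is organized differently from the paper's. The paper proceeds by an explicit four-case analysis according to which affine piece attains the maximum in $g_\theta(x)$ and in $g_{\theta_0}(x)$, handling the two ``mixed'' cases by a sandwich argument (e.g., if $\alpha^\top x + \gamma \le \beta_0^\top x + \phi_0$ while $\beta^\top x + \phi \le \alpha^\top x + \gamma$, the difference is squeezed between the two $\beta$-pieces), and in each case bounds the resulting affine difference by the triangle inequality plus Cauchy--Schwarz, arriving at $G(x) = \Vert x \Vert + 1$. You instead invoke the max-contraction inequality $|\max\{a_1,a_2\} - \max\{b_1,b_2\}| \le \max\{|a_1-b_1|,|a_2-b_2|\}$, which packages that case analysis into one line, and then apply Cauchy--Schwarz to the augmented vectors $[x,1]$ followed by $\max\{u,v\} \le \sqrt{u^2+v^2}$. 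Your route buys three things: it is shorter; it generalizes verbatim to $k$-piece maxima, as you note; and it yields the sharper constant $G(x) = \Vert(1,x)\Vert = \sqrt{1+\Vert x\Vert^2}$ rather than $\Vert x \Vert + 1$. The last point is not cosmetic: the paper's consistency proof (Lemma~\ref{lem:Consistency}, around~\eqref{eq:Lip_G}) cites this lemma with $G(x) = \|(1,x)\|$, whereas the paper's own proof of Lemma~\ref{lemma_gLipschitz} only delivers the weaker constant $\Vert x\Vert + 1$; since $\Vert(1,x)\Vert \le \Vert x\Vert + 1$, your argument actually establishes the form of the bound that is used downstream, resolving that small mismatch.
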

\begin{proof}
Consider $|g_\theta(x) - g_{\theta_0}(x)|$ for a fixed $x$ and four cases depending on whether
$g_\theta(x) = \max\{\alpha^\top x + \gamma, \beta^\top x+ \phi\}$
takes its maximum in the first or second argument.
\begin{enumerate}
  \item If $g_\theta(x) = \alpha^\top x + \gamma$, $g_{\theta_0}(x) = \alpha_0^\top x + \gamma_0$,
  then immediately $|g_\theta(x) - g_{\theta_0}(x)| \leq (\Vert x \Vert+1) \Vert \theta-\theta_0 \Vert$
  by triangle and Cauchy-Schwarz inequalities.
  The norm is the Euclidean norm.
  \item The case $g_\theta(x) = \alpha^\top x + \gamma$, $g_{\theta_0}(x) = \beta_0^\top x + \phi_0$.
  If $\alpha^\top x + \gamma \leq \beta_0^\top x + \phi_0$ then
  $|g_\theta(x) - g_{\theta_0}(x)| \leq |(\beta^\top x+ \phi) - (\beta_0^\top x + \phi_0)| \leq (\Vert x \Vert+1) \Vert \theta-\theta_0 \Vert$
  due to the fact that $\beta^\top x+ \phi \leq \alpha^\top x + \gamma$.
  Likewise, if $\alpha^\top x + \gamma > \beta_0^\top x + \phi_0$ then using
  $\beta_0^\top x + \phi_0 \geq \alpha_0^\top x + \gamma_0$
  one obtains
  $|g_\theta(x) - g_{\theta_0}(x)| \leq |(\alpha^\top x+ \gamma) - (\alpha_0^\top x + \gamma_0)| \leq (\Vert x \Vert+1) \Vert \theta-\theta_0 \Vert$.
  \item The case $g_\theta(x) = \beta^\top x + \phi$, $g_{\theta_0}(x) = \alpha_0^\top x + \gamma_0$ follows analogously to the previous one.
  \item The case $g_\theta(x) = \beta^\top x + \phi$, $g_{\theta_0}(x) = \beta_0^\top x + \phi_0$ follows analogously to the first one.
\end{enumerate}
Combined this shows that
$|g_\theta(x) - g_{\theta_0}(x)| \le G(x) \|\theta - \theta_0\|$ with $G(x)=\Vert x \Vert+1$.
\end{proof}

According to \cite{vdVaartWellner2000},
the quantity
$$S_n(\theta) := \sqrt{n} |(\M_n - M_n)(\theta) - (\M_n - M_n)(\theta_0)|$$
is bounded by
$$E^\ast \int_0^{\theta_n(\delta)} \sqrt{\log N(\epsilon,M_{n,\delta},L_2(\P_n)} d\epsilon.$$

\begin{lemma}
\label{lemma_entropy_entropyG}
Assuming $\E(X^2) < \infty$,
$H(\delta,\G_n(R),Q_n) \sim \log \left( \frac{C + \delta}{\delta} \right)$
for a constant $C$ independent of $\mu$.
\end{lemma}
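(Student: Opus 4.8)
The plan is to treat $\G_n(R)$ as a finite-dimensional parametric class and to exploit the fact that its members are Lipschitz in the parameter $\theta$ with an envelope that does \emph{not} depend on the smoothing level $\mu$. Once this is in place, the metric entropy of $\G_n(R)$ is controlled by the covering number of the underlying Euclidean ball of radius $R$, and the stated bound $\log\bigl((C+\delta)/\delta\bigr)$ with a $\mu$-free constant $C$ follows immediately. The only place where $\mu$ could conceivably enter is through the modulus of continuity in $\theta$, so the whole argument reduces to showing that this modulus is uniform in $\mu$.

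First I would record the key Lipschitz estimate. Writing $g_\theta^\mu(x)=\max_{w\in Q_k}\{\langle A[x,1],w\rangle-\mu\rho(w)\}$ as in~\eqref{eq:MaxQ_p}, note that for each fixed $w$ the map $\theta\mapsto\langle A[x,1],w\rangle=\sum_i w_i(a_i^\top x+b_i)$ is \emph{linear} in $\theta$, while $\mu\rho(w)$ is independent of $\theta$. By Lemma~\ref{lemma_gradient} this linear map has $\theta$-gradient with blocks $w_i x$ and $w_i$, whose Euclidean norm squared equals $(\|x\|^2+1)\sum_i w_i^2\le\|x\|^2+1$, using $\sum_i w_i^2\le\sum_i w_i=1$ for $w\in Q_k$. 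Since a pointwise maximum of functions each Lipschitz in $\theta$ with constant at most $G(x):=\|x\|+1$ is again Lipschitz with the same constant, I obtain
\[
|g_{\theta_1}^\mu(x)-g_{\theta_2}^\mu(x)|\le G(x)\,\|\theta_1-\theta_2\|,
\]
which is precisely the bound of Lemma~\ref{lemma_gLipschitz} for the unsmoothed function, now holding uniformly over all $\mu>0$. This is the crux of the argument.

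Next I would invoke the Lipschitz-in-parameter bracketing bound (Theorem 2.7.11 of VdV\&W): applied with the envelope $G$ to the class indexed by $\theta$ ranging over a ball $B_R\subset\R^p$, the $L_2(Q_n)$ bracketing number (and hence the covering entropy) at scale $\delta$ is dominated by the covering number of $B_R$ at scale $\delta/\|G\|_{Q_n,2}$. Using $N(\eta,B_R,\|\cdot\|)\le(1+2R/\eta)^p$ with $\eta=\delta/\|G\|_{Q_n,2}$ gives
\[
H(\delta,\G_n(R),Q_n)\le p\,\log\!\left(\frac{\delta+2R\|G\|_{Q_n,2}}{\delta}\right)=p\,\log\!\left(\frac{C+\delta}{\delta}\right),
\]
with $C:=2R\|G\|_{Q_n,2}$; here the extra $+\delta$ in the numerator is exactly the $+1$ in the Euclidean-ball covering number, and the fixed dimensional factor $p$ is absorbed into the ``$\sim$''. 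Under the lemma's hypothesis $\E(X^2)<\infty$ one has $\|G\|_{Q_n,2}^2=Q_n(\|X\|+1)^2$ finite and convergent to $\|G\|_{P,2}^2<\infty$ (Assumption~\ref{as:1}), so $C$ is finite and, most importantly, manifestly free of $\mu$.

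The main obstacle, and the reason the lemma stresses that $C$ is independent of $\mu$, is precisely this uniformity in $\mu$: since the gradient of $g_\theta^\mu$ \emph{in $x$} has Lipschitz constant of order $\mu^{-1}$ (Lemma~\ref{lemma_nesterov}), one might worry that the class becomes progressively harder to cover as $\mu_n\to0$. The resolution is that the entropy here is measured over the \emph{parameter}, not over $x$, and the Lipschitz-in-$\theta$ modulus $G(x)=\|x\|+1$ is entirely insensitive to $\mu$ because it arises from the linear (in $\theta$) structure of each affine piece \emph{before} the maximum and prox penalty are taken. Hence the blow-up of the $x$-smoothness is irrelevant to this calculation, and every constant entering the entropy bound stays controlled uniformly as $\mu_n\to0$, which is exactly the property needed by the rate (Theorem~\ref{tmr:ConvRate}) and Donsker (Theorem~\ref{tmr:limit}) arguments.
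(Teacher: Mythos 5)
Your proposal is correct, but it reaches the lemma by a genuinely different route than the paper. You work directly with the variational representation $g_\theta^\mu(x)=\max_{w\in Q_k}\{\langle A[x,1],w\rangle-\mu\rho(w)\}$ and observe that, for fixed $w$, the prox penalty $-\mu\rho(w)$ cancels in differences over $\theta$, so each candidate function is linear in $\theta$ with gradient blocks $(w_jx,w_j)$ of norm at most $\sqrt{\|x\|^2+1}\le G(x)=\|x\|+1$; taking the maximum preserves this modulus, giving $|g_{\theta_1}^\mu(x)-g_{\theta_2}^\mu(x)|\le G(x)\|\theta_1-\theta_2\|$ \emph{uniformly in} $\mu$, after which Theorem 2.7.11 of \cite{vdVaartWellner2000} reduces the entropy to the covering number of a Euclidean parameter ball and yields $C=2R\|G\|_{Q_n,2}$, manifestly free of $\mu$. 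The paper instead exploits the closed-form entropy-prox expression \eqref{eq:formular_entropy}: it writes $g_\theta^\mu=\phi(f_1,f_2)$ with $\phi(u,v)=\mu\log\left[\frac{1}{2}(e^u+e^v)\right]$ and $f_1,f_2$ ranging over classes $\G'$, $\G''$ of affine functions \emph{scaled by} $\mu^{-1}$, covers $\Theta$ at the $\mu$-dependent scale $\epsilon=\mu\delta/(M+1)$ to bound the entropy of each scaled class, and then invokes the Lipschitz-composition result (Lemma 9.13 of \cite{Kosorok2008}) so that the $\mu^{-1}$ blow-up of the pieces is compensated by the $\mu$-Lipschitz link $\phi$. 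Your route buys three things: it is shorter; the $\mu$-uniformity is visible at the level of a single Lipschitz estimate rather than emerging from a cancellation between a $\mu^{-1}$-inflated radius $R_2(\mu)$ and a $\mu$-Lipschitz transformation (indeed the paper's intermediate bound $(2d+2)\log\bigl((4R_2(\mu)(M+1)+\mu\delta)/(\mu\delta)\bigr)$ still contains $\mu$-dependent constants, and the independence claimed in the lemma only materializes after the composition step); and it applies verbatim to any prox function, in particular the squared-error prox used elsewhere in the paper, since you never use the entropy-prox closed form. What the paper's decomposition buys is compatibility with the BUEI/preservation framework of \cite{Kosorok2008}, which is convenient if one wants to compose the smoothed pieces with further transformations, but for the lemma as stated your argument is the cleaner and more general justification.
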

\begin{proof}
Define
$$\G' = \left\{ g_\theta(x) = \frac{\alpha^T x+\gamma}{\mu}: \R^d \rightarrow \R ~|~
(\alpha,\beta,\gamma,\phi) \in \Theta \right\}.$$

Let $\{ c_j \}_{j=1}^N$ be a covering of $\Theta$ with balls $B_j := B(c_j,\epsilon)$ centered at $c_j$
with radius $\epsilon$,
and for each $c_j=(\alpha_j,\beta_j,\gamma_j,\phi_j)$, define the corresponding function
in $\G'$ as $C_j(x):=g_{c_j}(x)$.

Let $\delta>0$ be given.
Then according to \cite[Lemma 2.5]{vdGeer2009},
$\Theta \subseteq \R^{2d+2}$ (since it is bounded by a ball of radius $\tilde{R}$) can be covered by
$$N \leq \left( \frac{4\tilde{R}+\epsilon}{\epsilon} \right)^{2d+2}$$
balls of radius $\epsilon$.

For $g_0 \in \G'$, let $g \in \G'_n(R) = \{ g \in \G': \Vert g-g_0 \Vert_{Q_n} \leq R \}$,
where $\Vert g_0 \Vert_{Q_n} \leq \mu^{-1}\tilde{R}(M+1)$
using the bounds on $\alpha,\beta,\gamma,\phi$ and $M$.
It follows that $\Vert g \Vert_{Q_n} \leq \Vert g-g_0 \Vert_{Q_n} + \Vert g_0 \Vert_{Q_n} \leq R+\mu^{-1}\tilde{R}(M+1) =: R_2$. 
As $g \in \G'$, $g(x) = \mu^{-1}(\alpha^T x+\gamma)$ for some $(\alpha,\beta,\gamma,\phi) \in \Theta$.

For the $(\alpha,\beta,\gamma,\phi) \in \Theta$ defining $g$, there exist $(\alpha_j,\beta_j,\gamma_j,\phi_j)$
such that
$\Vert (\alpha,\beta,\gamma,\phi) - (\alpha_j,\beta_j,\gamma_j,\phi_j) \Vert \leq \epsilon$,
which in turn implies that $\Vert \alpha-\alpha_j \Vert \leq \epsilon$,
$\Vert \beta-\beta_j \Vert \leq \epsilon$, $|\gamma-\gamma_j| \leq \epsilon$, $|\phi-\phi_j| \leq \epsilon$.

It follows immediately that
$|(g-C_j)(x)| \leq \frac{1}{\mu} \epsilon (M+1)$, implying
$$\Vert g-C_j \Vert_{Q_n}^2 = \frac{1}{n} \sum_{i} |(g-C_j)(x_i)|^2 \leq \frac{1}{\mu^2} \epsilon^2 (M+1)^2 = \delta^2$$
for the choice $\epsilon=\frac{\mu\delta}{M+1}$.
Hence, $\{ C_j \}_{j=1}^N$ form a $\delta$-cover of $\G'_n(R_2)$ and
$$H(\delta,\G'_n(R),Q_n) \leq (2d+2) \log \left( \frac{4 R_2(\mu) (M+1)+\mu\delta}{\mu\delta} \right),$$
which is of the form $\log \left( \frac{C + \delta}{\delta} \right)$
for a constant $C$.

Equally, the class
$$\G'' = \left\{ g(x) = \frac{\beta^T x+\phi}{\mu}: \R^d \rightarrow \R ~|~
(\alpha,\beta,\gamma,\phi) \in \Theta \right\}$$
has the same entropy.

Let $F_1=G'$, $F_2=G''$ and $\phi(x,y) = f(x,y) = \mu \log \left[ \frac{1}{2} ( e^x+e^y) \right]$.
Next, compute the entropy for the class $\phi(F_1,F_2)$ using \cite[Lemma 9.13]{Kosorok2008}.

Both classes $F_1$ and $F_2$ are BUEI,
and $\phi:\R^2 \rightarrow \R$ is locally Lipschitz on $\Theta$,
meaning that it satisfies the requirement
$|\phi \circ f(x) - \phi \circ g(x)|^2 \leq c^2 \sum_{j=1}^k (f_j(x)-g_j(x))^2 = c^2 \left| f(x)-g(x) \right|^2$
of the lemma.

The proof of \cite[Lemma 9.13]{Kosorok2008} then shows that the covering number of $\phi(F_1,F_2)$ can be bounded
by the product of the individual covering numbers, and hence that
the entropy of $\phi(F_1,F_2)$ satisfies
\begin{align*}
&\log N(\delta H,\phi(F_1,F_2),Q_n)\\
&\leq \log N(\delta \Vert F_1 \Vert_{Q,2},F_1,Q_n) + \log N(\delta \Vert F_2 \Vert_{Q,2},F_2,Q_n),
\end{align*}
for the combined envelope $H=|\phi(f_0)|+c(|f_{01}|+F_1+|f_{02}|+F_2)$.

Hence
$H(\delta,\G_n(R),Q_n) \sim \log \left( \frac{C + \delta}{\delta} \right) + \log \left( \frac{C + \delta}{\delta} \right)
\sim \log \left( \frac{C + \delta}{\delta} \right)$.
\end{proof}

\subsection{Computation of the covariance matrices for squared error prox}
Note that $\theta=(\alpha,\gamma,\beta,\phi)$ (in this order),
$m_\theta(x) = (y-g_\theta^\mu(x))^2$ and
$$\frac{\partial}{\partial \theta} m_\theta(x) =
(-2) (y-g_\theta^\mu(x)) \left( e^\frac{\alpha^T x+\gamma}{\mu} + e^\frac{\beta^T x+\phi}{\mu} \right)^{-1}
\begin{pmatrix} x e^\frac{\alpha^T x+\gamma}{\mu} \\ e^\frac{\alpha^T x+\gamma}{\mu}\\
x e^\frac{\beta^T x+\phi}{\mu} \\ e^\frac{\beta^T x+\phi}{\mu} \end{pmatrix}.$$
Let $A:=\left( e^\frac{\alpha^T x+\gamma}{\mu} + e^\frac{\beta^T x+\phi}{\mu} \right)$.
Hence, using $\E(y-g_\theta^\mu(x))=\epsilon$,
\begin{align*}
W &= P [\dot{m}_\theta \dot{m}_\theta^\top] = \int_{-1}^1 4 (y-g_\theta^\mu(x))^2 A^{-2}
\begin{pmatrix} x e^\frac{\alpha^T x+\gamma}{\mu} \\ e^\frac{\alpha^T x+\gamma}{\mu}\\
x e^\frac{\beta^T x+\phi}{\mu} \\ e^\frac{\beta^T x+\phi}{\mu} \end{pmatrix}
\begin{pmatrix} x e^\frac{\alpha^T x+\gamma}{\mu} \\ e^\frac{\alpha^T x+\gamma}{\mu}\\
x e^\frac{\beta^T x+\phi}{\mu} \\ e^\frac{\beta^T x+\phi}{\mu} \end{pmatrix}^\top dx\\
&= 4 \epsilon^2 \int_{-1}^1 A^{-2}
\begin{pmatrix} xx^\top e^{2\frac{\alpha^T x+\gamma}{\mu}} & x e^{2\frac{\alpha^T x+\gamma}{\mu}} &
xx^\top e^\frac{\alpha^T x+\gamma}{\mu} e^\frac{\beta^T x+\phi}{\mu} & x e^\frac{\alpha^T x+\gamma}{\mu} e^\frac{\beta^T x+\phi}{\mu}\\
&\ddots\\
&&\ddots\\
&&&\ddots
\end{pmatrix} dx
\end{align*}
assuming independence of $\epsilon$ from $X$.

Similarly, $V$ is the Jacobi matrix of $\E(y-g_\theta^\mu(x))^2$. Here,
$$\nabla P m_\theta = \E \left( (-2)(y-g_\theta^\mu(x)) A^{-1} \left[ x e^\frac{\alpha^T x+\gamma}{\mu}, e^\frac{\alpha^T x+\gamma}{\mu}, x e^\frac{\beta^T x+\phi}{\mu}, e^\frac{\beta^T x+\phi}{\mu} \right] \right).$$
By the product rule, the entry $(1,1)$ in $V = JP m_\theta$ (Jacobi matrix of $P m_\theta$) is
$$\int_{-1}^1 (+2) \left( A^{-1} x e^\frac{\alpha^T x+\gamma}{\mu} \right)^2 + (-2) \epsilon
\left[ \frac{x^2}{m} A^{-1} e^\frac{\alpha^T x+\gamma}{\mu} - \frac{x^2}{m} A^{-2} e^{2\frac{\alpha^T x+\gamma}{\mu}} \right] dx,$$
the $(1,2)$ entry is
$$\int_{-1}^1 (+2) x \left( A^{-1} e^\frac{\alpha^T x+\gamma}{\mu} \right)^2 + (-2) \epsilon
\left[ \frac{x}{m} A^{-1} e^\frac{\alpha^T x+\gamma}{\mu} - \frac{x}{m} A^{-2} e^{2\frac{\alpha^T x+\gamma}{\mu}} \right] dx,$$
the $(1,3)$ entry is
$$\int_{-1}^1 (+2) x A^{-1} e^\frac{\alpha^T x+\gamma}{\mu} x A^{-1} e^\frac{\beta^T x+\phi}{\mu} + (-2) \epsilon
\frac{-x^2}{m} A^{-2} e^\frac{\alpha^T x+\gamma}{\mu} e^\frac{\beta^T x+\phi}{\mu} dx,$$
and the $(1,4)$ entry is
$$\int_{-1}^1 (+2) x A^{-1} e^\frac{\alpha^T x+\gamma}{\mu} A^{-1} e^\frac{\beta^T x+\phi}{\mu} + (-2) \epsilon
\frac{-x}{m} A^{-2} e^\frac{\alpha^T x+\gamma}{\mu} e^\frac{\beta^T x+\phi}{\mu} dx.$$
Similarly for the other entries.

As $\mu \rightarrow 0$, in the region where $\max g_\theta(x) = \alpha^T x + \gamma$, that is where $\alpha^T x + \gamma >> \beta^T x + \phi$ (corresponding to the first two rows of the matrices $W$ and $V$), $A$ grows asymptotically like $e^\frac{\alpha^T x+\gamma}{\mu}$,
hence $A^{-1} e^\frac{\alpha^T x+\gamma}{\mu}$ goes to one and $A^{-2} e^\frac{\alpha^T x+\gamma}{\mu} e^\frac{\beta^T x+\phi}{\mu}$ goes to zero.

Comparing with $V$ and $W$ in Lemma~\ref{lemma_unsmoothed} shows that the matrices for the unsmoothed case are recovered as $\mu \rightarrow 0$.

\subsection{Computation of the covariance matrices for squared error prox}
Note that $\theta=(\alpha,\gamma,\beta,\phi)$ (in this order), $m_\theta(x) = (y-g_\theta^\mu(x))^2$ and
$$\frac{\partial}{\partial \theta} m_\theta(x) = (-2) (y-g_\theta^\mu(x)) \begin{pmatrix} w_1 x \\ w_1 \\ w_2 x \\ w_2 \end{pmatrix}.$$
Hence, using $\E(y-g_\theta^\mu(x))=\epsilon$,
\begin{align*}
W &= P[\dot{m}_\theta \dot{m}_\theta^\top] = \int_{-1}^1 4 (y-g_\theta^\mu(x))^2
\begin{pmatrix} w_1 x \\ w_1 \\ w_2 x \\ w_2 \end{pmatrix}
\begin{pmatrix} w_1 x \\ w_1 \\ w_2 x \\ w_2 \end{pmatrix}^\top dx\\
&= 4 \epsilon^2 \int_{-1}^1
\begin{pmatrix} w_1^2 xx^\top & w_1^2 x & w_1 w_2 xx^\top & w_1 w_2 x\\
&\ddots\\
&&\ddots\\
&&&\ddots
\end{pmatrix} dx
\end{align*}
assuming independence of $\epsilon$ from $X$.

Similarly, $V$ is the Jacobi matrix of $\E(y-g_\theta^\mu(x))^2$.
Here,
$$\nabla P m_\theta = \E \left( (-2)(y-g_\theta^\mu(x))
\left[ w_1 x, w_1, w_2 x, w_2 \right] \right),$$
and the Jacobi matrix is
\begin{align*}
V &= JP m_\theta = \int_{-1}^1 (+2)
\begin{pmatrix} w_1^2 xx^\top & w_1^2 x & w_1 w_2 xx^\top & w_1 w_2 x\\
w_1^2 x & w_1^2 & w_1 w_2 x & w_1 w_2\\
w_1 w_2 xx^\top & w_1 w_2 x & w_2^2 xx^\top & w_2^2 x\\
w_1 w_2 x & w_1 w_2 & w_2^2 x & w_2^2
\end{pmatrix}dx\\
&= 2 \begin{pmatrix} w_1^2 \int_{-1}^1 xx^\top dx & w_1^2 \int_{-1}^1 x dx & w_1 w_2 \int_{-1}^1 xx^\top dx & w_1 w_2 \int_{-1}^1 x dx\\
&\ddots\\
&&\ddots\\
&&&\ddots
\end{pmatrix}
\end{align*}

As $\mu \rightarrow 0$, in the region where $\max g_\theta(x) = \alpha^T x + \gamma$
(corresponding to the first two rows of the matrices $W$ and $V$)
$w_1 \rightarrow 1$ and $w_2 \rightarrow 0$,
and conversely for the case $\max g_\theta(x) = \beta^T x + \phi$.
Comparing with $V$ and $W$ in Lemma~\ref{lemma_unsmoothed}
shows that the matrices for the unsmoothed case are recovered as $\mu \rightarrow 0$.

\section{A generalization of the model in \cite{SiegmundZhang1994} for the intersection of two planes}
\label{section_siegmund_two}
\subsection{Two lines}
\label{subsection_siegmund_two_lines}
The first part derives the model of \cite{SiegmundZhang1994} from two line equations.
A priori, consider the model of two lines and a change point $\theta$,
$$y = (\alpha_1 x +\alpha_2) \I(x<\theta) + (\beta_1 x + \beta_2) \I(x \geq \theta).$$
Leaving out either indicator is valid since then, the remaining coefficients will be fit to compensate
for the first (fixed) line equation, ie.
$$y = (\alpha_1 x +\alpha_2) + (\beta_1 x + \beta_2) \I(x \geq \theta).$$
Since $\theta$ is the change point, both lines must intersect at $\theta$, hence
$$\alpha_1 \theta + \alpha_2 = (\alpha_1+\beta_1)\theta + (\alpha_2+\beta_2),$$
since the for $x \geq \theta$, the second indicator is one and the new terms are added to the first line.
Simplifying the above yields $\beta_2=-\beta_1 \theta$, and plugging this into the line equation
yields
\begin{align*}
y &= (\alpha_1 x +\alpha_2) + (\beta_1 x - \beta_1 \theta) \I(x \geq \theta)\\
&= (\alpha_1 x +\alpha_2) + \beta_1 (x - \theta) \I(x \geq \theta)\\
&= \alpha_1 x +\alpha_2 + \beta_1 (x - \theta)^+,
\end{align*}
where it was used that
\cite{SiegmundZhang1994} define $(x-\theta)^+ = \max(x-\theta,0)$, which is $(x-\theta)$ if $x \geq \theta$ and $0$ otherwise.
Thus it is the same as $(x - \theta) \I(x \geq \theta)$.

\subsection{Two planes}
\label{subsection_siegmund_two_planes}
For 3D, a similar expression for two intersecting planes can be obtained in a similar fashion.
First, a priori the model of two intersecting planes is given by
$$z = (\alpha_1 x +\alpha_2 y + \alpha_3) \I(P1) + (\beta_1 x + \beta_2 y + \beta_3) \I(P2),$$
where the indicators symbolically encode whether $(x,y)$ lie in plane $P1$ or $P2$.

Note first that similar to the 2D case, any two intersecting planes in 3D can be separated by a vertical
plane that is inserted at the right spot in-between them.
It is not necessary to tilt such a separator plane, similarly to the fact that a one-dimensional parameter $\theta$
was sufficient to separate two lines.
Precisely, the separator plane has to be inserted between the two intersecting planes vertically such that
it stands on the projection line of the intersection line of the two planes (projected down to the lower square side
of the cube in which the two planes are defined).

Let $C$ be the lower square side of the cube in which the two planes are defined.
Let $P$ be the rim of this square.
By considering any two points $p,q \in \partial P$,
all separator planes can be parameterized, and any point $(x,y)$ can be characterized to lie
on either the ``left'' or ``right'' side of the separator by considering the cross product.

Define for any point $(x,y)$ the side indicator
$$(x,y)^s = \text{sign} \left( \left| \begin{matrix} q_x-p_x & x-p_x\\ q_y-p_y & y-p_y \end{matrix} \right| \right).$$
This quantity is $1$ if $(x,y)$ is one the ``left'' side of the line through $p$ and $q$,
it is $-1$ if $(x,y)$ is on the ``right'' side and it is $0$ if $(x,y)$ lies on the line.
Using the side indicator,
$$z = (\alpha_1 x +\alpha_2 y + \alpha_3) \I((x,y)^s<0) + (\beta_1 x + \beta_2 y + \beta_3) \I((x,y)^s\geq 0).$$

Define analogously to \cite{SiegmundZhang1994},
$$\phi(x,y)^+ = \phi(x,y) \I((x,y)^s \geq 0) = \begin{cases} \phi(x,y) & (x,y)^s \geq 0,\\ 0 & \text{otherwise}, \end{cases}$$
i.e.\ the result of $\phi(x,y)^+$ is precisely $\phi(x,y)$ if its argument $(x,y)$ is on the $+$ (left) side.

Using the fact that the parameters of the second plane can compensate for the first ones, the first indicator
can again be omitted, leaving
$$z = (\alpha_1 x +\alpha_2 y + \alpha_3) + (\beta_1 x + \beta_2 y + \beta_3) \I((x,y)^s\geq 0).$$

Now, suppose both planes intersect above the projected line through $p,q$, that is all
points in the intersection of the planes additionally satisfy the line equation
$$(x-p_x)(q_y-p_y) = (y-p_y)(q_x-p_x) \Rightarrow y = \frac{(x-p_x)(q_y-p_y)}{q_x-p_x}+p_y =: f(x).$$
This assumes that $p_x \neq q_x$. Otherwise, the line through $p,q$ can be parameterized as
$$(x-p_x)(q_y-p_y) = (y-p_y)(q_x-p_x) \Rightarrow x = \frac{(y-p_y)(q_x-p_x)}{q_y-p_y}+p_x = p_x =: f(y).$$

As in the 2D case, use this to save one parameter of the plane equation: On the line $(x,y=f(x))$,
the two planes given above by $z=...$ with one omitted indicator coincide and hence
$$\alpha_1 x + \alpha_2 f(x) + \alpha_3 = (\alpha_1 + \beta_1)x + (\alpha_2+\beta_2)f(x) + (\alpha_3+\beta_3),$$
leading to $\beta_3=-\beta_1 x - \beta_2 f(x)$.
Simplifying the plane equation gives
\begin{align*}
z &= (\alpha_1 x +\alpha_2 y + \alpha_3) + (\beta_1 x + \beta_2 y -\beta_1 x - \beta_2 f(x)) \I((x,y)^s\geq 0)\\
&= \alpha_1 x +\alpha_2 y + \alpha_3 + \beta_2 (y - f(x)) \I((x,y)^s\geq 0)\\
&= \alpha_1 x +\alpha_2 y + \alpha_3 + \beta_2 (y - f(x))^+,
\end{align*}
for $p_x \neq q_x$, and using the definition of $f(x,y)^+$.

This is the analogue of the \cite{SiegmundZhang1994} model for intersecting planes. It makes sense to have $4$ parameters since
for a given projected intersection line, one needs three parameters to define the first plane,
the precise non-projected intersection follows from a vertical cut through the first plane at the given projected line,
and only one parameter is then needed to define the angle with which the second plane arises from the non-projected intersection line.

Do the two planes in the model coincide at $(x,f(x))$? Clearly, the first plane above the given projected intersection line
is given by $\alpha_1 x + \alpha_2 f(x) + \alpha_3$, whereas the other is given by
$\alpha_1 x + \alpha_2 f(x) + \alpha_3 + \beta_2 (f(x)-f(x))^+ = \alpha_1 x + \alpha_2 f(x) + \alpha_3$.
They hence intersect at $(x,f(x))$.

Reversely, if the two planes in the above model intersect then
$$\alpha_1 x + \alpha_2 y + \alpha_3 = \alpha_1 x + \alpha_2 y + \alpha_3 + \beta_2 (y-f(x))$$
and hence at the intersection $y=f(x)$, provided $\beta_2 \neq 0$, hence the two planes intersect
precisely at the pre-specified projected intersection line $(x,f(x)))$.

\bibliographystyle{apalike}
\bibliography{PhaseRegMultiDim}
\end{document}